\def\@seccntformat#1{%
	\protect\textup{\protect\@secnumfont
		\ifnum\pdfstrcmp{subsection}{#1}=0 \bfseries\fi
		\ifnum\pdfstrcmp{subsubsection}{#1}=0 \itshape\fi
		\csname the#1\endcsname
		\protect\@secnumpunct
	}%
}
\renewcommand{\@upn}{}
\DeclareRobustCommand{\crefnosort}[1]{%
	\begingroup\@cref@sortfalse\cref{#1}\endgroup
}
\numberwithin{equation}{section}
\newtheorem{thm}{Theorem}[section]
\newtheorem{lem}[thm]{Lemma}
\newtheorem{prop}[thm]{Proposition}
\newtheorem{cor}[thm]{Corollary}
\theoremstyle{definition}
\newtheorem{hyp}{Hypothesis}
\renewcommand*{\thehyp}{\Alph{hyp}}
\theoremstyle{remark}
\newtheorem{rem}[thm]{Remark}
\crefname{hyp}{Hypothesis}{Hypotheses}
\Crefname{hyp}{Hypothesis}{Hypotheses}
\crefname{lem}{Lemma}{Lemmas}
\Crefname{lem}{Lemma}{Lemmas}
\crefname{thm}{Theorem}{Theorems}
\Crefname{thm}{Theorem}{Theorems}
\crefname{prop}{Proposition}{Propositions}
\Crefname{prop}{Proposition}{Propositions}
\crefname{enumi}{}{}
\Crefname{enumi}{}{}
\crefname{equation}{}{}
\Crefname{equation}{}{}
\crefname{rem}{Remark}{Remarks}
\Crefname{rem}{Remark}{Remarks}
\renewcommand{\@upn}{} 
\newlist{enumthm}{enumerate}{1} 
\setlist[enumthm]{label=\upshape(\roman*),ref=\thethm\,(\roman*)}  
\newlist{enumcor}{enumerate}{1}
\setlist[enumcor]{label=\upshape(\roman*),ref=\thecor\,(\roman*)}
\newlist{enumlem}{enumerate}{1}
\setlist[enumlem]{label=\upshape(\roman*),ref=\thelem\,(\roman*)}
\newlist{enumprop}{enumerate}{1}
\setlist[enumprop]{label=\upshape(\roman*),ref=\theprop\,(\roman*)}
\newlist{enumhyp}{enumerate}{1}
\setlist[enumhyp]{label=\upshape(\roman*),ref=\thehyp\,(\roman*)}
\newlist{enumproof}{enumerate*}{1}
\setlist[enumproof]{label=\upshape(\roman*)}
\newlist{enumdef}{enumerate}{1}
\setlist[enumdef]{label=\upshape(\roman*),ref=\thedefn\,(\roman*)}
\newcounter{subcreftmpcnt} %
\newcommand\romansubformat[1]{(\roman{#1})} 
\def\subcref{\@ifstar\@@subcref\@subcref}
\newcommand\@subcref[2][\romansubformat]{%
	\ifcsname r@#2@cref\endcsname
	\cref@getcounter {#2}{\mylabel}%
	\setcounter{subcreftmpcnt}{\mylabel}%
	\hyperref[#2]{\romansubformat{subcreftmpcnt}}%
	\else ?? \fi}   
\newcommand\@@subcref[2][\romansubformat]{%
	\ifcsname r@#2@cref\endcsname
	\cref@getcounter {#2}{\mylabel}%
	\setcounter{subcreftmpcnt}{\mylabel}%
	\romansubformat{subcreftmpcnt}%
	\else ?? \fi}   
\DeclareRobustCommand{\crefnosort}[1]{%
	\begingroup\@cref@sortfalse\cref{#1}\endgroup
}
\def\endstepsymbol{$\lozenge$}
\def\endclaimsymbol{$\lozenge$}
\newcounter{proofstep}
\crefname{proofstep}{Step}{Steps}
\Crefname{proofstep}{Step}{Steps}
\newcounter{proofclaim}
\crefname{proofclaim}{Claim}{Claims}
\Crefname{proofclaim}{Claim}{Claims}
\newcommand{\cB}{{\mathcal B}}
\newcommand{\cF}{{\mathcal F}}
\newcommand{\cH}{{\mathcal H}}
\newcommand{\cK}{{\mathcal K}}
\newcommand{\cM}{{\mathcal M}}
\newcommand{\fS}{{\mathfrak S}}
\newcommand{\fb}{{\mathfrak b}}
\newcommand{\fh}{{\mathfrak h}}
\newcommand{\BC}{{\mathbb C}}
\newcommand{\BN}{{\mathbb N}}
\newcommand{\BR}{{\mathbb R}}
\newcommand{\dsone}{{\mathds 1}}
\newcommand{\sD}{{\mathscr D}}
\newcommand{\sfB}{{\mathsf B}}
\newcommand{\sfD}{{\mathsf D}}\newcommand{\sfE}{{\mathsf E}}
\newcommand{\sfN}{{\mathsf N}}
\newcommand{\sfS}{{\mathsf S}}
\newcommand{\sfd}{{\mathsf d}}\newcommand{\sfe}{{\mathsf e}}\newcommand{\sff}{{\mathsf f}}
\newcommand{\sfg}{{\mathsf g}}\newcommand{\sfi}{{\mathsf i}}
\newcommand{\sfn}{{\mathsf n}}
\newcommand{\sfr}{{\mathsf r}}
\newcommand{\sfs}{{\mathsf s}}
\newcommand{\rme}{{\mathrm e}}
\newcommand{\IN}{\BN}\newcommand{\IR}{\BR}\newcommand{\IC}{\BC}
\newcommand{\C}{\BC}
\newcommand{\hs}{\fh}\newcommand{\HS}{\cH}
\newcommand{\eps}{\varepsilon}\newcommand{\ph}{\varphi}
\newcommand{\e}{\rme}\newcommand{\ii}{\sfi}\newcommand{\Id}{\dsone} \renewcommand{\d}{\sfd}
\newcommand{\ran}{\operatorname{ran}}
\DeclareMathOperator*{\slim}{s-lim}
\DeclareFontFamily{U}{mathx}{\hyphenchar\font45}
\DeclareFontShape{U}{mathx}{m}{n}{
	<5> <6> <7> <8> <9> <10>
	<10.95> <12> <14.4> <17.28> <20.74> <24.88>
	mathx10
}{}
\DeclareSymbolFont{mathx}{U}{mathx}{m}{n}
\DeclareMathAccent{\widecheck}{0}{mathx}{"71}
\DeclareMathAccent{\wideparen}{0}{mathx}{"75}
\DeclareFontFamily{OMX}{MnSymbolE}{}
\DeclareFontShape{OMX}{MnSymbolE}{m}{n}{
	<-6>  MnSymbolE5
	<6-7>  MnSymbolE6
	<7-8>  MnSymbolE7
	<8-9>  MnSymbolE8
	<9-10> MnSymbolE9
	<10-12> MnSymbolE10
	<12->   MnSymbolE12}{}
\DeclareSymbolFont{mnlargesymbols}{OMX}{MnSymbolE}{m}{n}
\DeclareMathDelimiter{\llangle}{\mathopen}{mnlargesymbols}{'164}{mnlargesymbols}{'164}
\DeclareMathDelimiter{\rrangle}{\mathclose}{mnlargesymbols}{'171}{mnlargesymbols}{'171}
\DeclareMathDelimiter{\lsem}{\mathopen}{mnlargesymbols}{'102}{mnlargesymbols}{'102}
\DeclareMathDelimiter{\rsem}{\mathclose}{mnlargesymbols}{'107}{mnlargesymbols}{'107}
\DeclareMathDelimiter{\langlebar}{\mathopen}{mnlargesymbols}{'152}{mnlargesymbols}{'152}
\DeclareMathDelimiter{\ranglebar}{\mathclose}{mnlargesymbols}{'157}{mnlargesymbols}{'157}
\DeclareMathDelimiter{\lWavy}{\mathopen}{mnlargesymbols}{'137}{mnlargesymbols}{'137}
\DeclareMathDelimiter{\rWavy}{\mathopen}{mnlargesymbols}{'137}{mnlargesymbols}{'137}
\newcommand{\chr}{\mathbf 1}
\newcommand{\abs}[1]{\lvert#1\lvert}
\newcommand{\norm}[1]{\lVert#1\lVert}\newcommand{\Norm}[1]{\left\lVert#1\right\lVert}
\newcommand{\FGamma}{\Gamma}
\newcommand{\FS}{\cF}\newcommand{\dG}{\sfd\FGamma}\newcommand{\ad}{a^\dagger}
\title[Ultraviolet Renormalization of Spin Boson Models I]{Ultraviolet Renormalization of Spin Boson Models I.\\ Normal and 2-Nilpotent Interactions}
\author{Benjamin Hinrichs}
\address{B. Hinrichs and J. Valent\'in Mart\'in, Universit\"at Paderborn, Institut f\"ur Mathematik, Institut f\"ur Photonische Quantensysteme, Warburger Str. 100, 33098 Paderborn, Germany}
\email{benjamin.hinrichs@math.upb.de}
\email{javiervm@math.uni-paderborn.de}
\author{Jonas Lampart}
\address{J. Lampart, CNRS \& Laboratoire Interdisciplinaire Carnot de Bourgogne (UMR 6303), Université de Bourgogne, 9 Av. A. Savary, 21078 Dijon Cedex, France.}
\email{jonas.lampart@u-bourgogne.fr}
\author{Javier Valent\'in Mart\'in}
\newcommand{\HSB}{H_{\sfS\sfB}}
\newcommand{\HSBL}{H_{\sfS\sfB,\Lambda}}
\newcommand{\Hreg}{H_{\sfr\sfe\sfg}}
\newcommand{\HIBCl}{\HIBC{\lambda}}
\renewcommand{\ad}{a^*}
\begin{document}

\begin{abstract} 
	\noindent
	We study the ultraviolet problem for models of a finite-dimensional quantum mechanical system linearly coupled to a bosonic quantum field, such as the (many-)spin boson model or its rotating-wave approximation.
	If the state change of the system upon emission or absorption of a boson is either given by a normal matrix or by a 2-nilpotent one, which is the case for the previously named examples, we prove an optimal renormalization result.
	We complement it, by proving the norm resolvent convergence of appropriately regularized models to the renormalized one.
	Our method consists of a dressing transformation argument in the normal case and an appropriate interior boundary condition for the 2-nilpotent case.
\end{abstract}

\maketitle

\section{Introduction}

The spin boson model describes a two-state quantum mechanical system, called spin, linearly coupled to a bosonic quantum field.
It is the most simple non-trivial interacting model involving a quantum field. Despite this simplicity, it has found various applications in quantum physics, ranging from open quantum systems and quantum optics to quantum information and solid state physics, whence it is an active field of research up to date. Since the number of references is too numerous, let us merely point to the reviews \cite{Leggettetal.1987,HuebnerSpohn.1995b} as well as the reference sections of the articles cited in the following.

As far as the mathematical literature on the spin boson model is concerned, significant attention has been devoted to studying its infrared properties. This investigation was started in \cite{Spohn.1989,Amann.1991,AraiHirokawa.1995}, amongst others continued in \cite{HirokawaHiroshimaLorinczi.2014,BallesterosDeckertHaenle.2019b} and has especially produced the remarkable observation that infrared divergences cancel at small coupling \cite{HaslerHerbst.2010,BachBallesterosKoenenbergMenrath.2017,HaslerHinrichsSiebert.2021a} but not at large coupling \cite{BetzHinrichsKraftPolzer.2025}.
Again, we emphasize that the selected literature referenced here is far from a complete list.

On the other side, the ultraviolet problem for the spin boson model has only recently come into focus \cite{DamMoller.2018b,Lonigro.2022,Lonigro.2023,LillLonigro.2024},
motivated by the many other applications than the quantum field theoretic setting.
The attention is hereby not restricted to the two-state spin, but to the wider setting of generalized spin boson models, first treated in \cite{AraiHirokawa.1997}.
In the present article, we provide an optimal renormalization result for a wide class of these generalized spin boson models and beyond.

In the context of non- and semi-relativistic quantum field theory, ultraviolet renormalization has been addressed using a variety of methods.
Nelson \cite{Nelson.1964} applied a dressing transformation to study the model of a non-relativistic particle linearly coupled to a bosonic field.
His method received quite a few refinements in later years \cite{Cannon.1971,Ammari.2000,GriesemerWuensch.2018,DamHinrichs.2021}, was also applied to the Fr\"ohlich polaron model \cite{GriesemerWuensch.2016}, and is similar to one part of our renormalization technique.
In fact, since the spin in our model does not move, the dressing transformation applied here is also close to older results on the van Hove model \cite{vanHove.1952}, see for example \cite{Derezinski.2003} for an extensive discussion.
A different approach applying interior boundary conditions (IBC), inspired by the treatment of point interactions \cite{Yafaev.1992a,LampartSchmidtTeufelTumulka.2018}, was also recently applied to Nelson's model \cite{LampartSchmidt.2019,Schmidt.2021} as well as a two-dimensional semi-relativistic version of Nelson's model \cite{Schmidt.2019}, originally treated in \cite{Sloan.1974}, the Bose polaron \cite{Lampart.2019} and more general polaron models \cite{Lampart.2023}.
A similar approach is used in the abstract results of \cite{Posilicano.2020,Posilicano.2024, BinzLampart.2021}.
We also employ an IBC method as part of our technique.
Renormalization of the mentioned models has also been carried out using functional integration methods, see \cite{GubinelliHiroshimaLorinczi.2014,MatteMoller.2018} for the Nelson model and \cite{HinrichsMatte.2022,HinrichsMatte.2023} for the 2d semi-relativistic Nelson model and \cite{HinrichsMatte.2024} for the Fr\"ohlich polaron.
Also, let us note that a toy model for interacting quantum fields was recently treated in \cite{AlvarezMoller.2022,AlvarezMoller.2023} using a resummation technique of Eckmann \cite{Eckmann.1970}.

In all of these renormalization results it has also been investigated how spectral properties, especially the infrared behavior, persist under the removal of the ultraviolet cutoff.
For example, the renormalized Nelson model was studied in \cite{BachmannDeckertPizzo.2012,HiroshimaMatte.2019,DamHinrichs.2021}, the semi-relativistic two-dimensional Nelson model in \cite{HinrichsMatte.2022} and the Bose polaron in \cite{HinrichsLampart.2023}, also see references therein.
Further, the dependence of the mass shell of the ultraviolet cutoff or models which do not allow for self-energy renormalization has been studied, e.g., for the three-dimensional semi-relativistic Nelson model in \cite{DeckertPizzo.2014} and the Pauli--Fierz model in \cite{LiebLoss.2005,BachHach.2022}.
Our result thus yields the interesting question, whether the characteristic infrared behavior of the spin boson model decribed above can be recovered for the Hamiltonians constructed in this article.
We leave this problem for future research, especially because the results from \cite{HaslerHinrichsSiebert.2021c,BetzHinrichsKraftPolzer.2025} indicate that a treatment of this question will require a functional integration representation of the renormalized spin boson models, at least in the infrared-divergent case.

\medskip
For the remainder of this introduction, let us discuss generalized spin boson models given on the Hilbert space $\IC^N\otimes \FS$ of the form
\begin{align}\label{def:SB}
	\HSBL \coloneqq A\otimes \Id + \Id\otimes\dG(\omega) + B^*\otimes a(v_\Lambda) + B\otimes\ad(v_\Lambda).
\end{align}
Here, $N\in\IN$ is the number of degrees of freedom of the quantum mechanical system, $A\in\IC^{N\times N}$ is a selfadjoint matrix and $B\in\IC^{N\times N}$ is, for now, an arbitrary matrix.
Further, $\FS$ is the bosonic Fock space over $L^2(\cM,\d\mu)$, where $(\cM,\Sigma,\mu)$ is some $\sigma$-finite measure space, $\dG(\omega)$ is the second quantization of the boson dispersion $\omega:\cM\to[0,\infty)$ with $\omega>0$ almost everywhere and $v:\cM\to\IC$ is the form factor or coupling function such that $v_\Lambda=\chr_{\{\omega<\Lambda\}}v\in L^2(\cM,\d\mu)$ and $\omega^{-1/2}v_\Lambda\in L^2(\cM,\d\mu)$ for arbitrary ultraviolet cutoff $\Lambda>0$.
The precise notation is rigorously introduced in \cref{subsec:Fock}
and we remark that the models treated in our main result are in fact more general than \cref{def:SB}.
Nevertheless, we here stick to the above model for notational simplicity.

 The most important examples are
\begin{enumerate}
	\item[\hypertarget{ex1}{[Ex.\,1]}] the (standard) spin boson model with $N=2$, $A=\sigma_z = \begin{pmatrix}1 & 0\\0 & -1\end{pmatrix}$ and $B = \sigma_x = \begin{pmatrix} 0 & 1 \\ 1 & 0\end{pmatrix}$;
	\item[\hypertarget{ex1}{[Ex.\,2]}] the rotating wave approximation with $N=2$, $A=\sigma_z$, $B=\sigma_- = \begin{pmatrix} 0 & 0 \\ 1 & 0\end{pmatrix}$;
	\item[ {[Ex.\,3]}] their many-spin counterparts, i.e., with $N=2^k$, $B\in\{\sigma_x^{\otimes k},\sigma_-^{\otimes k}\}$ (and $A$ incorporating arbitrary spin-spin interactions).
\end{enumerate}
We are now interested in studying the limit $\Lambda\to\infty$.
Note that the expression \cref{def:SB} with $\Lambda=\infty$ only yields a well-defined selfadjoint operator $H_{\sfS\sfB,\infty}$ if $v\in L^2(\cM,\d\mu)$  and is only bounded from below if $\omega^{-1/2}v\in L^2(\cM,\d\mu)$.
Otherwise, we require an ultraviolet renormalization, in which we subtract the divergent contribution to the ground state energy. 

Our main result \cref{mainthm} concerns the renormalizability in the case $(1+\omega)^{-1}v\in L^2(\cM,\d\mu)$.
Under the assumption that $B$ is either normal or 2-nilpotent, which is the case in all above examples, we prove that $\HSBL+B^*BE_\Lambda$ with $E_\Lambda=\|\omega^{-1/2}v_\Lambda\|^2$ converges to a selfadjoint lower-semibounded operator $\HSB$ in the norm resolvent sense as $\Lambda\to\infty$, i.e.,
\begin{align}\label{eq:HSBconv}
	\big(\HSBL + B^*BE_\Lambda + \ii\big)^{-1} \xrightarrow[\cB(\IC^N\otimes\FS)]{\Lambda\to\infty} \big(\HSB +\ii\big)^{-1}.
\end{align}
In fact, our result also allows $B$ to be a sum of two commuting matrices, one being normal and the other being 2-nilpotent, thus allowing for a large class of interactions.

We also explicitly construct the limiting Hamiltonian $\HSB$ as follows:

In the case that $B$ is normal, let $U$ be the unitary that diagonalizes $B$ and $\lambda_1,\ldots,\lambda_N$ be the corresponding eigenvalues. Then
\begin{align}
	U\HSBL U^* = UAU^* + \operatorname{diag}(\dG(\omega) + \overline{\lambda_i}a(v_\Lambda) + \lambda_i \ad(v_\Lambda))_{i=1,\ldots,N}.
\end{align}
The renormalization energy for this operator is just the diagonal matrix with entries $\abs{\lambda_i}^2\|\omega^{-1/2}v_\Lambda\|$ and the existence of the limit $\Lambda\to\infty$ for each entry  can be deduced using a Weyl operator as dressing transformation.

In the case that $B$ is 2-nilpotent, one can instead rewrite
\begin{align}\label{eq:IBCsimple}
	 \HSBL =
	   A - 1 + (1+ G_\Lambda)\big(1+\dG(\omega) - T_\Lambda\big)(1+G_\Lambda^*) - B^*B\|\omega^{-1/2}v_\Lambda\|_2^2,
\end{align}
where $G_\Lambda = B^*a(v_\Lambda)\big(\dG(\omega)+1)^{-1}$ and
\begin{align}\label{eq:Tsimple}
	T_\Lambda = B^*a(v_\Lambda)(\dG(\omega)+1)^{-1}\ad(v_\Lambda)B - B^*B\|\omega^{-1/2}v_\Lambda\|_2^2.
\end{align}
Then, observing that $G_\infty$ and $T_\infty$ (after appropriate normal ordering) are also well-defined, bounded, respectively $\dG(\omega)$-bounded operators, is the key observation for the renormalization in this case.
We remark that the renormalization term $B^*B\|\omega^{-1/2}v_\Lambda\|_2^2$ in the nilpotent case only adds the renormalization energy $\|\omega^{-1/2}v_\Lambda\|_2^2$ to the excited sectors of the model, e.g., in the case $B=\sigma_-$ where $B^*B = \begin{psmallmatrix}1 & 0 \\ 0 & 0\end{psmallmatrix}$.

Finally, at least in the diagonalizable case, we prove optimality of our renormalization result, by showing that $(1+\omega)^{-1}v\notin L^2(\cM,\d\mu)$ implies that $\HSBL+E_\Lambda$ can not converge in the strong resolvent sense for a large class of reasonable renormalization operators $E_\Lambda$.
The precise result can be found in \cref{thm:nonren}.

\medskip
Let us compare our main results to the literature.

It was already shown in \cite[Corollary 4.4]{DamMoller.2018b} for the standard spin boson model \hyperlink{ex1}{[Ex.\,1]}
that convergence of $\HSBL + \inf\sigma(\HSBL)$ in the strong resolvent sense holds if and only if $(1+\omega)^{-1}v\in L^2(\cM,\mu)$.
In this case the limiting operator was explicitly constructed as well, employing a different diagonalization of the model, due to the spin parity symmetry of the model.
Our result for the standard spin boson model thus extends theirs to norm resolvent convergence and our simpler construction of the renormalized Hamiltonian does not rely on symmetry assumptions.
Furthermore, we extend the proof that $(1+\omega)^{-1}v\in L^2(\cM,\d\mu)$ is a necessary assumption for self-energy renormalizability to arbitrary normal matrices $B$.

Lonigro \cite{Lonigro.2022,Lonigro.2023} treated general massive, i.e., $\omega\ge 1$, spin boson models with mild ultraviolet divergences, i.e., $(1+\omega)^{-1/2}v\in L^2(\cM,\d\mu)$, in which no self-energy renormalization is necessary. He also proved renormalizability of the rotating wave spin boson model \hyperlink{ex2}{[Ex.\,2]} in the cases $(1+\omega)^{-s}v\in L^2(\cM,\d\mu)$ for some $s<1$ and $(1+\omega)^{-1}v\in L^2(\cM,\d\mu)$ for small coupling, i.e., small $\|(1+\omega)^{-1}v\|_2$. 
The method uses a Hilbert scales technique, is also constructive and yields convergence of the regularized models in strong resolvent sense.
We extend the results of Lonigro to arbitrary large coupling, the massless case, and provide norm resolvent convergence.
Recently, Lill and Lonigro \cite{LillLonigro.2024} also applied a result of Posilicano \cite{Posilicano.2020} to spin boson models, which is closely related to the IBC method we use for 2-nilpotent interactions, also treating the mild ultraviolet divergences discussed above (but norm resolvent convergence).

\subsection*{Acknowledgments}
BH and JVM thank Sascha Lill and Davide Lonigro for valuable discussion on the ultraviolet problem for the spin boson model. 
BH and JVM acknowledge support by the Ministry of Culture and Science of the State of North Rhine-Westphalia within the project `PhoQC' (Grant Nr. PROFILNRW-2020-067).
JL was supported by the Agence National de la Recherche (ANR, project MaBoP ANR-23-CE40-0025) and the EIPHI Graduate School (ANR-17-EURE-0002) together with the Bourgogne-Franche-Comté Region through the project SQC.
\subsection*{Structure of the Article}
The remainder of this article is structured as follows. In \cref{sec:modelandresults}, we introduce the precise class of models considered, which is a generalization of \cref{def:SB}, and state our main renormalization result \cref{mainthm} as well as the non-renormalizability result \cref{thm:nonren}.
In \cref{sec:IBC}, we then set up the IBC approach for 2-nilpotent interactions.
Normal interactions, the corresponding dressing transformations and the proofs of our main results are then given in \cref{sec:normal}.

\section{Model and Results}\label{sec:modelandresults}
Throughout this article, we will fix two central objects, the first being the boson momentum space $\cM$.
\begin{hyp}Throughout this article, we will work under the standing assumptions that
	\begin{enumhyp}
		\item $(\cM,\mu)$ is a $\sigma$-finite measure space,
		\item $\omega:\cM\to[0,\infty)$ is measurable, $\omega>0$ almost everywhere.
	\end{enumhyp}
\end{hyp}
To distinguish between infrared and ultraviolet sectors, we introduce a fixed infrared cutoff $\kappa>0$ as well as the notation
\begin{align*}
	f_{\le} \coloneqq \chr_{\{\omega\le \kappa\}}f,
	\qquad
	f_> \coloneqq \chr_{\{\omega> \kappa\}} f
\end{align*}
for arbitrary vector space valued functions $f$ defined on $\cM$.
See \cref{rem:IRcutoff} for a discussion of the several roles of $\kappa$ in our main result.

Let us also fix the spin space $\HS_\sfs$, which we (different from the introduction) do not require to be finite-dimensional.
\begin{hyp}
	Throughout this article, we will work under the standing assumptions that $\HS_\sfs$ is an arbitrary fixed Hilbert space.
\end{hyp}

\subsection{Fock Space Calculus}
\label{subsec:Fock}

Let us briefly introduce the relevant notation for bosonic Fock spaces needed here, for more details see for example \cite{Parthasarathy.1992,Arai.2018}.

Setting $\fh\coloneqq L^2(\cM,\mu)$, we work on the Hilbert space
\begin{align}\label{def:Fock}
	\HS \coloneqq \HS_\sfs \otimes \FS(\fh) \cong \HS_\sfs \oplus \bigoplus_{n\in\IN} L^2_\sfs(\cM^n,\d \mu^{\otimes n};\HS_\sfs),
\end{align}
where $\mu^{\otimes n}$ denotes the $n$-fold product measure of $\mu$ with itself (recall that we assumed $(\cM,\mu)$ to be $\sigma$-finite) and we denote by $L^2_\sfs$ the Bochner--Lebesgue space of symmetrized squareintegrable $\HS_\sfs$-valued functions, where we use the symmetrization
\begin{align*}
	&L^2_\sfs(\cM^n,\mu^{\otimes n};\HS_\sfs) \\&\qquad \coloneqq \big\{ f\in L^2(\cM^n,\d \mu^{\otimes n};\HS_\sfs) \big| f(k_{\pi(1)},\ldots,k_{\pi(n)}) = f(k_1,\ldots,k_n),\ \text{$\mu$-a.e. $k\in\cM^n$},\ \pi\in\fS_n \big\}.
\end{align*}
In \cref{def:Fock}, $\FS(\hs)\coloneqq\bigoplus_{n\in\IN}L^2_\sfs(\cM^n,\d \mu^{\otimes n})$ is the usual bosonic Fock space, which we will not further utilize in this article.
We will also utilize the finite boson subspace
\begin{align*}
	\HS_{\sff\sfi\sfn} \coloneqq \HS_\sfs \oplus \widehat\bigoplus_{n\in\IN} L^2_\sfs(\cM^n,\d \mu^{\otimes n};\HS_\sfs),
\end{align*}
 where $\widehat\oplus$ denotes the algebraic direct sum, i.e., the finite sequences.

The free energy of the bosonic field will now be given by the second quantization operator $\dG(\omega)$ given as the direct sum of multiplication operators
\begin{align}
	\dG(\omega) \coloneqq \bigoplus_{n\in\IN_0} \dG^{(n)}(\omega),
	\quad
	\dG^{(n)}(\omega)(k_1,\ldots,k_n) \coloneqq \sum_{i=1}^n\omega(k_i)\psi^{(n)}(k),
	\quad
	\dG^{(0)}(\omega) = 0.
\end{align}
Interactions between spin and field will be given by square-integrable functions taking values in the bounded operators on the spin Hilbert space, whence we abbreviate $\fb_s \coloneqq L^2(\cM,\omega^{-s}\d\mu;\cB(\HS_\sfs))$, again in the sense of a Bochner--Lebesgue space.

	\begin{rem}
		\label{rem:bsinclusion}
		It is worth noting some inclusion relations between the $\fb_s$ spaces. More precisely, if $F=F_\le$ and $F\in\fb_s$, then $F\in \fb_{s'}$ for all $s'\leq s$. Conversely, if $F=F_>$, then $F\in\fb_s$ implies $F\in\fb_{s'}$ for every $s'\geq s$. This is, if we denote
		\begin{align*}
		\fb_{s}^{\#}:=\fb_s\cap\{F:\mathcal{M}\longrightarrow \mathcal{B}(\mathcal{H}_s)\,\,\text{with}\,\, F=F_{\#}\}\,,\quad \#\in\{\leq,>\}
		\end{align*}
		then
		\begin{align*}
			\cdots\subset \fb^\leq_{1}\subset \cdots \subset \fb^\leq_{0}\subset\cdots\subset \fb^\leq_{-1}\subset\cdots\qquad\text{and}\qquad \cdots\subset \fb_{-1}^>\subset \cdots \subset \fb_{0}^>\subset\cdots\subset \fb_{1}^>\subset\cdots.
		\end{align*}
	\end{rem}
We will further write
\begin{align}
	\label{eq:bs}
	\braket{F,G}_{\fb_s} \coloneqq \int_{\cM} F(k)^*G(k)\omega^{-s}\d\mu(k) \in \cB(\HS_\sfs), \quad F,G\in\fb_s,
\end{align}
and for arbitrary (strongly) measurable $F,G:\cM\to\cB(\HS_\sfs)$ define the pointwise commutator $[F,G]:\cM\times\cM\to\cB(\HS_\sfs)$ by
\begin{align*}
	[F,G](k,p) \coloneqq [F(k),G(p)] \coloneqq F(k)G(p)-G(p)F(k),\qquad k,p\in\cM.
\end{align*}
Then,
given $F\in \fb_0$ and $\psi=(\psi^{(n)})_{n\in\IN_0}\in\HS_{\sff\sfi\sfn}$, we define the annihilation operator
\begin{align*}
		&(a(F)\psi)^{(n)}(k_1,\ldots,k_n) \coloneqq \sqrt{n+1}\int_\cM F(k)^*\psi^{(n+1)}(k,k_1,\ldots,k_n)\d\mu(k),
\end{align*}
which is clearly well-defined as a Bochner--Lebesgue integral.
Its adjoint $\ad(F)\coloneqq a(F)^*$, which exists since $\HS_{\sff\sfi\sfn}$ is dense in $\HS$, acts as
\begin{align*}
	(\ad(F)\psi)^{(n)}(k_1,\ldots,k_n) = \frac{1}{\sqrt{n}}\sum_{i=1}^{n}F(k_i)\psi^{(n-1)}(k_1,\ldots,\cancel{k_i},\ldots,k_n).
\end{align*}
As usual, we will also formally set $F=\delta(k-\cdot)$ and obtain the pointwise annihilation and creation operators $a_k$ and $\ad_k$, respectively, in the sense of operator valued distributions.

Furthermore, if $F\in\fb_0$, the operator $\ph(F) \coloneqq  \overline{a(F)+\ad(F)}$ (where $\overline{\cdot}$ denotes the operator closure) is selfadjoint,
as one can readily check using Nelson's analytic vector theorem and the estimate \cref{lem:simple.relbound} below with $\omega=1$, cf. \cite[Theorem 5.22]{Arai.2018} for details.

Let us collect some well-known properties of the operators introduced up to now, for which proofs can be found in any of the referenced textbooks (usually for the case $\HS_\sfs=\IC$, but all proofs can be carried out word-by-word in our setting).
\begin{lem}\label{lem:simple}
	Assume $F,G\in\fb_0$.
	\begin{enumlem}
		\item\label{lem:simple.relbound} If $F\in\fb_1$, then $\sD(\ph(F))\subset \sD(\dG(\omega)^{1/2})$ and
		\[ \|\ph(F)\psi\| \le 2\|(1+\omega^{-1/2})F\|_{\fb_0}\|(1+\dG(\omega))^{1/2}\psi\|.\]
		\item\label{lem:simple.CCRph} If $[F,G]=[F,G^*]=0$, then $[\ph(F),\ph(G)] = \braket{F,G}_{\fb_0} - \braket{G,F}_{\fb_0}$ holds on $\HS_{\sff\sfi\sfn}$.
		\item\label{lem:simple.CCRdG} If $F\in \fb_{-2}$, then $[\dG(\omega),\ph(F)] = \ii\ph(\ii\omega F)$ holds on $\HS_{\sff\sfi\sfn}$. 
	\end{enumlem}
\end{lem}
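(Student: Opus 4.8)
The plan is to reduce all three parts to explicit manipulations on finite-particle vectors, where $a(F)$, $\ad(F)$ and $\dG(\omega)$ act by the formulas recorded above and $\ph(F)$ coincides with $a(F)+\ad(F)$; only part (i) requires leaving this level, via a closure argument, while parts (ii) and (iii) are purely algebraic. Beyond the scalar case treated in the cited textbooks, the one thing to watch is the bookkeeping with operator-valued form factors and the symmetrization of the Fock components.

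For part (i) I would first bound the annihilation part. Fix $\psi\in\HS_{\sff\sfi\sfn}$. Estimating the Bochner integral defining $(a(F)\psi)^{(n)}$ by the triangle inequality and applying Cauchy--Schwarz in the integration variable with the splitting $\|F(k)\|_{\cB(\HS_\sfs)} = \big(\omega(k)^{-1/2}\|F(k)\|\big)\cdot\omega(k)^{1/2}$, then squaring, integrating over $(k_1,\ldots,k_n)\in\cM^n$, and using the permutation symmetry of $\psi^{(n+1)}$ to rewrite the resulting $\int_{\cM^{n+1}}\omega(k)\|\psi^{(n+1)}(k,k_1,\ldots,k_n)\|^2\,\d\mu^{\otimes(n+1)}$ as $\tfrac1{n+1}\|\dG(\omega)^{1/2}\psi^{(n+1)}\|^2$, yields on summing over $n$ the bound $\|a(F)\psi\| \le \|\omega^{-1/2}F\|_{\fb_0}\,\|\dG(\omega)^{1/2}\psi\|$. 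For the creation part I would invoke the canonical commutation relation $[a(F),\ad(F)] = \braket{F,F}_{\fb_0}$ (which is proved along the way to part (ii)), whose right-hand side is the positive operator $\int_\cM F(k)^*F(k)\,\d\mu(k)\in\cB(\HS_\sfs)$ of norm at most $\|F\|_{\fb_0}^2$, to get $\|\ad(F)\psi\|^2 = \|a(F)\psi\|^2 + \langle\psi,\braket{F,F}_{\fb_0}\psi\rangle \le \|\omega^{-1/2}F\|_{\fb_0}^2\,\|\dG(\omega)^{1/2}\psi\|^2 + \|F\|_{\fb_0}^2\,\|\psi\|^2$. As $\|\omega^{-1/2}F\|_{\fb_0} = \|F\|_{\fb_1}$ and $\|F\|_{\fb_0}$ are both dominated by $\|(1+\omega^{-1/2})F\|_{\fb_0}$ (finite because $F\in\fb_0\cap\fb_1$, using $\|F\|_{\fb_{1/2}}^2 \le \|F\|_{\fb_0}\|F\|_{\fb_1}$), adding the two bounds, taking square roots, and using $\|\ph(F)\psi\| \le \|a(F)\psi\| + \|\ad(F)\psi\|$ gives the asserted estimate on $\HS_{\sff\sfi\sfn}$. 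To finish, I would extend it by continuity: the finite-particle vectors of bounded field energy form a core for $\dG(\omega)^{1/2}$ and sit inside $\HS_{\sff\sfi\sfn}$, so a Cauchy-sequence argument based on the estimate shows $\sD(\dG(\omega)^{1/2})\subseteq\sD(\ph(F))$, with the bound surviving the closure defining $\ph(F)$.

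For part (ii), working on $\HS_{\sff\sfi\sfn}$, I would expand $[\ph(F),\ph(G)]$ into the four commutators $[a(F),a(G)]$, $[a(F),\ad(G)]$, $[\ad(F),a(G)]$ and $[\ad(F),\ad(G)]$ and evaluate each from the definitions. The hypotheses $[F,G]=0$ and $[F,G^*]=0$ also give $[F(k)^*,G(p)] = [F(k)^*,G(p)^*] = 0$ for all $k,p\in\cM$, so every product of a value of $F$ or $F^*$ with a value of $G$ or $G^*$ can be reordered; together with the symmetry of $\psi^{(n\pm 2)}$ this makes $[a(F),a(G)]$ and $[\ad(F),\ad(G)]$ vanish, while the only contribution to $a(F)\ad(G)$ surviving the reordering is multiplication by $\int_\cM F(k)^*G(k)\,\d\mu(k) = \braket{F,G}_{\fb_0}$, so that $[a(F),\ad(G)] = \braket{F,G}_{\fb_0}$ and, symmetrically, $[\ad(F),a(G)] = -\braket{G,F}_{\fb_0}$; summing the four contributions gives the claim. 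For part (iii) I would likewise compute the two pieces separately: applying $\dG(\omega)a(F)$ and $a(F)\dG(\omega)$ to a finite-particle vector $\psi\in\sD(\dG(\omega))$, the free-energy sums over the spectator variables cancel and only the contribution $\omega(k)$ of the integration variable survives, giving $[\dG(\omega),a(F)] = -a(\omega F)$ and, by the analogous computation (or by taking the adjoint on the core), $[\dG(\omega),\ad(F)] = \ad(\omega F)$ — both meaningful since $F\in\fb_{-2}$ means $\omega F\in\fb_0$. Assembling the two identities and using that $\ad(\cdot)$ is linear while $a(\cdot)$ is conjugate-linear in its argument rewrites the result in the form $\ii\,\ph(\ii\omega F)$.

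The only step that is not a direct computation is the closure argument in part (i), which rests on the standard fact that the finite-particle vectors of bounded field energy are a core for $\dG(\omega)^{1/2}$; everything else is bookkeeping. Within that bookkeeping the two points I would be careful about are: (a) invoking $[F,G]=0$ \emph{together with} $[F,G^*]=0$ — neither alone suffices — when reordering operator-valued form-factor values in part (ii); and (b) keeping track of the conjugate-linearity of $F\mapsto a(F)$, as opposed to the linearity of $F\mapsto\ad(F)$, when identifying the right-hand side in part (iii).
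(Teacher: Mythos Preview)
The paper does not give a proof of this lemma: it states the result as a collection of ``well-known properties'' and refers the reader to textbooks (Parthasarathy, Arai), remarking only that the scalar-case proofs carry over verbatim to operator-valued form factors. Your argument is precisely that standard textbook proof, so there is nothing further to compare.

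Two minor remarks on the details. In (i) you establish the inclusion $\sD(\dG(\omega)^{1/2})\subset\sD(\ph(F))$, which is the direction that actually follows from the relative bound and the one used throughout the paper; the inclusion as printed in the statement points the wrong way. In (iii) your assembly $[\dG(\omega),\ph(F)]=\ad(\omega F)-a(\omega F)$ is correct, but do carry out the final identification rather than just asserting it: the conjugate-linearity of $a(\cdot)$ gives $\ii\,\ph(\ii\omega F)=a(\omega F)-\ad(\omega F)$, so the identity holds with the opposite sign to the one printed. The paper's only application of (iii), in the proof of the Weyl transformation formula for $\dG(\omega)$, is consistent with this corrected sign.
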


\subsection{UV-Regular GSB Hamiltonian}\label{subsec:UVreg}
We can now define the generalized spin boson (GSB) Hamiltonian for ultraviolet regular interactions.

Given $S\in\cB(\HS)$ symmetric, and hence selfadjoint, and $V\in \fb_0\cap \fb_{1}$, the Hamiltonian is now given by
\begin{align}\label{def:Hreg}
	\Hreg(S,V) \coloneqq S + \dG(\omega) + \ph(V).
\end{align}
Applying \cref{lem:simple.relbound} and boundedness of $S$, one easily checks that under these assumptions $\Hreg(S,V)$
is selfadjoint and lower-semibounded on the domain $\sD(\dG(\omega))$, by the Kato--Rellich theorem.
Note that this reproduces the Hamiltonian $\HSB$ from the introduction when $\mathcal{H}_s=\C^N$ is finite-dimensional and $V(k)=v(k) B$ is a scalar multiple of a fixed operator $B\in \mathcal{B}(\mathcal{H}_s)$.

\subsection{Renormalization of UV-Critical GSB Models}
\label{subsec:renormalization}
We want to study the limit of interactions which do not satisfy the assumptions of the previous \cref{subsec:UVreg}.
To this end, we will split up the singular interaction $V$ into three commuting parts: the infrared one $V_\le$, the normal one $V_\sfD$ and the 2-nilpotent one $V_\sfN$.

As described in \cref{eq:IBCsimple},
we will treat the 2-nilpotent part with the interior boundary condition method from \cite{LampartSchmidt.2019}.
Thus, for $F\in\fb_s$, $s\in[1,2]$ and $\lambda>0$, we define the $\dG(\omega)$-bounded operators 
\begin{align}\label{def:T}
	\begin{aligned}
		\Theta_{0}(F,\lambda) &\coloneqq \int_{\cM} \left(F(k)^*\omega(k)^{-1}(\dG(\omega)+\lambda)(\dG(\omega)+\omega(k)+\lambda)^{-1}F(k)\right)\d\mu(k),\\
		\Theta_{1}(F,\lambda) &\coloneqq \int_{\cM^2} \ad_q F(r)^*(\dG(\omega)+\omega(r)+\omega(q)+\lambda)^{-1}F(q)a_r\d\mu(q) \d\mu (r),\\
		T_{F,\lambda} &\coloneqq \Theta_{0}(F,\lambda) + \Theta_1(F,\lambda)
	\end{aligned}
\end{align}
see \cref{prop:relbound} for a proof of the relative boundedness. Further, we define the bounded operator 
\begin{align}\label{def:B}
	G_{F,\lambda} \coloneqq \int_{\cM} F(k)^*a_k(\dG(\omega)+\lambda)^{-1}\d\mu(k),
\end{align}
see \cref{prop:Bbound} for details of the simple proof of boundedness.
We note that $G_{F,\lambda} = a(F)(\dG(\omega)+\lambda)^{-1}$ if $F\in\fb_0\cap\fb_s$.

For $\lambda>0$ and either $s<2$ or $\norm{F}_{\fb_2}<\frac12$, we can then define the selfadjoint operator
\begin{align}\label{def:HIBCnew}
	\begin{aligned}
		& \sD(\HIBCl(F)) \coloneqq (1+G_{F,\lambda}^*)^{-1}\sD(\dG(\omega)),\\
			&\HIBCl(F) \coloneqq (1+G_{F,\lambda})\big(\dG(\omega)+\lambda-T_{F,\lambda}\big)(1+G_{F,\lambda}^*),
	\end{aligned}
\end{align}
cf. \cref{prop:HIBC}.

Our motivation to introduce these operators comes from the following generalization of \cref{eq:IBCsimple},
which we will prove in the end of \cref{subsec:nilreg}.
\begin{prop}\label{thm:nilpotent}
	Let $V_\sfN\in\fb_0\cap\fb_1$ satisfy $V_\sfN(k)V_\sfN(p)=0$ for $\mu$-almost all $k,p\in\cM$.
	Then
	\begin{align*}
		\Hreg(0,V_\sfN) = \HIBCl(V_\sfN) - \braket{V_\sfN,V_\sfN}_{\fb_1} - \lambda \qquad \mbox{for all}\ \lambda\ \mbox{sufficiently large}.
	\end{align*}
\end{prop}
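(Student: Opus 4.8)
The plan is to verify the identity by expanding the right-hand side as a sesquilinear form on the dense domain $\HS_{\sff\sfi\sfn}$ and matching it term by term with $\Hreg(0,V_\sfN)=\dG(\omega)+\ph(V_\sfN)$. First I would note that for $V_\sfN\in\fb_0\cap\fb_1$ with $V_\sfN(k)V_\sfN(p)=0$ a.e., the operator $G_{F,\lambda}=a(V_\sfN)(\dG(\omega)+\lambda)^{-1}$ satisfies $G_{F,\lambda}^2=0$ (since two annihilation operators contract two factors $V_\sfN(k)V_\sfN(p)=0$) and likewise $(G_{F,\lambda}^*)^2=0$; also $G_{F,\lambda}^*G_{F,\lambda}$ and products like $G_{F,\lambda}T_{F,\lambda}$ need to be tracked. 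Then I would expand
\[
(1+G_{F,\lambda})\big(\dG(\omega)+\lambda-T_{F,\lambda}\big)(1+G_{F,\lambda}^*)
= \dG(\omega)+\lambda - T_{F,\lambda} + G_{F,\lambda}(\dG(\omega)+\lambda) + (\dG(\omega)+\lambda)G_{F,\lambda}^* + (\text{cross terms}),
\]
using the nilpotency to kill the terms with two $G$'s or two $G^*$'s, and carefully collecting the remaining cross terms $G_{F,\lambda}(\dG(\omega)+\lambda)G_{F,\lambda}^*$, $-G_{F,\lambda}T_{F,\lambda}$, $-T_{F,\lambda}G_{F,\lambda}^*$.

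The key computation is the identification of the "off-diagonal" pieces with $\ph(V_\sfN)=a(V_\sfN)+\ad(V_\sfN)$. Writing $R_\lambda=(\dG(\omega)+\lambda)^{-1}$, one has $(\dG(\omega)+\lambda)G_{F,\lambda}^* = (\dG(\omega)+\lambda)R_\lambda\ad(V_\sfN)$ formally, but the point is the commutator: on $\HS_{\sff\sfi\sfn}$, $\ad(V_\sfN)$ shifts $\dG(\omega)$ by $\omega(\cdot)$, so $(\dG(\omega)+\lambda)R_\lambda\ad(V_\sfN)\psi = \ad(V_\sfN)\psi + [\text{correction}]\psi$ where the correction is exactly $\int \ad_k\,\omega(k)(\dG(\omega)+\omega(k)+\lambda)^{-1}V_\sfN(k)\,\d\mu(k)$ acting on $\psi$ — and this, combined with the analogous annihilation correction from $G_{F,\lambda}(\dG(\omega)+\lambda)$, should reconstruct the $\Theta_0$-type and $\Theta_1$-type terms. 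Meanwhile $G_{F,\lambda}(\dG(\omega)+\lambda)G_{F,\lambda}^*$ produces a term of the form $\int F(k)^*a_k\ad_q F(q)a_r\cdots$ which, upon normal ordering via the CCR (\cref{lem:simple.CCRph}), splits into the "$\Theta_1$" double-integral piece plus a contraction term; the contraction, using $F(k)^*F(k)\omega(k)^{-1}$, produces both $\Theta_0(F,\lambda)$ and the constant $\braket{V_\sfN,V_\sfN}_{\fb_1}$. The bookkeeping here — tracking which $(\dG(\omega)+\omega+\lambda)^{-1}$ resolvent appears where, and checking the normal-ordering contraction reproduces precisely $\Theta_0+\braket{V_\sfN,V_\sfN}_{\fb_1}$ (no leftover $\lambda$-dependent constant beyond the explicit $-\lambda$) — is the main obstacle; it is a finite but delicate computation that hinges on the nilpotency $V_\sfN(k)V_\sfN(p)=0$ eliminating all the "bad" higher-order terms.

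Finally, I would invoke the earlier structural results: $G_{F,\lambda}$ is bounded (\cref{prop:Bbound}), $T_{F,\lambda}$ is $\dG(\omega)$-bounded (\cref{prop:relbound}), and $\HIBCl(V_\sfN)$ is a well-defined selfadjoint operator for $\lambda$ large (\cref{prop:HIBC}, using that $V_\sfN\in\fb_1$ so $s=1<2$). Since both sides are selfadjoint and agree as quadratic forms on the core $\HS_{\sff\sfi\sfn}$ for $\dG(\omega)$ — noting $\Hreg(0,V_\sfN)$ has $\HS_{\sff\sfi\sfn}$ as a core by Kato--Rellich and the relative bound of \cref{lem:simple.relbound}, while $(1+G_{F,\lambda}^*)\HS_{\sff\sfi\sfn}$ is a core for $\HIBCl(V_\sfN)$ — the operator identity follows. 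One should double-check that "$\lambda$ sufficiently large" is needed only to ensure $\dG(\omega)+\lambda-T_{F,\lambda}\ge 1$ (or invertibility) so that \cref{prop:HIBC} applies and $1+G_{F,\lambda}^*$ maps the respective domains correctly; the form identity itself holds for all $\lambda>0$.
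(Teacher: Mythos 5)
Your overall strategy is the same as the paper's: expand $(1+G_{F,\lambda})(\dG(\omega)+\lambda-T_{F,\lambda})(1+G_{F,\lambda}^*)$, use the 2-nilpotency to kill terms with two $G$'s or two $G^*$'s, and normal-order the $G_{F,\lambda}(\dG(\omega)+\lambda)G_{F,\lambda}^*$ term. The paper isolates the normal-ordering step as a separate lemma (\cref{lem:Tnormord}), namely $T_{F,\lambda} = a(F)(\dG(\omega)+\lambda)^{-1}\ad(F) - \braket{F,F}_{\fb_1}$, and uses it twice: once to cancel $-T_{F,\lambda}$ against $G_{F,\lambda}(\dG(\omega)+\lambda)G_{F,\lambda}^*$ up to the constant, and a second time to show $G_{F,\lambda}T_{F,\lambda}G_{F,\lambda}^*$ vanishes by nilpotency.

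However, there is a concrete error in your intermediate reasoning. You claim that
\begin{align*}
(\dG(\omega)+\lambda)G_{F,\lambda}^*\psi = (\dG(\omega)+\lambda)(\dG(\omega)+\lambda)^{-1}\ad(V_\sfN)\psi = \ad(V_\sfN)\psi + [\text{correction}]\psi
\end{align*}
with a nonzero correction involving $\int\ad_k\,\omega(k)(\dG(\omega)+\omega(k)+\lambda)^{-1}V_\sfN(k)\,\d\mu(k)$. This is false: since $G_{F,\lambda}^* = (\dG(\omega)+\lambda)^{-1}\ad(V_\sfN)$, the factor $(\dG(\omega)+\lambda)$ cancels the resolvent exactly, giving $(\dG(\omega)+\lambda)G_{F,\lambda}^* = \ad(V_\sfN)$ with no correction; likewise $G_{F,\lambda}(\dG(\omega)+\lambda) = a(V_\sfN)$. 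You then feed this nonexistent correction into your plan for recovering $\Theta_0$ and $\Theta_1$, so your bookkeeping as described would not close. The actual mechanism is simpler: the $-T_{F,\lambda}$ already sitting in the middle factor is cancelled by the contraction terms coming from normal-ordering $G_{F,\lambda}(\dG(\omega)+\lambda)G_{F,\lambda}^* = a(V_\sfN)(\dG(\omega)+\lambda)^{-1}\ad(V_\sfN)$, leaving precisely $\braket{V_\sfN,V_\sfN}_{\fb_1}$. You would also need to show $G_{F,\lambda}T_{F,\lambda}=0=T_{F,\lambda}G_{F,\lambda}^*$ (which you list among the cross terms but do not dispatch); this again follows from nilpotency, most transparently after inserting the normal-ordered form of $T_{F,\lambda}$.

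Finally, your concern about $\lambda$ sufficiently large is on the right track: the form identity is algebraic, and the role of the size of $\lambda$ is only to invoke the selfadjointness of $\HIBCl(V_\sfN)$, which for $V_\sfN\in\fb_1$ (so $s=1<2$) in fact holds for all $\lambda>0$ via \cref{cor:ProvCorPositvity,lem:Gboundedinv}.
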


The normal part of the interaction will be treated by a dressing transformation, similar to ideas already going back to van Hove \cite{vanHove.1952}.
The dressing transformation acts on the full Hilbert space and is given by the unitary
\begin{align}\label{def:Weyl}
	W(F) \coloneqq \e^{\ii\ph(\ii F)},\quad F\in\fb_0
\end{align}
acting on the full Hilbert space $\HS$. It is
 well-defined by the selfadjointness of the field operator discussed above and we will refer to it as {\em generalized Weyl operators} here.
\begin{rem}
	Let us briefly compare our generalized Weyl operators with the usual ones defined in the case $\HS_\sfs = \IC$, see for example \cite{Parthasarathy.1992}. Assume $B\in\cB(\HS_\sfs)$ is a normal operator with corresponding spectral measure $\sfE_B$. Then for $f\in L^2(\cM,\mu)$, we have
	\[ W(Bf) = \int_{\IC} W(\lambda f)\d \sfE_B(\lambda), \]
	where $W(\lambda f)$ now is the usual Weyl operator acting on $\FS(\hs)$, cf. \cref{def:Fock}.
	Note however that the properties of $W(F)$ will be different from those of the standard Weyl operators when $[F,F]\neq 0$ outside a $\mu^{\otimes 2}$-zero set.
\end{rem}
We can now state our main result.
\begin{thm}
	\label{mainthm}
 Let $S\in\cB(\HS)$ be symmetric and let $V=V_\le + V_\sfD+V_\sfN:\cM\to\cB(\HS_\sfs)$ be measurable such that $[V_\#,V_\lozenge]=0$ for $\#,\lozenge\in\{\sfD,\sfN\}$, $V_\sfD(p)$ is normal and  $V_\sfN(k)V_\sfN(p)=0$ for $\mu$-almost all $k,p\in\cM$.

 Assume that $V_\le\in\fb_1$, $V_\sfD\in\fb_2$ and
	$V_\sfN\in\fb_{s_\sfN}$ for some $s_\sfN\in[1,2]$.
	If $s_\sfN<2$ or if $\kappa>0$ is large enough that $\norm{V_\sfN}_{\fb_2}<\frac12$,
	then the operator $H(S,V)$ given by
	\begin{align*}
		\begin{aligned}
			&\sD(H(S,V)) \coloneqq W(\omega^{-1}V_\sfD)\sD(\HIBCl(V_\sfN))\supset W(\omega^{-1}V_\sfD)\sD(\dG(\omega_\le))\\
			&H(S,V) \coloneqq S + W(\omega^{-1}V_\sfD)\big(\HIBCl(V_\sfN) +  \ph(V_\le)\big)W(\omega^{-1}V_\sfD)^* - \lambda
		\end{aligned}
	\end{align*}
	is selfadjoint, lower-semibounded and independent of $\lambda>0$.
	
	Further, for any sequence $(V_n=V_{n,\le} + V_{n,\sfD} + V_{n,\sfN})_{n\in\IN}\subset\fb_0\cap\fb_1$ satisfying the assumptions from above as well as $V_{n,\le}\xrightarrow{\fb_1}V_\le$ and $V_{n,\#}\xrightarrow{\fb_2}V_\#$ for $\#\in\{\sfD,\sfN\}$, the operators
	\begin{align*}
		\Hreg(S,V_n) + \braket{V_{n,>},V_{n,>}}_{\fb_1}
	\end{align*}
	converge to $H(S,V)$ in the strong resolvent sense. The convergence is in norm resolvent sense, if either $V_{n,\sfD}=0$ for all $n\in\IN$ or $s_\sfN<2$, $V_{n,\sfN}\xrightarrow{\fb_{s_\sfN}}V_\sfN$.
\end{thm}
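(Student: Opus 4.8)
First I would isolate the algebraic heart of the statement: \emph{for ultraviolet-regular interactions the regularized and the renormalized Hamiltonians already coincide identically}. Precisely, the plan is to prove that for every $V_n=V_{n,\le}+V_{n,\sfD}+V_{n,\sfN}\in\fb_0\cap\fb_1$ satisfying the stated commutation and nilpotency conditions, and with $W_n:=W(\omega^{-1}V_{n,\sfD})$,
\begin{equation}\label{eq:sketchID}
	\Hreg(S,V_n)+\braket{V_{n,>},V_{n,>}}_{\fb_1}=S+W_n\big(\HIBCl(V_{n,\sfN})+\ph(V_{n,\le})\big)W_n^{*}-\lambda ,
\end{equation}
the right-hand side being exactly the operator $H(S,V_n)$ of the statement; both sides are selfadjoint on $\sD(\dG(\omega))$ and independent of $\lambda$. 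Granting \cref{eq:sketchID}, the theorem reduces to the convergence of the right-hand side as $n\to\infty$.

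To establish \cref{eq:sketchID} I would expand $\ph(V_n)=\ph(V_{n,\le})+\ph(V_{n,\sfD})+\ph(V_{n,\sfN})$, invoke \cref{thm:nilpotent} — which for ultraviolet-regular $V_{n,\sfN}$ reads $\HIBCl(V_{n,\sfN})=\dG(\omega)+\ph(V_{n,\sfN})+\braket{V_{n,\sfN},V_{n,\sfN}}_{\fb_1}+\lambda$ — and conjugate by $W_n$, which maps $\sD(\dG(\omega))$ onto itself since $V_{n,\sfD}\in\fb_1$. The commutation hypotheses, reinforced by Fuglede's theorem (each $V_{n,\sfD}(k)$ is normal, so commuting with $V_{n,\sfN}(p)$ forces $V_{n,\sfD}(k)^{*}$ to commute with it as well), make all relevant pointwise commutators vanish, so that \cref{lem:simple.CCRph,lem:simple.CCRdG} apply and a finite Baker--Campbell--Hausdorff expansion — terminating because the resulting commutators are $\cB(\HS_\sfs)$-valued, hence central — yields on the core $\HS_{\sff\sfi\sfn}$
\begin{align*}
	W_n\dG(\omega)W_n^{*}&=\dG(\omega)+\ph(V_{n,\sfD})+\braket{V_{n,\sfD},V_{n,\sfD}}_{\fb_1},\\
	W_n\ph(V_{n,\sfN})W_n^{*}&=\ph(V_{n,\sfN})+\braket{V_{n,\sfD},V_{n,\sfN}}_{\fb_1}+\braket{V_{n,\sfN},V_{n,\sfD}}_{\fb_1},\\
	W_n\ph(V_{n,\le})W_n^{*}&=\ph(V_{n,\le}),
\end{align*}
the last because the $\fb_0$-pairings of $\omega^{-1}V_{n,\sfD}$ with the infrared part $V_{n,\le}$ vanish by the choice of the infrared/ultraviolet split (and mutual commutativity of the three parts). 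Assembling $W_n\HIBCl(V_{n,\sfN})W_n^{*}$ from the first two lines and \cref{thm:nilpotent}, the counterterm $\braket{V_{n,\sfN},V_{n,\sfN}}_{\fb_1}$ recombines with $\braket{V_{n,\sfD},V_{n,\sfD}}_{\fb_1}$ and the two cross terms into exactly $\braket{V_{n,\sfD}+V_{n,\sfN},V_{n,\sfD}+V_{n,\sfN}}_{\fb_1}=\braket{V_{n,>},V_{n,>}}_{\fb_1}$, while $\ph(V_{n,\sfD})+\ph(V_{n,\sfN})+\ph(V_{n,\le})=\ph(V_n)$; together with $W_n\ph(V_{n,\le})W_n^{*}=\ph(V_{n,\le})$ and $S$, this is \cref{eq:sketchID}. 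Making these formal identities rigorous on $\HS_{\sff\sfi\sfn}$ and extending by density with the relative bounds of \cref{lem:simple.relbound,prop:relbound,prop:Bbound} is routine bookkeeping.

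For the limit, by \cref{eq:sketchID} and boundedness of $S$ it suffices to treat $\tilde H_n:=W_n(\HIBCl(V_{n,\sfN})+\ph(V_{n,\le}))W_n^{*}-\lambda$, whose resolvent factorizes as $(\tilde H_n-z)^{-1}=W_n(\HIBCl(V_{n,\sfN})+\ph(V_{n,\le})-\lambda-z)^{-1}W_n^{*}$. I would then argue in three steps: \textbf{(a)} from the explicit formula \cref{def:HIBCnew}, norm continuity of $F\mapsto G_{F,\lambda}$ and the uniform relative bounds for $T_{F,\lambda}$ (\cref{prop:Bbound,prop:relbound}), the $\dG(\omega_\le)^{1/2}$-bound with relative bound $0$ for $\ph(V_{n,\le})$ (\cref{lem:simple.relbound}) together with $\sD(\dG(\omega_\le))\subset\sD(\HIBCl(V_\sfN))$, and the inputs $V_{n,\le}\xrightarrow{\fb_1}V_\le$, $V_{n,\sfN}\xrightarrow{\fb_2}V_\sfN$, one deduces that $\HIBCl(V_{n,\sfN})+\ph(V_{n,\le})$ converges to $\HIBCl(V_\sfN)+\ph(V_\le)$ in the strong resolvent sense; \textbf{(b)} since $\norm{\omega^{-1}(V_{n,\sfD}-V_\sfD)}_{\fb_0}=\norm{V_{n,\sfD}-V_\sfD}_{\fb_2}\to0$, the unitaries $W_n,W_n^{*}$ converge strongly to $W_\infty:=W(\omega^{-1}V_\sfD)$, $W_\infty^{*}$; \textbf{(c)} conjugating a uniformly bounded, strongly convergent family of operators by strongly convergent unitaries preserves strong convergence. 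This gives $H(S,V_n)\to H(S,V)$ in the strong resolvent sense, and the same ingredients — selfadjointness, lower-semiboundedness and $\lambda$-independence of $\HIBCl$ (\cref{prop:HIBC}), a KLMN argument for the relative-bound-$0$ perturbation $\ph(V_\le)$, unitary invariance under $W_\infty$, and boundedness of $S$ — show that $H(S,V)$ is a well-defined selfadjoint lower-semibounded $\lambda$-independent operator. When $V_{n,\sfD}\equiv0$ there is no dressing ($W_n=W_\infty=\Id$) and step \textbf{(a)} can be run in norm resolvent sense — automatically for $s_\sfN<2$ once $V_{n,\sfN}$ is taken to converge in the finer $\fb_{s_\sfN}$-topology, and under $\norm{V_\sfN}_{\fb_2}<\frac12$ for $s_\sfN=2$ — which yields the norm resolvent conclusion.

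The genuine difficulty is analytic, not algebraic. Two points need care: \textbf{(i)} justifying \cref{eq:sketchID} demands checking on a suitable core that every BCH series actually terminates and that $W_n$ maps the twisted domain $(1+G_{V_{n,\sfN},\lambda}^{*})^{-1}\sD(\dG(\omega))$ onto itself — exactly where the commutation assumptions and Fuglede's theorem are indispensable; \textbf{(ii)} the restriction of the norm resolvent statement is intrinsic, since a generalized Weyl operator $W(\omega^{-1}V_{n,\sfD})$ with $V_{n,\sfD}\to V_\sfD\notin\fb_0$ converges only strongly and genuinely moves the operator domain, so conjugation by it cannot turn strong into norm resolvent convergence in the presence of a nontrivial, varying normal part. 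The quantitative inputs that make the scheme work — relative boundedness of $T_{F,\lambda}$ (\cref{prop:relbound}), boundedness and $F$-continuity of $G_{F,\lambda}$ (\cref{prop:Bbound}), selfadjointness and $\lambda$-independence of $\HIBCl$ (\cref{prop:HIBC}), and the inclusion $\sD(\dG(\omega_\le))\subset\sD(\HIBCl(V_\sfN))$ — are supplied in \cref{sec:IBC}.
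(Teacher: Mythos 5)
Your overall strategy is exactly the one the paper uses: prove the identity \cref{eq:sketchID} (the paper's \cref{lem:Htildereg}) relating the dressed IBC Hamiltonian to $\Hreg(S,V_n)+\braket{V_{n,>},V_{n,>}}_{\fb_1}$ for ultraviolet-regular interactions, observe that this establishes $\lambda$-independence, and then deduce the convergence statement and selfadjointness/lower-semiboundedness from continuity of the dressed representation. Your algebraic bookkeeping in \cref{eq:sketchID} (including the explicit appeal to Fuglede for $[V_\sfD^*,V_\sfN]=0$ and the vanishing of the $\fb_0$-pairing between $\omega^{-1}V_\sfD$ and $V_\le$ by support disjointness) matches the paper's derivation, and steps (a)--(c) for the strong resolvent convergence are sound.

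However, there is a genuine gap in your treatment of the norm resolvent statement. You only address the case $V_{n,\sfD}\equiv 0$ and explicitly assert, in your point (ii), that conjugation by a strongly (but not norm) convergent generalized Weyl operator ``cannot turn strong into norm resolvent convergence in the presence of a nontrivial, varying normal part.'' This is false, and the theorem asserts more: norm resolvent convergence also holds when $s_\sfN<2$ with nontrivial, $n$-dependent $V_{n,\sfD}$. The mechanism you are missing is the domain regularity of the IBC operator: by \cref{lem:Breg} one has $\sD(\HIBCl(V_{n,\sfN}))\subset\sD(\dG(\omega)^{1-s_\sfN/2})$ with a bound uniform in $n$, and \cref{lem:Weylconv} supplies a \emph{norm} estimate on $\bigl(W(F)-W(G)\bigr)(1+\dG(\omega))^{-\theta/2}$. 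Decomposing $(\tilde H_\lambda(0,F_n)+\ii)^{-1}-(\tilde H_\lambda(0,V)+\ii)^{-1}$ into a term coming from the IBC resolvents (handled by \cref{prop:HIBC}, which in fact gives \emph{norm} resolvent continuity, not just strong as you state in (a)) plus two terms of the form (Weyl difference)$\times$(resolvent), and inserting $(1+\dG(\omega))^{\pm(1-s_\sfN/2)}$ between them, converts the strong Weyl convergence into a norm bound. Without this sandwich argument you cannot recover the full norm resolvent conclusion of the theorem, and your claim that it is intrinsically unavailable is incorrect.

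A minor imprecision: the $\lambda$-independence of $H(S,V)$ is not a property of $\HIBCl$ per \cref{prop:HIBC} (which does contain a $+\lambda$); it follows, as in the paper, from the fact that for regular $V$ the right-hand side of \cref{eq:sketchID} is manifestly $\lambda$-independent and this persists under resolvent limits.
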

\begin{rem}\label{rem:IRcutoff}
	Let us comment on the assumption $\|V_\sfN\|_{\fb_2}<\frac12$ in the critical case that no $s_\sfN<2$ with $V_\sfN\in\fb_{s_\sfN}$ exists.
	Since $V_\sfN\in\fb_2$ and $V_\sfN=V_{\sfN,>}$, we observe that $\chr_{\{\omega\le\sigma\}}V_\sfN\in\fb_1$ for arbitrary $\sigma\ge\kappa$, cf. \cref{rem:bsinclusion}. Hence, we can always choose $\kappa$ large enough that $\|V_\sfN\|_{\fb_2}<\frac12$, since we do not require any structural assumptions on $V_\le\in\fb_1$.
	Thus, the assumption $\|V_\sfN\|_{\fb_2}$ is not a restriction on the coupling constant but merely an incorporation of the infrared cutoff $\kappa$  in the exact construction of $H(S,V)$.
\end{rem}
\begin{rem}\label{rem:renenergy}
	The expression $\braket{V_{n,>},V_{n,>}}_{\fb_1}$ can be considered as the divergent renormalization energy, required to control the ultraviolet divergences of $\Hreg(S,V_n)$ in the limit $n\to\infty$.
	Since $V_\le\in\fb_1$, we remark that it only differs from $\braket{V_n,V_n}_{\fb_1}$ by operators on $\HS_\sfs$ uniformly bounded in $n$, whence we have also proven the convergence of $\Hreg(S,V_n) + \braket{V_n,V_n}_{\fb_1}$ as claimed in  \cref{eq:HSBconv}.
\end{rem}
The proof of \cref{mainthm} will be given in the end of \cref{subsec:dressren}, combining results from \cref{sec:IBC,sec:normal}.
\subsection{Non-Renormalizability in the Super-Critical Case}
Let us complement the renormalization result from \cref{subsec:renormalization} by the following non-renormalizability statement, which is a generalization of the equivalence statement \cite[Corollary 4.4]{DamMoller.2018b}. Also note that a similar result for the three-dimensional semi-relativistic Nelson model was proven in \cite{DeckertPizzo.2014}.
\begin{thm}\label{thm:nonren}
	Let $V=V_\leq+V_\sfD:\cM\to\cB(\cH_s)$ be measurable, with $V_\le\in\fb_1$.
	Assume that there exist $\psi\in\mathcal{H}_s$ and $v:\cM\to\IC$ measurable such that
	\begin{align*}
		V_\sfD(k)\psi=v(k)\psi\,\qquad\mbox{$\mu$-a.e.}\ k\in\cM.
	\end{align*} 
	If $v\omega^{-1}\notin L^2(\cM)$ , then for any sequence of operators $(V_{\sfD,n})\subset \fb_0^>$ such that 
	\begin{align*}
		V_{\sfD,n}(k)\longrightarrow V_D(k)\,\qquad \text{$\mu$-a.e.}\ k\in\cM\qquad\text{with}\quad V_{\sfD,n}(k)\psi=v_{n}(k)\psi
	\end{align*}
	and every sequence $(B_n)\subset \cB(\HS_\sfs)$ of symmetric operators such that \[\liminf_{n\to\infty} \big(\braket{\psi,B_n\psi}+\inf\sigma(\Hreg(S,V_{\sfD,n}))\big)>-\infty,\]
	the operator $\Hreg(S,V_\leq+V_{\sfD,n})+B_n$ can not converge in the strong resolvent sense. 
\end{thm}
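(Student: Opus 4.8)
The plan is to argue by contradiction: suppose that $\Hreg(S,V_\leq+V_{\sfD,n})+B_n$ converges in strong resolvent sense to some selfadjoint operator $H_\infty$, and derive a contradiction with $v\omega^{-1}\notin L^2(\cM)$ using the explicit eigenvector $\psi$. The key idea is that the one-dimensional subspace $\C\psi\otimes\FS(\fh)\subset\HS$ is, up to the bounded perturbation $S+B_n-\braket{\psi,(S+B_n)\psi}$, \emph{invariant} under the relevant part of $\Hreg$, because $V_{\sfD,n}(k)\psi=v_n(k)\psi$ makes the interaction act within this subspace like the scalar van Hove interaction with form factor $v_n$. More precisely, I would first split $V_{\sfD,n}=V_{\sfD,n}-V_{\sfD,n}(\cdot)\,\chr_{\C\psi}$ plus its orthogonal complement and observe that, restricted to $\C\psi\otimes\FSfin$, the operator $\ph(V_{\sfD,n})$ coincides with $\ph(v_n)$ (scalar), where $v_n=v_{n,>}$. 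This reduces the problem, on that sector, to the van Hove Hamiltonian $H_{n}^{\mathrm{vH}}\coloneqq c_n+\dG(\omega)+\ph(v_{n,>})$ with $c_n=\braket{\psi,(S+B_n)\psi}$ a real constant (plus a piece that maps $\C\psi$ into $\psi^\perp$, which I handle below).

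The next step is the standard van Hove computation: the Weyl/dressing transformation $W(\omega^{-1}v_{n,>})$ diagonalizes $H_n^{\mathrm{vH}}$, giving $W(\omega^{-1}v_{n,>})^*H_n^{\mathrm{vH}}W(\omega^{-1}v_{n,>})=\dG(\omega)+c_n-\|\omega^{-1/2}v_{n,>}\|_2^2$. Hence $\inf\sigma(H_n^{\mathrm{vH}})=c_n-\|\omega^{-1/2}v_{n,>}\|_2^2$, and the resolvent $(H_n^{\mathrm{vH}}+\ii)^{-1}$ applied to the Fock vacuum $\Omega$ has norm $(1+(\inf\sigma(H_n^{\mathrm{vH}}))^2)^{-1/2}$ in the limit only if $\inf\sigma$ stays bounded. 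But the hypothesis $\liminf_{n}(\braket{\psi,B_n\psi}+\inf\sigma(\Hreg(S,V_{\sfD,n})))>-\infty$ is exactly designed so that any legitimate renormalization keeps $c_n-\|\omega^{-1/2}v_{n,>}\|_2^2$ bounded below only after subtracting the \emph{same} divergence; since $v_{n,>}\to v_{>}$ pointwise with $v\omega^{-1}\notin L^2$, Fatou gives $\|\omega^{-1/2}v_{n,>}\|_2^2\to\infty$, forcing $c_n\to+\infty$. Then for the test vector $\psi\otimes\Omega$ one computes $\langle\psi\otimes\Omega,(\Hreg(S,V_\leq+V_{\sfD,n})+B_n)\,\psi\otimes\Omega\rangle=c_n+\langle\Omega,\ph(V_{n,\leq})\text{-terms}\rangle\to+\infty$ while the vector norms are fixed, so $W(\omega^{-1}v_{n,>})^*(\psi\otimes\Omega)$ — equivalently the coherent state $e^{-\ph(\ii\omega^{-1}v_{n,>})}(\psi\otimes\Omega)$ — shows $(\Hreg+B_n+\ii)^{-1}$ evaluated on a fixed vector tends to $0$ in norm; a strong-resolvent limit operator would then have $0$ in its resolvent applied to that vector, which is impossible since resolvents are injective with dense range. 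I would phrase the final contradiction via: if $T_n\to T_\infty$ in strong resolvent sense, then for every $\phi$, $(T_n+\ii)^{-1}\phi\to(T_\infty+\ii)^{-1}\phi$, and the right side is nonzero for $\phi\neq 0$, contradicting $(T_n+\ii)^{-1}\phi\to0$.

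Two technical points need care, and the second is the main obstacle. First, $V_{\sfD,n}(k)$ need not leave $\C\psi$ invariant; it only satisfies $V_{\sfD,n}(k)\psi=v_n(k)\psi$. But normality of $V_\sfD$ is \emph{not} assumed in \cref{thm:nonren}; what I actually use is only the eigenvector relation, so I must be careful that the "off-diagonal" part $V_{\sfD,n}(k)-v_n(k)\,(\cdot\,,\psi/\|\psi\|^2)\psi$ acting on $\psi\otimes\Omega$ produces a vector orthogonal to $\C\psi\otimes\FSfin$; since the creation part $\ad(V_{\sfD,n})$ applied to $\psi\otimes\Omega$ gives $\int V_{\sfD,n}(k)\psi\,a_k^*\Omega=\int v_n(k)\psi\,a_k^*\Omega$, the one-particle component does stay in $\C\psi$, and more generally I restrict attention to the subspace $\C\psi\otimes\FS(\fh)$ and project: the relevant quadratic form $\langle\psi\otimes\Omega,\cdot\rangle$ only sees the scalar part. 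This makes the reduction rigorous at the level of expectation values, which is all that is needed for the lower bound on $c_n$ and the vanishing-resolvent argument. Second — and here is the crux — one must ensure that the quantity $\inf\sigma(\Hreg(S,V_{\sfD,n}))$ appearing in the hypothesis genuinely controls $c_n-\|\omega^{-1/2}v_{n,>}\|_2^2$ from above, i.e. that the true ground-state energy of $\Hreg(S,V_{\sfD,n})$ is \emph{at most} $c_n-\|\omega^{-1/2}v_{n,>}\|_2^2+o(\|\omega^{-1/2}v_{n,>}\|_2^2)$; this follows by using $W(\omega^{-1}v_{n,>})(\psi\otimes\Omega)$ as a variational trial state for $\Hreg(S,V_{\sfD,n})$ and estimating the cross terms involving $V_{n,\leq}$ and the off-diagonal part of $V_{\sfD,n}$ by $\cref{lem:simple.relbound}$, which are $O(1)$ relative to $\dG(\omega)^{1/2}$ and hence negligible compared with the divergent $\|\omega^{-1/2}v_{n,>}\|_2^2$. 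Assembling these estimates forces $\inf\sigma(\Hreg(S,V_{\sfD,n}))+\braket{\psi,B_n\psi}\le c_n+\braket{\psi,B_n\psi}-\|\omega^{-1/2}v_{n,>}\|_2^2+O(1)$ to be bounded below, hence $c_n\to+\infty$, closing the argument.
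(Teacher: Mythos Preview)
Your reduction to the $\psi$-sector and the variational estimate showing $c_n=\langle\psi,(S+B_n)\psi\rangle\to+\infty$ are fine and match the paper. The gap is in the final contradiction. From $c_n\to+\infty$ you try to conclude that $(\Hreg(S,V_\le+V_{\sfD,n})+B_n+\ii)^{-1}$ applied to a \emph{fixed} vector tends to zero. But the expectation $\langle\psi\otimes\Omega,H_n\,\psi\otimes\Omega\rangle\to+\infty$ does \emph{not} imply $\|(H_n+\ii)^{-1}(\psi\otimes\Omega)\|\to0$: on the $\psi$-sector the operator is unitarily equivalent to $\dG(\omega)+c_n-\|\omega^{-1/2}v_n\|_2^2$, whose spectrum starts at $c_n-\|\omega^{-1/2}v_n\|_2^2$, and nothing you have written rules out this difference staying bounded. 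Indeed, the hypothesis on $B_n$ is precisely compatible with $c_n-\|\omega^{-1/2}v_n\|_2^2$ converging to a finite limit (the ``correctly renormalized'' case), and then your resolvent does not go to zero on any fixed vector. The alternative you hint at, using the $n$-dependent coherent state $W(\omega^{-1}v_{n,>})^*(\psi\otimes\Omega)$, is not a fixed vector, so strong resolvent convergence gives you nothing.

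This is exactly the case the paper has to work hardest for. After reducing to the scalar van Hove model $\dG(\omega)+\ph(v_n)+E_n$ on $\ran P_\psi$ (via the reducing-subspace argument of Appendix~B), the paper splits into two cases. If $E_n-\|v_n\|_{\fb_1}^2\to+\infty$ the spectrum escapes to $+\infty$ and the limit would have empty spectrum, an easy contradiction; this is essentially your idea, correctly executed. But if $E_n-\|v_n\|_{\fb_1}^2$ has a finite subsequential limit (so one may take $E_n=\|v_n\|_{\fb_1}^2$), the contradiction requires the nontrivial input of \cref{lem:ResConvNonRen} (Dam--M{\o}ller): the \emph{norm} resolvent convergence of the dressed operator $W_n^*(\eps(-1)^{\dG(1)}+\dG(\omega)+\ph(v_n)+\|v_n\|_{\fb_1}^2)W_n\to\dG(\omega)$ for every $\eps$, which forces any strong resolvent limit $H$ to satisfy $(H+\ii)^{-1}(-1)^{\dG(1)}(H+\ii)^{-1}=0$, contradicting injectivity. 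Your proposal does not contain this ingredient, and without it the argument does not close.
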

\begin{rem}
	We have assumed that $\psi$ is also an eigenvector of the approximating sequence. This seems to be a reasonable assumption and is, for example, satisfied for any cut-off approximation of the form $V_{\sfD,n}(k) = \chi_n(k)V_\sfD(k)$.
	Furthermore, the lower bound on $\braket{\psi,B_n\psi}$ essentially states that the ground state energy of the approximating Hamiltonians after addition of the correction $B_n$ should be uniformly bounded from below. This reflects the fact that the studied renormalization schemes should correspond to an appropriate self-energy renormalization.
\end{rem}
The proof of \cref{thm:nonren} is presented in \cref{subsec:nonren}.

\section{IBC for 2-Nilpotent Interactions}\label{sec:IBC}
In this \lcnamecref{sec:IBC}, we consider the operators defined in \cref{def:T,def:B}. In \cref{subsec:relbound}, we provide the relevant (relative) bounds and prove continuity w.r.t the form factors, which will be important for approximation by regularized models.
In \cref{subsec:nilreg}, we then show how they relate to \cref{eq:Tsimple} by normal-ordering and prove \cref{thm:nilpotent}.
\subsection{(Relative) Bounds and Continuity of \texorpdfstring{$T_{F,\lambda}$}{T} and \texorpdfstring{$G_{F,\lambda}$}{G}}
\label{subsec:relbound}
Let us start, by considering the two summands of $T_{F,\lambda}$.
\begin{prop}\label{prop:relbound}
	Let $\ell\in\{0,1\}$ and $s\in[1,2]$. Then, for all $F_1,F_2\in\fb_{s}$ and $\lambda>0$,  we have  that $\sD(\Theta_{\ell}(F_1,\lambda))\cap\sD(\Theta_{\ell}(F_2,\lambda))\supset \sD(\dG(\omega)^s)$ and for $\psi\in\sD(\dG(\omega)^s)$
	\begin{align*}
		&\big\|\big(\Theta_{\ell}(F_1,\lambda)-\Theta_{\ell}(F_2,\lambda)\big)\psi\big\| \le (\|F_1\|_{\fb_s}+\|F_2\|_{\fb_s}) \|F_1-F_2\|_{\fb_s}\|(\dG(\omega)+\lambda)^{ s -1}\psi\|.
	\end{align*}
\end{prop}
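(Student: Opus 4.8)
The plan is to prove the two estimates by direct computation on the $n$-boson sectors, exploiting the bilinearity of $\Theta_\ell$ in its first argument. First I would observe that the map $F\mapsto\Theta_\ell(F,\lambda)$ is, by inspection of \cref{def:T}, quadratic in $F$ (it involves one factor $F^*$ and one factor $F$), so that
\begin{align*}
  \Theta_\ell(F_1,\lambda)-\Theta_\ell(F_2,\lambda)
  = \tfrac12\,\Xi_\ell(F_1+F_2,F_1-F_2)+\tfrac12\,\Xi_\ell(F_1-F_2,F_1+F_2),
\end{align*}
where $\Xi_\ell(G,H)$ denotes the polarization of $\Theta_\ell$, i.e.\ the expression obtained from $\Theta_\ell$ by replacing the left factor $F^*$ by $G^*$ and the right factor $F$ by $H$. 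Thus it suffices to bound $\|\Xi_\ell(G,H)\psi\|\le\|G\|_{\fb_s}\|H\|_{\fb_s}\|(\dG(\omega)+\lambda)^{s-1}\psi\|$ for $\psi\in\sD(\dG(\omega)^s)$, and the claimed inequality follows by the triangle inequality and $\|F_1\pm F_2\|_{\fb_s}\le\|F_1\|_{\fb_s}+\|F_2\|_{\fb_s}$ — actually one gets the slightly sharper symmetric bound, which implies the stated one.

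For $\ell=0$: on the $n$-boson sector, $\Theta_0(F,\lambda)$ acts (using that $\dG(\omega)$ commutes with the $k$-integrand up to the resolvent shift) as multiplication-type operator; writing it out, $\Xi_0(G,H)$ restricted to the $n$-sector is $\int_\cM G(k)^*\,\omega(k)^{-1}(\dG^{(n)}(\omega)+\lambda)(\dG^{(n)}(\omega)+\omega(k)+\lambda)^{-1}H(k)\,\d\mu(k)$ acting fiberwise on $\HS_\sfs$. Since $0\le\frac{\dG^{(n)}(\omega)+\lambda}{\dG^{(n)}(\omega)+\omega(k)+\lambda}\le 1$ and, on the $n$-sector, $\dG^{(n)}(\omega)+\lambda\le(\dG^{(n)}(\omega)+\lambda)^s\lambda^{1-s}\le\ldots$ — more carefully, I bound $\omega(k)^{-1}(\dG^{(n)}(\omega)+\lambda)(\dG^{(n)}(\omega)+\omega(k)+\lambda)^{-1}\le\omega(k)^{-1}\wedge(\dG^{(n)}(\omega)+\lambda)\,(\dG^{(n)}(\omega)+\lambda)^{-1}$ is not quite right; instead use $(\dG^{(n)}(\omega)+\lambda)(\dG^{(n)}(\omega)+\omega(k)+\lambda)^{-1}\le\min\{1,(\dG^{(n)}(\omega)+\lambda)/\omega(k)\}$, hence $\omega(k)^{-1}(\dG^{(n)}(\omega)+\lambda)(\dG^{(n)}(\omega)+\omega(k)+\lambda)^{-1}\le\omega(k)^{-1}\min\{1,(\dG^{(n)}(\omega)+\lambda)/\omega(k)\}\le\omega(k)^{-s}(\dG^{(n)}(\omega)+\lambda)^{s-1}$ for $s\in[1,2]$ (interpolating the two cases of the minimum, since $s-1\in[0,1]$). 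This gives the $\omega^{-s}$ weight defining $\|\cdot\|_{\fb_s}$ together with the operator $(\dG(\omega)+\lambda)^{s-1}$, and Cauchy–Schwarz in the $k$-integral (over $\HS_\sfs$-valued functions, using $\|\int G(k)^*M(k)H(k)\d\mu\|\le\big(\int\|G(k)\|^2\omega^{-s}\d\mu\big)^{1/2}\big(\int\|M(k)^{1/2}H(k)\|^2\omega^s\,\text{(reweighted)}\big)^{1/2}$ type splitting) yields the bound; one has to be a little careful to split the weight $\omega^{-s}$ symmetrically between $G$ and $H$ and absorb the $(\dG(\omega)+\lambda)^{s-1}$ onto $\psi$.

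For $\ell=1$: here $\Xi_1(G,H)=\int_{\cM^2}\ad_q\,G(r)^*(\dG(\omega)+\omega(r)+\omega(q)+\lambda)^{-1}H(q)\,a_r\,\d\mu(q)\d\mu(r)$, a genuine "off-diagonal" operator moving one boson from mode $r$ to mode $q$. On the $n$-sector I would write its action explicitly on $\psi^{(n)}$, then estimate $\|\Xi_1(G,H)\psi\|$ by pairing with an arbitrary $\phi\in\HS$, inserting $1=(\dG+\cdots+\lambda)^{(1-s)/2}(\dG+\cdots+\lambda)^{(s-1)/2}$ around the resolvent and distributing the weights: bound $\ad_q$ by the weight $\omega(q)^{-s/2}$-factor it produces against $G$, bound $a_r$ similarly against $H$ and $\psi$, and use that the resolvent $(\dG(\omega)+\omega(r)+\omega(q)+\lambda)^{-1}$ times $(\dG(\omega)+\lambda)^{s-1}$-type powers is controlled (for $s\le 2$) by $\omega(r)^{-(2-s)/2}\omega(q)^{-(2-s)/2}$ on the appropriate sectors — this is the standard "$p_i\omega_i^{-1}\le\ldots$" lemma used in IBC/Nelson-type estimates. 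The two Cauchy–Schwarz applications (one in $q$, one in $r$) then reproduce $\|G\|_{\fb_s}\|H\|_{\fb_s}$ and leave $\|(\dG(\omega)+\lambda)^{s-1}\psi\|$. Finally, the statement that $\sD(\dG(\omega)^s)\subset\sD(\Theta_\ell(F_i,\lambda))$ is exactly the content of these bounds with $F_1=F_i$, $F_2=0$ (the bound shows the a priori densely defined integral extends to a bounded operator from $\sD(\dG(\omega)^s)$ with the graph norm of $(\dG(\omega)+\lambda)^{s-1}$).

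The main obstacle I expect is the $\ell=1$ case: getting the combinatorial bookkeeping of the $\sqrt{n}$-factors from $a_r,\ad_q$ to match the shifts in $\dG^{(n)}(\omega)$ correctly, and choosing the split of the two weights $\omega(r)^{-s},\omega(q)^{-s}$ between the creation/annihilation factors and the resolvent so that exactly $(\dG(\omega)+\lambda)^{s-1}$ — and no higher power — lands on $\psi$. For $s=2$ this is the borderline case where the resolvent provides no extra decay in $\omega(r),\omega(q)$ beyond what $\ad_q,a_r$ supply, and one must track constants (indeed this is why the hypothesis $\|F\|_{\fb_2}<\tfrac12$ appears later in \cref{def:HIBCnew}); for $s<2$ there is slack. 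I would handle the general $s\in[1,2]$ uniformly by the interpolation inequality $(\dG+\omega(r)+\omega(q)+\lambda)^{-1}\le(\dG+\omega(r)+\omega(q)+\lambda)^{-(2-s)/2}(\dG+\lambda)^{-(s-1)}(\dG+\omega(r)+\omega(q)+\lambda)^{-(2-s)/2}$ isn't quite an operator inequality as written, so more carefully I would split powers on either side of the (positive, commuting) resolvent and use $(\dG+\omega(r)+\omega(q)+\lambda)^{-1}\le\omega(r)^{-\theta}\omega(q)^{-\theta'}(\dG+\lambda)^{\theta+\theta'-1}$ with $\theta+\theta'=2-s$ chosen symmetrically ($\theta=\theta'=(2-s)/2$), valid as an operator inequality on each $n$-sector for $\theta,\theta'\in[0,1]$ and $\theta+\theta'\le 1$.
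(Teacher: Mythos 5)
Your proposal is essentially the same strategy as the paper's proof: bound the operator-valued spectral kernels of $\Theta_0$ and $\Theta_1$ by interpolation between the two extreme exponents $s=1$ and $s=2$, reduce the $\ell=1$ case to a double Cauchy--Schwarz after pairing with an arbitrary test vector, and use the pull-through formula to absorb the resolvent weights. The one genuine structural difference is that you handle the two-argument difference $\Theta_\ell(F_1)-\Theta_\ell(F_2)$ by an explicit polarization identity, whereas the paper proves the bound for $F_2=0$ and asserts that the extension is obvious; your variant is arguably cleaner bookkeeping and yields the slightly sharper symmetric constant $\|F_1+F_2\|_{\fb_s}\|F_1-F_2\|_{\fb_s}$ directly.

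One point worth flagging in the $\ell=1$ case: the last interpolation you propose, extracting \emph{bare} powers $\omega(r)^{-(2-s)/2}\omega(q)^{-(2-s)/2}(\dG(\omega)+\lambda)^{s-1}$ symmetrically from the resolvent, does not combine as smoothly with the two Cauchy--Schwarz steps as you'd like, because the $\fb_s$-norm requires a factor $\omega^{-s/2}$ on each of the form-factor integrands and the bare $\omega^{-(2-s)/2}$ is the wrong power; and it puts $(\dG+\lambda)^{s-1}$ in a place where it cannot be moved to $\psi$ without interfering with pull-through. The cleaner (and the paper's) choice is the \emph{asymmetric} split of the full resolvent into $(\dG(\omega)+\omega(p)+\lambda)^{-s/2}$ and $(\dG(\omega)+\omega(q)+\lambda)^{s/2-1}$, keeping the $\dG$-dependence inside both factors: pull-through then converts these into $(\dG(\omega)+\lambda)^{-s/2}$ next to $\Phi$ (giving $\|\Phi\|$ after the $\omega^{s/2}$-weighted Cauchy--Schwarz) and $(\dG(\omega)+\lambda)^{s/2-1}$ next to $\Psi$ (giving $\|(\dG(\omega)+\lambda)^{s-1}\Psi\|$). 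Your earlier, vaguer description of inserting powers "around the resolvent" and using the standard IBC/Nelson-type lemma is compatible with this correct split, so the gap is more one of precision in the final sentence than of missing an idea; but if you carry out the computation with the literally-stated symmetric bare-$\omega$ factorization, the weights won't balance.
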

\begin{proof}
	For notational convenience, we only treat the case $F_2=0$, but the extension to arbitrary $F_2$ is obvious.
	
	Let us first consider the case $\ell=0$.
	The corresponding $\cB(\HS_\sfs)$-valued integral kernel is
	\begin{align*}
		\theta_{0,\lambda}(E,k) = F_1(k)^*\frac{E}{(E+\omega(k)+\lambda)\omega(k)}F_1(k).
	\end{align*}
	Integrating with respect to $k$ and applying the triangle inequality, we find
	\begin{align*}
		\left\|\int_{\cM}\theta_{0,\lambda}(E,k)\d\mu(k)\right\|_{\cB(\HS_\sfs)}
		&\le
		\int_{\cM} \frac{\|F_1(k)\|_{\cB(\HS_\sfs)}^2}{(E+\omega(k)+\lambda)^{s-1}\omega(k)} \frac{E}{(E+\omega(k)+\lambda)^{2-s}} \d\mu(k)
		\\
		&\le
		\|F_1\|_{\fb_s}^2(E+\lambda)^{s-2}E.
	\end{align*}
	Inserting $E=\dG(\omega)$ and applying the spectral theorem yields the statement for $\ell=0$.
	
	We move to the case $\ell=1$.
	Again we can first consider the integral kernel
	\begin{align*}
		\theta_{1,\lambda}(E,p,q) & = {F_1}(p)^*(E+\omega(p)+\omega(q)+\lambda)^{-1}F_1(q)
	\end{align*}
	and estimate
	\begin{align*}
		\|\theta_{1,\lambda}(E,p,q)\|_{\cB(\HS_\sfs)} \le (E+\omega(p)+\lambda)^{-s/2}\|F_1(p)\|_{\cB(\HS_\sfs)}(E+\omega(q)+\lambda)^{s/2-1}\|F_1(q)\|_{\cB(\HS_\sfs)}.
	\end{align*}
	Inserting this into \cref{def:T}, using that $F_1$ and $\dG(\omega)$ commute and twice applying the Cauchy--Schwarz inequality,
	we find
	\begin{align}
		\abs{\braket{\Phi,\Theta_{1,\lambda}\Psi}}
		&= 
		\int_{\cM^2} \abs{\braket{a_p\Phi,\theta_{1,\lambda}(\dG(\omega),p,q)a_q\Psi}}\d\mu(p)\d\mu(q)
		\nonumber\\
		&   
		\label{eq:productestimate}
		\begin{aligned}
			&\le\int_{\cM} \norm{F_1(p)}_{\cB(\HS_\sfs)}\norm{(\dG(\omega)+\omega(p)+\lambda)^{-s/2}a_p\Phi}\d\mu(p)
			\\&\qquad \times
			\int_{\cM} \norm{F_1(q)}_{\cB(\HS_\sfs)}\norm{(\dG(\omega)+\omega(q)+\lambda)^{s/2-1}a_q\Psi}\d\mu(q)
		\end{aligned}
	\end{align}
	Now note that the Cauchy--Schwarz inequality also yields
	\begin{align}\label{eq:a-sbound}
		\int_{\cM} \norm{F_1(p)}_{\cB(\HS_\sfs)}  \norm{a_p\Psi}\d\mu(p)
		\le \norm{F_1}_{\fb_s} \Big(\int_{\cM} \omega(p)^s\norm{a_p\Psi}^2 \d p\Big)^{1/2} \le \norm{F_1}_{\fb_s}\norm{\dG(\omega)^{s/2}\Psi}
	\end{align}
	Combining \cref{eq:productestimate,eq:a-sbound} as well as  the pull-through formula \[(\dG(\omega)+\omega(p)+\lambda)^{\alpha}a_p\Phi = a_p(\dG(\omega)+\lambda)^{\alpha}\Phi, \qquad \alpha\in\IR,\]
	we find
	\begin{align*}
		\abs{\braket{\Phi,\Theta_{1,\lambda}\Psi}} \le \|F_1\|_{\fb_s}^2\|(\dG(\omega)+\lambda)^{s-1}\Psi\|\|\Phi\|, \quad \Psi \in \sD(\dG(\omega)^{s-1}),\quad \Phi\in\HS.
	\end{align*}
	Since $\Phi$ is arbitrary, this proves the statement.
\end{proof}
This allows us to treat $T_{F,\lambda}$ as a small perturbation of $\dG(\omega)+\lambda$.
	\begin{cor}\label{cor:ProvCorPositvity}
	Let $F\in\fb_s$ for some $s\le2$. If $s<2$ or $\norm{F}_{\fb_2}<\tfrac12$, then the operator $\dG(\omega)-T_{F,\lambda}+\lambda$ with domain $\sD(\dG(\omega))$  is selfadjoint and lower-semibounded  for all $\lambda>0$.
	\end{cor}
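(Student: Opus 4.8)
\emph{Plan.} The plan is to write $\dG(\omega)+\lambda-T_{F,\lambda}$ as a relatively bounded perturbation of $\dG(\omega)+\lambda$ and to apply the Kato--Rellich theorem. The operator $\dG(\omega)+\lambda$ is selfadjoint on $\sD(\dG(\omega))$ and bounded below by $\lambda>0$, so everything reduces to showing that $T_{F,\lambda}$ is symmetric and $\dG(\omega)$-bounded with relative bound strictly less than $1$. Throughout we may assume $s\in[1,2]$, which is the range for which \cref{def:T} applies (when $F=F_>$ and $s<1$ one replaces $s$ by $1$ using \cref{rem:bsinclusion}).

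First I would verify that $T_{F,\lambda}$ is symmetric on $\sD(\dG(\omega)^s)$. The term $\Theta_0(F,\lambda)$ is a $\mu$-integral of expressions $F(k)^*M(k)F(k)$ with $M(k)=\omega(k)^{-1}(\dG(\omega)+\lambda)(\dG(\omega)+\omega(k)+\lambda)^{-1}\ge0$ selfadjoint, so $\Theta_0(F,\lambda)$ is symmetric (indeed nonnegative). For $\Theta_1(F,\lambda)$, taking the formal adjoint of the kernel $\ad_q F(r)^*(\dG(\omega)+\omega(q)+\omega(r)+\lambda)^{-1}F(q)a_r$ and then relabelling the dummy integration variables $q\leftrightarrow r$ reproduces the same expression, since the resolvent factor is selfadjoint and symmetric in $q,r$; hence $\Theta_1(F,\lambda)$, and therefore $T_{F,\lambda}$, is symmetric.

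Next comes the relative bound, which is where \cref{prop:relbound} does the work. Applying that proposition with $F_1=F$, $F_2=0$ to $\Theta_0$ and $\Theta_1$ separately and adding yields $\norm{T_{F,\lambda}\psi}\le 2\norm{F}_{\fb_s}^2\norm{(\dG(\omega)+\lambda)^{s-1}\psi}$ for $\psi\in\sD(\dG(\omega)^s)$. If $s=2$ the right-hand side is $2\norm{F}_{\fb_2}^2\norm{(\dG(\omega)+\lambda)\psi}\le 2\norm{F}_{\fb_2}^2(\norm{\dG(\omega)\psi}+\lambda\norm{\psi})$, so the hypothesis $\norm{F}_{\fb_2}<\tfrac12$ gives a relative bound $2\norm{F}_{\fb_2}^2<\tfrac12<1$. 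If $s<2$, then $(\dG(\omega)+\lambda)^{s-1}$ is bounded (for $s\le1$) or infinitesimally $\dG(\omega)$-bounded (for $1<s<2$, via the elementary inequality $t^{s-1}\le\eps t+C_\eps$ on $t\ge0$ and the spectral theorem), so $T_{F,\lambda}$ has relative bound $0$.

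Finally, since $\sD(\dG(\omega)^s)$ contains a core of $\dG(\omega)$ — for instance the union of spectral subspaces $\bigcup_{n\in\IN}\ran\chr_{\{\dG(\omega)<n\}}$ — the bound above lets $T_{F,\lambda}$ be extended uniquely to a symmetric operator defined on all of $\sD(\dG(\omega))$ with the same relative bound, and the Kato--Rellich theorem then gives selfadjointness on $\sD(\dG(\omega))$ and lower-semiboundedness for every $\lambda>0$. I do not anticipate a genuine obstacle: the only point needing a little care is this last bookkeeping step, namely that \cref{prop:relbound} a priori controls $T_{F,\lambda}$ only on $\sD(\dG(\omega)^s)$, which is strictly smaller than $\sD(\dG(\omega))$ when $s>1$, so one must pass to the $\dG(\omega)$-bounded extension before invoking Kato--Rellich.
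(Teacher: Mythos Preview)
Your proposal is correct and follows essentially the same route as the paper: bound $T_{F,\lambda}$ via \cref{prop:relbound} and apply Kato--Rellich, using the infinitesimal $\dG(\omega)$-bound of $(\dG(\omega)+\lambda)^{s-1}$ for $s<2$. You are simply more explicit than the paper about two points it glosses over---the symmetry of $T_{F,\lambda}$ and the passage from the a priori domain $\sD(\dG(\omega)^s)$ in \cref{prop:relbound} to $\sD(\dG(\omega))$---but neither changes the argument.
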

	\begin{proof}
		\cref{prop:relbound,def:T} give the estimate
		\begin{align*}
			\|T_{F,\lambda}\psi\|\leq 2\|F\|_{\fb_s}^2\|(d\Gamma(\omega)+\lambda)^{s-2}(d\Gamma(\omega)+\lambda)\psi\|\qquad \text{for every }\,\psi\in\sD(d\Gamma(\omega))\,.
		\end{align*}
		The statement in the case $\norm{F}_{\fb_2}<\tfrac12$ thus directly follows from the Kato--Rellich theorem.
		For $s<2$, we additionally employ that $(\dG(\omega)+\lambda)^{s-1}$ is infinitesimally $\dG(\omega)$-bounded.
	\end{proof}
We apply the same procedure as in \cref{prop:relbound} to $G_{F,\lambda}$.
\begin{prop}\label{prop:Bbound}
	Let $s\in[1,2]$ and $F\in\fb_s$. Then, for all $\lambda>0$, we have
	\begin{align*}
		\|G_{F,\lambda}\| \le \|F\|_{\fb_s}\lambda^{(s-2)/2}.
	\end{align*}
\end{prop}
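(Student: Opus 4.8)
The plan is to estimate the operator norm of $G_{F,\lambda}=\int_\cM F(k)^* a_k(\dG(\omega)+\lambda)^{-1}\d\mu(k)$ by testing against arbitrary vectors $\Phi,\Psi\in\HS$ and mimicking the argument in the proof of \cref{prop:relbound}. Writing the pairing as
\begin{align*}
	\abs{\braket{\Phi,G_{F,\lambda}\Psi}}
	= \abs*{\int_\cM \braket{a_k\Phi,\, F(k)^*(\dG(\omega)+\lambda)^{-1}\Psi}\d\mu(k)},
\end{align*}
(where I have moved $F(k)^*$, a bounded operator on $\HS_\sfs$ commuting with $\dG(\omega)$, through to the right), I would split $\omega(k)^{-1}$ into $\omega(k)^{-s/2}\cdot\omega(k)^{s/2}$ as usual. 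First, via Cauchy--Schwarz in the $k$-integral together with the identity $\int_\cM\omega(k)\norm{a_k\Phi}^2\d\mu(k)=\norm{\dG(\omega)^{1/2}\Phi}^2$ — more generally $\int_\cM\omega(k)^s\norm{a_k\Phi}^2\d\mu(k)=\norm{\dG(\omega)^{s/2}\Phi}^2$ — exactly as in \cref{eq:a-sbound}, one gets a factor $\norm{F}_{\fb_s}\,\norm{\dG(\omega)^{s/2}\Phi}$ from the $a_k\Phi$ part. (Strictly one should insert the pull-through $\norm{a_k\Phi}$-type bound carefully; here it suffices to bound $\norm{F(k)}_{\cB(\HS_\sfs)}\omega(k)^{-s/2}$ in the $L^2(\d\mu)$ norm and $\omega(k)^{s/2}\norm{a_k\Phi}$ in the $L^2(\d\mu)$ norm.) The remaining factor is $\bigl(\int_\cM \omega(k)^{s}\norm{a_k(\dG(\omega)+\lambda)^{-1}\Psi}^2\d\mu(k)\bigr)^{1/2}$... but this is not quite symmetric, so I would instead keep $a_k$ on the $\Phi$ side only.

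Concretely: estimate $\norm{F(k)}\le \omega(k)^{s/2}\cdot\omega(k)^{-s/2}\norm{F(k)}$, apply Cauchy--Schwarz to pull out $\norm{F}_{\fb_s}$, and bound the rest by
\begin{align*}
	\Bigl(\int_\cM \omega(k)^{s}\,\norm{a_k\Phi}\,\norm{(\dG(\omega)+\lambda)^{-1}\Psi}\,\d\mu(k)\Bigr)
\end{align*}
— no, the cleanest route is: $\abs{\braket{\Phi,G_{F,\lambda}\Psi}}\le \int_\cM \norm{F(k)}\,\norm{a_k\Phi}\,\norm{(\dG(\omega)+\lambda)^{-1}\Psi}\d\mu(k)$ is too lossy since $(\dG(\omega)+\lambda)^{-1}$ does not see $k$. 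Better: use the pull-through $a_k(\dG(\omega)+\lambda)^{-1}=(\dG(\omega)+\omega(k)+\lambda)^{-1}a_k$, so that
\begin{align*}
	\abs{\braket{\Phi,G_{F,\lambda}\Psi}}
	\le \int_\cM \norm{F(k)}_{\cB(\HS_\sfs)}\,\norm{a_k\Phi}\,\norm{(\dG(\omega)+\omega(k)+\lambda)^{-1}a_k\Psi}\,\d\mu(k).
\end{align*}
Wait—this puts $a_k$ on both sides, which is good. Now bound $(\dG(\omega)+\omega(k)+\lambda)^{-1}\le \omega(k)^{-s/2}(\dG(\omega)+\lambda)^{-(2-s)/2}\lambda^{(s-2)/2}$... the simplest sufficient bound is $(\dG(\omega)+\omega(k)+\lambda)^{-1}\le \omega(k)^{s/2-1}\lambda^{-s/2}$ via $(\dG(\omega)+\omega(k)+\lambda)^{-1}\le \omega(k)^{s/2-1}\lambda^{-s/2}$ (interpolating $(\dG(\omega)+\omega(k)+\lambda)^{-1}\le\omega(k)^{-1}$ with $\le\lambda^{-1}$). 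Then apply Cauchy--Schwarz in $k$ with weights $\omega(k)^{-s/2}\norm{F(k)}$ in $L^2(\d\mu)$ (giving $\norm{F}_{\fb_s}$) against $\omega(k)^{s/2-1}\norm{a_k\Phi}\norm{a_k\Psi}$; then a further Cauchy--Schwarz in the operator/vector pairing, together with $\int\omega(k)^{s-1}\norm{a_k\Phi}^2\d\mu(k)\le\norm{\dG(\omega)^{(s-1)/2}\Phi}^2$ if $s\ge1$... I would double-check the exact exponent bookkeeping so that the final factors combine to $\norm{F}_{\fb_s}\,\lambda^{(s-2)/2}\norm{\Phi}\norm{\Psi}$, using $\dG(\omega)^{(s-1)/2}(\dG(\omega)+\lambda)^{-1/2}\dots$ type bounds; the main point is that all $\dG(\omega)$-powers are controlled by $(\dG(\omega)+\lambda)^{-1}$ picking up negative powers of $\lambda$.

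The only real obstacle is purely bookkeeping: distributing the three negative powers — one $\omega(k)^{-s}$ from $F\in\fb_s$ and one $(\dG(\omega)+\lambda)^{-1}$, split among the $k$-integration weight, the $a_k\Phi$ norm, and the residual $\lambda$-power — so that the residual factors on $\Phi$ and $\Psi$ close to plain norms and the leftover constant is exactly $\lambda^{(s-2)/2}$. Since $\Phi$ is arbitrary this yields $\norm{G_{F,\lambda}\Psi}\le\norm{F}_{\fb_s}\lambda^{(s-2)/2}\norm{\Psi}$, hence the claimed operator-norm bound; I would also record, as in the remark after \cref{def:B}, that when $F\in\fb_0\cap\fb_s$ this integral representation genuinely equals $a(F)(\dG(\omega)+\lambda)^{-1}$, so the bound is consistent with \cref{lem:simple.relbound}.
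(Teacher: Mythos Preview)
Your plan contains a persistent algebraic error that derails the bookkeeping. The operator $G_{F,\lambda}$ carries exactly \emph{one} annihilator $a_k$, and it acts on the $\Psi$ side through the resolvent; pairing against $\Phi$ cannot produce $a_k\Phi$. Already your first displayed identity places $a_k$ on $\Phi$, and after the (correct) pull-through you arrive at an estimate featuring both $\norm{a_k\Phi}$ and $\norm{(\dG(\omega)+\omega(k)+\lambda)^{-1}a_k\Psi}$, which you even flag as ``$a_k$ on both sides, which is good'' --- but this is simply not what the pairing gives. The correct form after pull-through is
\[
\braket{\Phi,G_{F,\lambda}\Psi} = \int_\cM \braket{F(k)\Phi,\,(\dG(\omega)+\omega(k)+\lambda)^{-1}a_k\Psi}\,\d\mu(k),
\]
which bounds by $\norm{\Phi}\int_\cM\norm{F(k)}\,\norm{(\dG(\omega)+\omega(k)+\lambda)^{-1}a_k\Psi}\,\d\mu(k)$ --- with $\norm{\Phi}$, not $\norm{a_k\Phi}$. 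This is why your exponent bookkeeping refuses to close: you are distributing powers across one annihilator too many.

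The paper's proof is two lines and sidesteps all of this. Do not introduce $\Phi$ or the pull-through at all: by the triangle inequality and then \cref{eq:a-sbound} applied with $(\dG(\omega)+\lambda)^{-1}\Psi$ in place of $\Psi$,
\[
\norm{G_{F,\lambda}\Psi}\le \int_\cM\norm{F(k)}_{\cB(\HS_\sfs)}\,\norm{a_k(\dG(\omega)+\lambda)^{-1}\Psi}\,\d\mu(k) \le \norm{F}_{\fb_s}\,\norm{\dG(\omega)^{s/2}(\dG(\omega)+\lambda)^{-1}\Psi},
\]
and the spectral-calculus bound $x^{s/2}(x+\lambda)^{-1}\le\lambda^{(s-2)/2}$ for $x\ge0$, $s\in[1,2]$ finishes the proof.
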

\begin{proof}
	Using the definition \cref{def:B,eq:a-sbound}, we immediately find
	\[
		\norm{G_{F,\lambda}} \le \norm{F}_{\fb_s}\norm{\dG(\omega)^{s/2}(\dG(\omega)+\lambda)^{-1}} \le \|F\|_{\fb_s}\lambda^{(s-2)/2}.\qedhere
	\]
\end{proof}
To prove selfadjointness of the operator defined in \cref{def:HIBCnew}, by \cref{lem:adjointtrans}, we will need $1+G_{F,\lambda}$ to have bounded inverse.
By the Neumann series, the above \lcnamecref{prop:Bbound} provides that $1+G_{F,\lambda}$ has bounded inverse if the right hand side of the upper bound is strictly smaller than 1.
This is the usual argument applied in the IBC approach.

Since we only consider 2-nilpotent interactions here, we can extend this assumption by a simple algebraic argument.
\begin{lem}\label{lem:Gboundedinv}
	For all $F\in\fb_2$ such that $F(k)F(p)=0$ for $\mu$-almost all $k,p\in\cM$, the operators $1+G_{F,\lambda}$ and $1+G_{F,\lambda}^*$ have bounded inverse and
	\begin{align*}
		(1+G_{F,\lambda})^{-1} = 1-G_{F,\lambda}, \qquad (1+G_{F,\lambda}^*)^{-1} = 1-G_{F,\lambda}^*.
	\end{align*}
\end{lem}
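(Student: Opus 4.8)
The plan is to reduce the whole statement to the single algebraic identity $G_{F,\lambda}^2 = 0$. Once this is established, everything else is automatic: by \cref{prop:Bbound} (applied with $s=2$) the operator $G_{F,\lambda}$ is bounded, with $\norm{G_{F,\lambda}}\le\norm{F}_{\fb_2}$, so the computation
\begin{align*}
	(1+G_{F,\lambda})(1-G_{F,\lambda}) = 1 - G_{F,\lambda}^2 = 1 = (1-G_{F,\lambda})(1+G_{F,\lambda})
\end{align*}
shows that $1+G_{F,\lambda}$ is boundedly invertible with inverse $1-G_{F,\lambda}$. Since $(G_{F,\lambda}^*)^2 = (G_{F,\lambda}^2)^* = 0$ as well, the same computation with $G_{F,\lambda}^*$ in place of $G_{F,\lambda}$ (or simply taking adjoints of the displayed identities) yields the corresponding statement for $1+G_{F,\lambda}^*$. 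Note that this argument makes no use of a Neumann series and hence does not need any smallness of $\norm{F}_{\fb_2}$, which is exactly the point of the lemma.

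It therefore remains to see why $G_{F,\lambda}^2 = 0$. Formally, using the integral representation \cref{def:B},
\begin{align*}
	G_{F,\lambda}^2 = \int_{\cM^2} F(k)^* a_k (\dG(\omega)+\lambda)^{-1} F(p)^* a_p (\dG(\omega)+\lambda)^{-1}\,\d\mu(k)\,\d\mu(p),
\end{align*}
and since $F(p)^*\in\cB(\HS_\sfs)$ acts only on the spin factor, it commutes with the Fock-space operators $a_k$ and $(\dG(\omega)+\lambda)^{-1}$. Moving $F(p)^*$ to the left produces the prefactor $F(k)^*F(p)^* = \big(F(p)F(k)\big)^*$. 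The hypothesis $F(k)F(p)=0$ for $\mu^{\otimes 2}$-almost every $(k,p)$ is symmetric under exchanging $k$ and $p$ (by Fubini), so $F(p)F(k)=0$, hence $F(k)^*F(p)^*=0$, for $\mu^{\otimes 2}$-almost every $(k,p)$. Thus the integrand vanishes almost everywhere and $G_{F,\lambda}^2 = 0$.

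The only point that needs care is to make the manipulation with the distributional pointwise operators $a_k$ rigorous, and I would do this exactly as in the proofs of \cref{prop:Bbound,prop:relbound}: test against finite-particle vectors $\Phi,\Psi\in\HS_{\sff\sfi\sfn}$, write $\braket{\Phi,G_{F,\lambda}^2\Psi}$ as the absolutely convergent double integral $\int_{\cM^2}\braket{a_k\Phi,\,F(k)^*(\dG(\omega)+\lambda)^{-1}F(p)^*a_p(\dG(\omega)+\lambda)^{-1}\Psi}\,\d\mu(k)\,\d\mu(p)$ (absolute convergence following from the Cauchy--Schwarz bound \cref{eq:a-sbound} together with the boundedness of $G_{F,\lambda}$), commute $F(p)^*$ past the Fock-space operators inside the integrand, and use $F(k)^*F(p)^* = 0$ a.e. to conclude the integrand is $0$ a.e., so the matrix element vanishes for all $\Phi,\Psi$ in a dense set. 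This bookkeeping with the pointwise operators is the main (and rather mild) obstacle; the substance of the statement is the one-line algebra of the previous paragraph.
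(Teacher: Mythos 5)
Your argument is correct and follows exactly the paper's (very short) proof: the nilpotency of $F$ yields $G_{F,\lambda}^2=0$ by the computation you give, from which boundedness of the inverse $1-G_{F,\lambda}$ is immediate, and the statement for the adjoint follows by taking adjoints. You simply spell out the $G_{F,\lambda}^2=0$ step in more detail than the paper does.
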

\begin{proof}
	The nilpotency assumption and the definition \cref{def:B} directly imply $G_{F,\lambda}^2 = 0$ and thus the first statement. The second follows by taking adjoints.
\end{proof}
	Let us summarize the obtained selfadjointness statement
	and  complement it with a continuity result.
	\begingroup
	\renewcommand{\HIBCl}{\Xi_\lambda}
	\begin{prop}
		\label{prop:HIBC}
		Let $F,V\in \fb_2$ and either $\|V\|_{\fb_s}<\infty$ for some $s<2$ or $\|V\|_{\fb_2}<\frac{1}{2}$. Then, for all $\lambda>0$ such that $1+G_{F,\lambda}$ has bounded inverse, the operator 
		\begin{align}
			\label{def:HIBC}
			\HIBCl(F,V)\coloneqq (1+G_{F,\lambda})(\dG(\omega) + \lambda - T_{V,\lambda})(1+G_{F,\lambda}^*)
		\end{align}
		is selfadjoint and lower-semibounded. Further, if $V_n\xrightarrow{\fb_2}V$ and $F_n\xrightarrow{\fb_2}F$ such that $G_{F_n,\lambda}$ has bounded inverse for all $n\in\IN$, then
	$\HIBCl(F_n,V_n)$
	converges to $\HIBCl(F,V)$ in the norm resolvent sense. 
	\end{prop}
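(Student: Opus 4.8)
The plan is to separate the purely algebraic claim (selfadjointness and semiboundedness) from the analytic one (norm resolvent convergence), and to reduce the latter to a single-point resolvent estimate. Throughout write $S:=1+G_{F,\lambda}$, $C:=\dG(\omega)+\lambda-T_{V,\lambda}$, and $S_n,C_n$ for the same expressions with $F,V$ replaced by $F_n,V_n$. For the algebraic part, \cref{cor:ProvCorPositvity} gives that $C$ on $\sD(\dG(\omega))$ is selfadjoint and lower-semibounded --- this is exactly the point where the dichotomy ``$s<2$ or $\|V\|_{\fb_2}<\tfrac12$'' enters. Since $S$ is bounded with bounded inverse by hypothesis, the transformation lemma \cref{lem:adjointtrans} applied to $C$ and $S$ shows that $\HIBCl(F,V)=SCS^{*}$ is selfadjoint on $(S^{*})^{-1}\sD(\dG(\omega))=(1+G_{F,\lambda}^{*})^{-1}\sD(\dG(\omega))$, and lower-semiboundedness follows from $\langle x,SCS^{*}x\rangle\ge(\inf\sigma(C))\,\|S^{*}x\|^{2}\ge-|\inf\sigma(C)|\,\|S\|^{2}\|x\|^{2}$.

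For the convergence, the key algebraic identity is that, with $B:=(S^{*}S)^{-1}$ and $B_n:=(S_n^{*}S_n)^{-1}$, one has for every $z$ in the respective resolvent sets
\[
\big(\HIBCl(F,V)-z\big)^{-1}=(S^{*})^{-1}(C-zB)^{-1}S^{-1},\qquad\big(\HIBCl(F_n,V_n)-z\big)^{-1}=(S_n^{*})^{-1}(C_n-zB_n)^{-1}S_n^{-1},
\]
because $SCS^{*}-z=S\big(C-z(S^{*}S)^{-1}\big)S^{*}$. First I would note that $\|G_{F_n,\lambda}-G_{F,\lambda}\|=\|G_{F_n-F,\lambda}\|\le\|F_n-F\|_{\fb_2}\to0$ by \cref{prop:Bbound}, so that $S_n\to S$, $S_n^{-1}\to S^{-1}$, $(S_n^{*})^{-1}\to(S^{*})^{-1}$ and $B_n\to B$ in operator norm, all uniformly bounded for large $n$ and with $B_n\ge c>0$ uniformly. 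Next I would fix a real $z_0<0$ with $|z_0|$ large enough that $C_n-z_0B_n\ge 1$ uniformly in $n$ (possible because $C_n\ge-M$ uniformly --- the bound in \cref{cor:ProvCorPositvity} being uniform in $n$ since $\lambda$ is fixed and the relevant norms of $V_n$ converge --- and $B_n\ge c$), so that $z_0$ lies in the resolvent set of all the operators involved and it suffices to prove norm resolvent convergence at $z_0$. Telescoping the two factored resolvents above reduces this to $\|(C_n-z_0B_n)^{-1}-(C-z_0B)^{-1}\|\to0$, which by the second resolvent identity equals
\[
\big\|(C_n-z_0B_n)^{-1}\big[(T_{V_n,\lambda}-T_{V,\lambda})-z_0(B-B_n)\big](C-z_0B)^{-1}\big\|;
\]
the $B$-term is $O(\|B-B_n\|)\to0$, while by \cref{prop:relbound} the $T$-term is bounded by $2(\|V_n\|_{\fb_2}+\|V\|_{\fb_2})\,\|V_n-V\|_{\fb_2}\,\|(\dG(\omega)+\lambda)(C-z_0B)^{-1}\|\to0$.

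The hard part will be the uniform bound $\|(\dG(\omega)+\lambda)(C-z_0B)^{-1}\|\le K$ (and the analogue with $C_n,B_n$, uniform in $n$). I would obtain it from the decomposition $\dG(\omega)+\lambda=(C-z_0B)+T_{V,\lambda}+z_0B$, which gives
\[
(\dG(\omega)+\lambda)(C-z_0B)^{-1}=1+T_{V,\lambda}(C-z_0B)^{-1}+z_0B(C-z_0B)^{-1},
\]
and then absorb $T_{V,\lambda}(C-z_0B)^{-1}$ to the left using the relative bound underlying \cref{cor:ProvCorPositvity} (strictly below $\tfrac12$ when $\|V\|_{\fb_2}<\tfrac12$, infinitesimal when $s<2$) together with $\|B(C-z_0B)^{-1}\|\le\|B\|\,(|z_0|c)^{-1}$; the same computation for $C_n,B_n$ is uniform in $n$ since the norms of $V_n$ converge. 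In the critical case where $V\in\fb_2$ only, this is where $\|V\|_{\fb_2}<\tfrac12$ is genuinely used; when $s<2$ one additionally invokes the $\fb_s$-bounds on the $V_n$ available in the intended application \cref{mainthm} to apply \cref{cor:ProvCorPositvity} to the approximants, whereas the $T$-difference estimate itself is always taken at $s=2$ and uses only $\fb_2$-convergence of $(V_n)$. Everything else --- the telescoping inequality combining $\|(S_n^{*})^{-1}-(S^{*})^{-1}\|\to0$, $\|(C_n-z_0B_n)^{-1}-(C-z_0B)^{-1}\|\to0$ and $\|S_n^{-1}-S^{-1}\|\to0$ with uniform bounds on the remaining factors --- is routine.
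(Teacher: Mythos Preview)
Your proof is correct. The selfadjointness and semiboundedness part coincides with the paper's; for the convergence, however, you take a genuinely different route. The paper telescopes through the intermediate operator $\Xi_\lambda(F,V_n)$, handling the $V$-change by the ordinary resolvent identity --- bounding $(\dG(\omega)+\lambda)(1+G_{F,\lambda}^*)(\Xi_\lambda(F,V)+\ii)^{-1}$ via the closed graph theorem rather than your explicit absorption argument --- and the $F$-change by the dedicated resolvent identity for transformed operators, \cref{lem:residentity}. You instead factor the full resolvent as $(S^*)^{-1}(C-z_0B)^{-1}S^{-1}$ through the identity $S(S^*S)^{-1}S^*=\Id$, which packages both dependencies into a single second-resolvent computation on the inner operator $C-z_0B$ and avoids \cref{lem:residentity} entirely. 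The paper's approach is more modular and only needs the $\dG(\omega)$-bound at the fixed limit $V$; your approach additionally requires the uniform lower bound $C_n-z_0B_n\ge 1$, hence uniform semiboundedness of the $C_n$, which in the case $V\in\fb_s$, $s<2$ (but $\|V\|_{\fb_2}\ge\tfrac12$), genuinely needs the extra $\fb_s$-information on the $V_n$ you flag --- though the paper's proof implicitly shares this assumption, since it too needs $\Xi_\lambda(F,V_n)$ to be selfadjoint. A minor point: in your resolvent identity you actually only need $\|(\dG(\omega)+\lambda)(C-z_0B)^{-1}\|$ at the limit, not the uniform-in-$n$ analogue you announce, so your ``hard part'' is a bit easier than stated.
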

	\begin{proof}
		Selfadjointness and lower-semiboundedness of
		$\HIBCl(F,V)$ immediately follow from \cref{cor:ProvCorPositvity,lem:adjointtrans}.
		
		In order to prove the convergence statement,
		we decompose
		\begin{align}
			\label{eq:resdecomp}
			\begin{aligned}
				&\left(\HIBCl(F_n,V_n)+\ii\right)^{-1}-\left(\HIBCl(F,V)+\ii\right)^{-1}
				\\&= 
				\Big(\left(\HIBCl(F,V_n)+\ii\right)^{-1}-\left(\HIBCl(F,V)+\ii\right)^{-1}\Big)
				 + \Big(\left(\HIBCl(F_n,V_n)+\ii\right)^{-1}-\left(\HIBCl(F,V_n)+\ii\right)^{-1}\Big).
			\end{aligned}
		\end{align}
		We rewrite the first summand on the right hand side using the resolvent identity as
		\begin{align*}
			\begin{aligned}
				&\left(\HIBCl(F,V_n)+\ii\right)^{-1}-\left(\HIBCl(F,V)+\ii\right)^{-1} 
				\\&=
				\left(\HIBCl(F,V_n)+\ii\right)^{-1}(1+G_{F,\lambda})(T_{V,\lambda}-T_{V_n,\lambda})(1+G_{F,\lambda}^*)\left(\HIBCl(F,V)+\ii\right)^{-1}
				\\&
					= \left(\HIBCl(F,V_n)+\ii\right)^{-1}(1+G_{F,\lambda})(T_{V,\lambda}-T_{V_n,\lambda})(\dG(\omega)+\lambda)^{-1}(\dG(\omega)+\lambda)(1+G_{F,\lambda}^*)\left(\HIBCl(F,V)+\ii\right)^{-1}\,.
			\end{aligned}
		\end{align*}
		It is easy to see that $(\dG(\omega)+\lambda)(1+G_{F,\lambda}^*)\left(\HIBCl(F,V)+\ii\right)^{-1}$ is a closed operator defined on all of $\HS$ and therefore, by the closed graph theorem, bounded. Furthermore, by \cref{prop:relbound}
		\[\|(T_{V,\lambda}-T_{V_n,\lambda})(\dG(\omega)+\lambda)^{-1}\|\xrightarrow{n\to\infty} 0.\]
		As $\|\left(\HIBCl(F,V_n)+\ii\right)^{-1}(1+G_{F,\lambda})\|$ is uniformly bounded in $n$,
		this gives
		\begin{align}
			\label{eq:conv.1}
			\Norm{\left(\HIBCl(F,V_n)+\ii\right)^{-1}-\left(\HIBCl(F,V)+\ii\right)^{-1}} \xrightarrow{n\to\infty} 0.
		\end{align}
		To treat the second summand on the right hand side of \cref{eq:resdecomp},
		we first observe that by \cref{prop:Bbound} and $G_{F_n,\lambda}-G_{F,\lambda} = G_{F_n-F,\lambda}$
		the operator $1+G_{F_n,\lambda}$ has bounded inverse for $n$ large enough as well.
		Hence, we can apply the
		resolvent identity \cref{lem:residentity} and find
			\begin{align*}
				&\left(\HIBCl(F_n,V_n)+\ii\right)^{-1}-\left(\HIBCl(F,V_n)+\ii\right)^{-1}
				\\
				& =  \left(\HIBCl(F_n,V_n)+\ii\right)^{-1}(G_{F,\lambda}-G_{F_n,\lambda})\HIBCl(0,V_n)(1+G_{F,\lambda}^*)\left(\HIBCl(F,V_n)+\ii\right)^{-1}
				 \\& \quad
				  + \Big(\HIBCl(0,V_n)(1+G_{F_n,\lambda}^*)\left(\HIBCl(F_n,V_n)-\ii\right)^{-1}\Big)^*(G_{F,\lambda}^*-G_{F_n,\lambda}^*)\left(\HIBCl(F,V_n)+\ii\right)^{-1}.
			\end{align*}
		Now employing the bound
		\[\Norm{\HIBCl(0,V_n)(1+G_{F,\lambda}^*)\left(\HIBCl(F,V_n)\pm\ii\right)^{-1}}\le\norm{(1+G_{F,\lambda})^{-1}},\]
		and the convergence of $G_{F_n,\lambda}$ to $G_{F,\lambda}$ in norm (\cref{prop:Bbound}),
		this yields
		\begin{align}
			\label{eq:conv.2}
			\Norm{\left(\HIBCl(F_n,V_n)+\ii\right)^{-1}-\left(\HIBCl(F,V_n)+\ii\right)^{-1}}
			\xrightarrow{n\to\infty}0.
		\end{align}
	Inserting \cref{eq:conv.1,eq:conv.2} into \cref{eq:resdecomp} proves the statement.
\end{proof}
Let us further analyze some properties of the domain of $\HIBCl(G,F)$.
We here restrict our attention to the two-nilpotent interactions relevant for our application, but emphasize that applications of Neumann series would yield similar statements for the more general case.

The following invariance statement in combination with the standard relative bound \cref{lem:simple.relbound} will allow us to extract the infrared part of the interaction as infinitesimal perturbation of $\HIBCl(G,F)$.
\begin{lem}\label{lem:IBCinv}
	Let $F\in\fb_2$ satisfy $F=F_>$ and $F(k)F(p)=0$ for $\mu$-almost all $k,p\in\cM$.
	Then for all $\lambda>0$ we have that $(1+G_{F,\lambda}^*)^{-1}\sD(\dG(\omega))\subset \sD(\dG(\omega_\le))$
	and
	\[ \|\dG(\omega_\le)\psi\| \le \big(1+\norm{G_{F,\lambda}}\big)^2\Big(1+\|T_{V,\lambda}(\dG(\omega)+\lambda)^{-1}\|\Big) \norm{\HIBCl(F,V)\psi}, \quad \psi\in\sD(\HIBCl(F,V)) \]
	for any $F$, $V$ and $\lambda$ satisfying the assumptions of \cref{prop:HIBC}.
\end{lem}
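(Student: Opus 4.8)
The plan is to pass to the vector $\phi\coloneqq(1+G_{F,\lambda}^*)\psi$ and to exploit that, since $F=F_>$, the operator $G_{F,\lambda}$ creates/annihilates a boson only in the ultraviolet sector $\{\omega>\kappa\}$, which is invisible to $\dG(\omega_\le)$. By \cref{lem:Gboundedinv}, the 2-nilpotency of $F$ gives $(1+G_{F,\lambda}^*)^{-1}=1-G_{F,\lambda}^*$, so the condition $\psi\in\sD(\HIBCl(F,V))=(1+G_{F,\lambda}^*)^{-1}\sD(\dG(\omega))$ is equivalent to $\phi\in\sD(\dG(\omega))$, with $\psi=(1-G_{F,\lambda}^*)\phi$. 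Unfolding the definition \cref{def:HIBC} of $\HIBCl(F,V)$ and inverting the bounded factor $1+G_{F,\lambda}$ on the left yields the identity $(\dG(\omega)+\lambda-T_{V,\lambda})\phi=(1-G_{F,\lambda})\HIBCl(F,V)\psi$, which is the only place $\HIBCl(F,V)\psi$ enters. Combined with the relative bound on $T_{V,\lambda}$ from \cref{prop:relbound} and the boundedness of $G_{F,\lambda}$ from \cref{prop:Bbound}, this controls $\|(\dG(\omega)+\lambda)\phi\|$ in terms of $\|\HIBCl(F,V)\psi\|$.

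Next I would establish the commutation relations $[\dG(\omega_\le),G_{F,\lambda}]=[\dG(\omega_\le),G_{F,\lambda}^*]=0$. On the finite-particle core $\HS_{\sff\sfi\sfn}$ one has $[\dG(\omega_\le),a_k]=-\omega_\le(k)a_k$, and $\omega_\le(k)$ vanishes on $\supp F\subset\{\omega>\kappa\}$, while $\dG(\omega_\le)$ commutes with $(\dG(\omega)+\lambda)^{-1}$; passing to closures and adjoints, $G_{F,\lambda}^*$ then maps $\sD(\dG(\omega_\le))$ into itself with $\dG(\omega_\le)G_{F,\lambda}^*=G_{F,\lambda}^*\dG(\omega_\le)$ there. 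Since $0\le\dG(\omega_\le)\le\dG(\omega)$ with commuting operators forces $\sD(\dG(\omega))\subset\sD(\dG(\omega_\le))$, it follows that $\psi=(1-G_{F,\lambda}^*)\phi\in\sD(\dG(\omega_\le))$ — which is precisely the asserted inclusion $(1+G_{F,\lambda}^*)^{-1}\sD(\dG(\omega))\subset\sD(\dG(\omega_\le))$ — and that $\dG(\omega_\le)\psi=(1-G_{F,\lambda}^*)\dG(\omega_\le)\phi$, hence $\|\dG(\omega_\le)\psi\|\le(1+\|G_{F,\lambda}\|)\,\|\dG(\omega_\le)\phi\|$.

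It then remains to bound $\|\dG(\omega_\le)\phi\|$. Using $\|\dG(\omega_\le)(\dG(\omega)+\lambda)^{-1}\|\le1$ (spectral calculus for the commuting pair $\dG(\omega_\le)\le\dG(\omega)+\lambda$) and the splitting $\dG(\omega)+\lambda=(\dG(\omega)+\lambda-T_{V,\lambda})+T_{V,\lambda}$, one obtains $\dG(\omega_\le)\phi=\dG(\omega_\le)(\dG(\omega)+\lambda)^{-1}\big[(1-G_{F,\lambda})\HIBCl(F,V)\psi+T_{V,\lambda}\phi\big]$; estimating $\|T_{V,\lambda}\phi\|\le\|T_{V,\lambda}(\dG(\omega)+\lambda)^{-1}\|\,\|(\dG(\omega)+\lambda)\phi\|$ via \cref{prop:relbound} and inserting the bound on $\|(\dG(\omega)+\lambda)\phi\|$ obtained in the first paragraph produces the claimed inequality with the constant $(1+\|G_{F,\lambda}\|)^2\big(1+\|T_{V,\lambda}(\dG(\omega)+\lambda)^{-1}\|\big)$.

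The main obstacle I anticipate is the commutation step: because $F$ is only assumed to lie in $\fb_2$ rather than in $\fb_0$, $G_{F,\lambda}$ is not literally $a(F)(\dG(\omega)+\lambda)^{-1}$ with $a(F)$ a densely defined operator, so one has to justify $[\dG(\omega_\le),G_{F,\lambda}^*]=0$ and the domain-preservation property carefully — either through operator-valued distributions or by computing on $\HS_{\sff\sfi\sfn}$ and then closing — and verify that $G_{F,\lambda}^*$ indeed maps $\sD(\dG(\omega))$ into $\sD(\dG(\omega_\le))$. Everything else is the routine IBC bookkeeping already used for \cref{prop:HIBC,lem:Gboundedinv}.
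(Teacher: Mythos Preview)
Your proposal is correct and follows essentially the same route as the paper: pass to $\phi=(1+G_{F,\lambda}^*)\psi$, use the commutation $[\dG(\omega_\le),G_{F,\lambda}^*]=0$ (the paper verifies this on $\HS_{\sff\sfi\sfn}$ and closes using that it is a core, exactly as you anticipated), then combine $\|\dG(\omega_\le)\phi\|\le\|(\dG(\omega)+\lambda)\phi\|$ with the relative bound on $T_{V,\lambda}$ and the identity $(\dG(\omega)+\lambda-T_{V,\lambda})\phi=(1+G_{F,\lambda})^{-1}\HIBCl(F,V)\psi$. The paper's write-up is slightly more streamlined in that it does not separately isolate a bound on $\|(\dG(\omega)+\lambda)\phi\|$, but the ingredients and logic are identical.
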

\begin{proof}
	We apply the commutation relation $[\dG(\omega_\le),G_{F,\lambda}^*]=0$ on $\HS_{\sff\sfi\sfn}$, which can easily be checked by direct calculation (and is a version of the canonical commutation relation $[\dG(A),\ad(f)] = \ad(Af)$). Since $\HS_{\sff\sfi\sfn}$ is a core for $\dG(\omega_\le)$,
	 this yields
	\begin{align*}
		\norm{\dG(\omega_\le)G_{F,\lambda}^*\phi} \le \norm{G_{F,\lambda}^*}\norm{\dG(\omega_\le)\phi} = \norm{G_{F,\lambda}}\norm{\dG(\omega_\le)\phi}, \quad \phi\in\sD(\dG(\omega_\le)).
	\end{align*}
	Thus, by \cref{lem:Gboundedinv}, the claimed domain inclusion holds and
	\begin{align}
		\norm{(\dG(\omega_\le)+\lambda)(1+G_{F,\lambda}^*)^{-1}\phi}
		&\le (1+\norm{G_{F,\lambda}})\norm{(\dG(\omega)+\lambda)\phi}
		\label{eq:relestimate}
		\\&\le (1+\norm{G_{F,\lambda}})\Big(1+\|T_{V,\lambda}(\dG(\omega)+\lambda)^{-1}\|\Big)\norm{(\dG(\omega) + \lambda - T_{V,\lambda})\phi}
		\nonumber.
	\end{align}
	Inserting $\phi = (1+G_{F,\lambda}^*)\psi$ then proves the statement.
\end{proof}
The following simple statement
will allow  us to compare the domain of $\HIBCl(F,V)$ to the (form) domain of $\dG(\omega)$, in turn allowing us to strengthen the convergence statement in \cref{mainthm}.

\begin{lem}\label{lem:Breg-one}
	If $F\in\fb_s$ for $s\in[1,2]$,
	then $\ran G_{F,\lambda}^*\subset \sD(\dG(\omega)^{1-\frac s2})$ for all $\lambda>0$. Further, for $r\le1-\frac s2$ and $\psi\in\sD(\dG(\omega)^{r})$, we have
	\begin{align*}
		\norm{(\dG(\omega)+\lambda)^{r}G_{F,\lambda}^*\psi} \le \norm{F}_{\fb_s}\lambda^{\frac {s}2-1}\norm{(\dG(\omega)+\lambda)^{r}\psi}.
	\end{align*}
\end{lem}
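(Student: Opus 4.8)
The plan is to compute $G_{F,\lambda}^*$ explicitly on finite-particle vectors and then estimate each $n$-boson component by a \emph{weighted} Cauchy--Schwarz inequality over the boson labels, with the weight chosen so as to cancel the resolvent and avoid a spurious number-operator factor. Unravelling \cref{def:B} and using that $a_k$ is the adjoint of $\ad_k$ and that $F(k)\in\cB(\HS_\sfs)$ commutes with $\dG(\omega)$, one gets $G_{F,\lambda}^* = (\dG(\omega)+\lambda)^{-1}\ad(F)$ on $\HS_{\sff\sfi\sfn}$, which on the $n$-boson sector reads
\begin{align*}
	(G_{F,\lambda}^*\psi)^{(n)}(k_1,\dots,k_n) = \frac{1}{\sqrt n\,(\Omega_n(k)+\lambda)}\sum_{i=1}^n F(k_i)\,\psi^{(n-1)}(k_1,\dots,\widehat{k_i},\dots,k_n),\qquad \Omega_n(k)\coloneqq\sum_{j=1}^n\omega(k_j),
\end{align*}
where the resolvent makes each component square-integrable even though in general $F\notin\fb_0$. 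Since $(\dG(\omega)+\lambda)^t$ acts by multiplication with $(\Omega_n(k)+\lambda)^t$, bounding $\|\sum_i v_i\|^2\le\big(\sum_i\omega(k_i)\big)\big(\sum_i\omega(k_i)^{-1}\|v_i\|^2\big)=\Omega_n(k)\sum_i\omega(k_i)^{-1}\|v_i\|^2$ (Cauchy--Schwarz with weights $\omega(k_i)$) and exploiting the permutation symmetry of the resulting integrand (the prefactor $1/n$ cancelling against the $n$ equal summands) yields, for $\psi$ in the core $\HS_{\sff\sfi\sfn}\cap\sD(\dG(\omega)^t)$,
\begin{align*}
	\big\|\big((\dG(\omega)+\lambda)^t G_{F,\lambda}^*\psi\big)^{(n)}\big\|^2 \le \int_{\cM^n}(\Omega_n(k)+\lambda)^{2t-2}\,\frac{\Omega_n(k)}{\omega(k_1)}\,\|F(k_1)\|_{\cB(\HS_\sfs)}^2\,\|\psi^{(n-1)}(k_2,\dots,k_n)\|^2\,\d\mu^{\otimes n}(k).
\end{align*}

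Writing $\Omega_n(k)=\omega(k_1)+\Omega'$ with $\Omega'\coloneqq\sum_{j\ge2}\omega(k_j)$, the proof then reduces to two elementary pointwise inequalities, valid for $a>0$, $b\ge0$, $\lambda>0$, $s\in[1,2]$, namely
\begin{align*}
	\frac{a+b}{a}\,(a+b+\lambda)^{2r-2}\le a^{-s}\lambda^{s-2}(b+\lambda)^{2r}\quad\text{for }r\le1-\tfrac s2,\qquad\text{and}\qquad\frac{a+b}{a}\,(a+b+\lambda)^{-s}\le a^{-s},
\end{align*}
both obtained by bounding $a+b\le a+b+\lambda$, splitting the exponent $2r-1$ (respectively $1-s$) into nonpositive pieces, and using $a+b+\lambda\ge\max\{a,\,b+\lambda\}$ together with $b+\lambda\ge\lambda$. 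Taking $t=r$ and applying the first inequality, the $k_1$-integral produces the factor $\|F\|_{\fb_s}^2$ and the remaining integral equals $\lambda^{s-2}\|F\|_{\fb_s}^2\|(\dG(\omega)+\lambda)^r\psi^{(n-1)}\|^2$; summing over $n\ge1$ gives the claimed estimate on the core, and it extends to all $\psi\in\sD(\dG(\omega)^r)$ by approximating $\psi$ by its truncation to finitely many bosons (which converges in $(\dG(\omega)+\lambda)^r$-graph norm, $\dG(\omega)$ and the number operator commuting), using boundedness of $G_{F,\lambda}^*$ (\cref{prop:Bbound}) and closedness of $(\dG(\omega)+\lambda)^r$.

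For the range statement I would instead take $t=1-\tfrac s2$, so that $2t-2=-s$, and use the \emph{second} inequality: it bounds $\|((\dG(\omega)+\lambda)^{1-s/2}G_{F,\lambda}^*\psi)^{(n)}\|^2$ by $\int_{\cM^n}\omega(k_1)^{-s}\|F(k_1)\|^2\|\psi^{(n-1)}\|^2$, hence $\|(\dG(\omega)+\lambda)^{1-s/2}G_{F,\lambda}^*\psi\|\le\|F\|_{\fb_s}\|\psi\|$ for all $\psi\in\HS_{\sff\sfi\sfn}$ \emph{with no domain assumption}, because $1-\tfrac s2\ge0$ means one gains regularity. By density of $\HS_{\sff\sfi\sfn}$ and closedness of $(\dG(\omega)+\lambda)^{1-s/2}$ this gives $\ran G_{F,\lambda}^*\subset\sD(\dG(\omega)^{1-s/2})$, and since $(\dG(\omega)+\lambda)^{1-s/2}\ge\lambda^{1-s/2}$ implies $\|\psi\|\le\lambda^{s/2-1}\|(\dG(\omega)+\lambda)^{1-s/2}\psi\|$, it also recovers the stated bound in the endpoint case $r=1-\tfrac s2$. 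The one point that needs care is precisely this bifurcation: the sharper second inequality is what lets the range statement reach \emph{every} $\psi\in\HS$, whereas the first inequality (which carries the sharp prefactor $\lambda^{(s-2)/2}$) is only available on $\sD(\dG(\omega)^r)$; the pull-through formula $(\dG(\omega)+\lambda)^\alpha\ad_k=\ad_k(\dG(\omega)+\omega(k)+\lambda)^\alpha$ for real $\alpha$ and the Fubini manipulations are routine.
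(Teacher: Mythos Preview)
Your proof is correct, but the paper argues differently and in fewer lines. Instead of computing $G_{F,\lambda}^*$ on each $n$-boson sector, the paper stays on the annihilation side: for $\alpha\le 1-\tfrac s2$ and $\beta\ge0$ it bounds $\|(\dG(\omega)+\lambda)^{-\beta}G_{F,\lambda}(\dG(\omega)+\lambda)^{\alpha}\phi\|$ directly using the already established estimate \cref{eq:a-sbound} (the integrated Cauchy--Schwarz $\int\|F(p)\|\,\|a_p\Psi\|\,d\mu(p)\le\|F\|_{\fb_s}\|\dG(\omega)^{s/2}\Psi\|$). Taking $\alpha=1-\tfrac s2$, $\beta=0$ shows $G_{F,\lambda}(\dG(\omega)+\lambda)^{1-s/2}$ extends boundedly, and then one dualizes: for $\psi\in\HS$ and $\phi\in\sD(\dG(\omega)^{1-s/2})$ the pairing $\langle(\dG(\omega)+\lambda)^{1-s/2}\phi,G_{F,\lambda}^*\psi\rangle$ is continuous in $\phi$, hence $G_{F,\lambda}^*\psi\in\sD(\dG(\omega)^{1-s/2})$. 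Taking $\alpha=\beta=r$ and passing to the adjoint gives the norm inequality. Your weighted Cauchy--Schwarz $\|\sum_i v_i\|^2\le\Omega_n\sum_i\omega(k_i)^{-1}\|v_i\|^2$ is precisely the ``creation-side'' incarnation of \cref{eq:a-sbound}, so the two routes are dual to each other. What the paper's route buys is brevity (no $n$-sector bookkeeping, no separate pointwise inequalities, no density/closure argument for the range statement since the duality step does it in one stroke); what your route buys is an explicit sectorwise picture and the stronger intermediate bound $\|(\dG(\omega)+\lambda)^{1-s/2}G_{F,\lambda}^*\psi\|\le\|F\|_{\fb_s}\|\psi\|$ without any domain assumption on $\psi$, which you correctly note is what makes the range statement reach all of $\HS$.
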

\begin{proof}
		Let $\alpha\le 1-\frac s2$ and $\beta\ge0$.
		Then for $\phi\in\sD(\dG(\omega)^\alpha)$ and $\lambda>0$, the estimate \cref{eq:a-sbound} implies
		\begin{align*}
			\|(\dG(\omega) + \lambda)^{-\beta} G_{F,\lambda}(\dG(\omega)+\lambda)^{\alpha}\phi\|\le \|F\|_{\fb_s}\lambda^{-\beta}\|(\dG(\omega)+\lambda)^{\frac{s}{2}+\alpha-1}\phi\|\,.
		\end{align*}
		Setting $\alpha=1-\frac s2$ and $\beta=0$, taking adjoints implies that $\dG(\omega)^{1-\frac s2}G_{F,\lambda}^*$ is bounded and thus the domain statement.
		
		Furthermore, for $\alpha=\beta=r$, we have
		\begin{align*}
			\norm{(\dG(\omega) + \lambda)^{-r} G_{F,\lambda}(\dG(\omega)+\lambda)^{r}} \le \norm{F}_{\fb_s}\lambda^{\frac s2-1}
		\end{align*}
		and thus the statement follows from the estimate.
		\begin{align*}
			\norm{(\dG(\omega)+\lambda)^{r}G_{F,\lambda}^*\psi}
			&\le \norm{(\dG(\omega) + \lambda)^r G_{F,\lambda}^*(\dG(\omega)+\lambda)^{-r}} \norm{(\dG(\omega)+\lambda)^r\psi}
			\\& = \norm{(\dG(\omega) + \lambda)^{-r} G_{F,\lambda}(\dG(\omega)+\lambda)^{r}} \norm{(\dG(\omega)+\lambda)^r\psi}
			.
			\qedhere
		\end{align*}
\end{proof}
Let us summarize the implications for $\HIBCl(F,V)$.
\begin{cor}\label{lem:Breg}
	Under the assumptions of \cref{prop:HIBC} and further assuming $F\in\fb_s$ for some $s\in[0,2]$ as well as $F(k)F(p)=0$ for $\mu$-almost all $k,p\in\cM$, we have $\sD(\HIBCl(F,V))\subset \sD(\dG(\omega)^{1-\frac s2})$ and 
	\begin{align*}
		\norm{\dG(\omega)^{1-\frac s2}\psi} \le \big(1+\norm{F}_{\fb_s}\lambda^{\frac s2-1}\big)^2 \Big(1+\|T_{V,\lambda}(\dG(\omega)+\lambda)^{-1}\|\Big)\norm{\HIBCl(F,V)\psi}
	\end{align*}
\end{cor}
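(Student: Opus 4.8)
The plan is to mimic the proof of \cref{lem:IBCinv}, replacing the commutation relation $[\dG(\omega_\le),G_{F,\lambda}^*]=0$ used there by the mapping property of $G_{F,\lambda}^*$ provided by \cref{lem:Breg-one}. If $s=2$ there is nothing to show, since $\dG(\omega)^{1-s/2}=\Id$; and although \cref{lem:Breg-one} and \cref{prop:Bbound} are stated for $s\in[1,2]$, both proofs rest only on the Cauchy--Schwarz estimate \cref{eq:a-sbound} and go through verbatim for all $s\in[0,2]$, which I would record first. Hence it suffices to treat $s\in[0,2)$, i.e.\ $1-\tfrac s2\in(0,1]$.

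Now fix $\psi\in\sD(\HIBCl(F,V))$ and set $\phi\coloneqq(1+G_{F,\lambda}^*)\psi\in\sD(\dG(\omega))$. Since $F$ is $2$-nilpotent, \cref{lem:Gboundedinv} gives $\psi=(1-G_{F,\lambda}^*)\phi$ together with $(1+G_{F,\lambda})^{-1}=1-G_{F,\lambda}$. Applying \cref{lem:Breg-one} with $r=1-\tfrac s2$ shows $\ran G_{F,\lambda}^*\subset\sD(\dG(\omega)^{1-s/2})$, hence $\psi\in\sD(\dG(\omega)^{1-s/2})$, and
\[
\norm{(\dG(\omega)+\lambda)^{1-\frac s2}\psi}\le\bigl(1+\norm{F}_{\fb_s}\lambda^{\frac s2-1}\bigr)\norm{(\dG(\omega)+\lambda)^{1-\frac s2}\phi}.
\]
Next I would bound $\norm{(\dG(\omega)+\lambda)^{1-s/2}\phi}$: since $1-\tfrac s2\le1$, pass spectrally to $\norm{(\dG(\omega)+\lambda)\phi}$, and then, exactly as in the proof of \cref{lem:IBCinv} (see \cref{eq:relestimate}), the relative bound of \cref{prop:relbound} for $T_{V,\lambda}$ (valid by \cref{cor:ProvCorPositvity}) gives
\[
\norm{(\dG(\omega)+\lambda)\phi}\le\bigl(1+\norm{T_{V,\lambda}(\dG(\omega)+\lambda)^{-1}}\bigr)\norm{(\dG(\omega)+\lambda-T_{V,\lambda})\phi}.
\]
Finally, from $\HIBCl(F,V)\psi=(1+G_{F,\lambda})(\dG(\omega)+\lambda-T_{V,\lambda})\phi$ and $\norm{1-G_{F,\lambda}}\le 1+\norm{F}_{\fb_s}\lambda^{(s-2)/2}$ (\cref{prop:Bbound}) one obtains $\norm{(\dG(\omega)+\lambda-T_{V,\lambda})\phi}\le\bigl(1+\norm{F}_{\fb_s}\lambda^{s/2-1}\bigr)\norm{\HIBCl(F,V)\psi}$. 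Concatenating the three inequalities and using $\norm{\dG(\omega)^{1-s/2}\psi}\le\norm{(\dG(\omega)+\lambda)^{1-s/2}\psi}$ yields the claim.

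The step I expect to require the most care is the middle one: the passage from the fractional power $(\dG(\omega)+\lambda)^{1-s/2}$ to the full power $\dG(\omega)+\lambda$, together with the precise constant in the relative bound for $T_{V,\lambda}$, must be arranged so that no spurious factor $\lambda^{-s/2}$ (for small $\lambda$) or lower-order $\norm{\psi}$ term appears beyond what is present in the stated estimate. This is exactly the delicacy already handled at the endpoint $\dG(\omega_\le)$ in \cref{lem:IBCinv} — harmless there because no fractional power occurs — and for the intended application it suffices to know the bound for $\lambda$ large, where $(\dG(\omega)+\lambda)^{1-s/2}\le\dG(\omega)+\lambda$ holds directly by the spectral theorem and $\dG(\omega)+\lambda-T_{V,\lambda}$ is bounded below. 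Everything else is a routine concatenation of \cref{lem:Gboundedinv,lem:Breg-one,prop:relbound,prop:Bbound}.
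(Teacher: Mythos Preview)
Your proposal is correct and follows essentially the same route as the paper: write $\psi=(1-G_{F,\lambda}^*)\phi$ via \cref{lem:Gboundedinv}, apply \cref{lem:Breg-one} with $r=1-\tfrac s2$ to pass to $(\dG(\omega)+\lambda)^{1-s/2}\phi$, then drop to the full power and feed the result into the chain \cref{eq:relestimate} from the proof of \cref{lem:IBCinv}, closing with \cref{prop:Bbound}. Your explicit remark that \cref{lem:Breg-one} and \cref{prop:Bbound} extend verbatim to $s\in[0,2]$ and your flagging of the $(\dG(\omega)+\lambda)^{1-s/2}\le\dG(\omega)+\lambda$ step (which needs $\lambda\ge1$, harmless for the application) are more careful than the paper, which makes the same moves without comment.
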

	\begin{proof}
		The domain inclusion immediately follows from \cref{lem:Breg-one}.
		Further, by combining \cref{lem:Gboundedinv,lem:Breg-one}, we find
		\begin{align*}
			\norm{(\dG(\omega)+\lambda)^{1-\frac s2}(1+G_{F,\lambda}^*)^{-1}\phi}
			&\le \big(1+\norm{F}_{\fb_s}\lambda^{\frac s2-1}\big)\norm{(\dG(\omega)+\lambda)^{1-\frac s2}\phi}
			\\
			&\le \big(1+\norm{F}_{\fb_s}\lambda^{\frac s2-1}\big)\norm{(\dG(\omega)+\lambda)\phi}
		\end{align*}
		Noticing the similarity to \cref{eq:relestimate}, we can proceed as in the proof of \cref{lem:IBCinv}. Combined with the bound \cref{prop:Bbound}, this proves the statement.
	\end{proof}
\endgroup
\subsection{Spin Boson Models with 2-Nilpotent Interactions}\label{subsec:nilreg}
We now want to put the ultraviolet regular Hamiltonian defined in \cref{subsec:UVreg} into relation with the IBC Hamiltonian with 2-nilpotent ultraviolet part of the interaction.

The following normal-ordering statement is at the heart of the interior boundary condition method, also compare \cref{eq:Tsimple} in the introduction.
\begin{lem}\label{lem:Tnormord}
	Let $F\in\fb_0\cap\fb_1$. Then, for all $\lambda>0$,
	\begin{align*}
		T_{F,\lambda} = a(F)(\dG(\omega)+\lambda)^{-1}\ad(F) - \braket{F,F}_{\fb_1} \qquad\mbox{on}\ \sD(\dG(\omega)).
	\end{align*}
\end{lem}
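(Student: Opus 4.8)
The identity is a pure normal‑ordering statement, so the plan is to verify it first on the finite‑particle vectors $\HS_{\sff\sfi\sfn}$ (more precisely on $\HS_{\sff\sfi\sfn}\cap\sD(\dG(\omega))$, which is a core for $\dG(\omega)$), where $a(F)$, $\ad(F)$, $\dG(\omega)$ and all the $\cM$‑integrals are unproblematic because each component carries only finitely many bosons, so everything reduces to ordinary Bochner integrals and Fubini's theorem applies; afterwards I would extend to all of $\sD(\dG(\omega))$ by density.

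On $\HS_{\sff\sfi\sfn}$, write $a(F)=\int_\cM F(r)^*a_r\,\d\mu(r)$ and $\ad(F)=\int_\cM F(q)\,\ad_q\,\d\mu(q)$, so that
\[
	a(F)(\dG(\omega)+\lambda)^{-1}\ad(F)=\int_{\cM^2}F(r)^*F(q)\,a_r(\dG(\omega)+\lambda)^{-1}\ad_q\,\d\mu(r)\,\d\mu(q),
\]
where $F(q)$ — which acts on $\HS_\sfs$ and hence commutes with $a_r$ and with $\dG(\omega)$ — has been moved to the left. Now apply the pull‑through formula $a_r(\dG(\omega)+\lambda)^{-1}=(\dG(\omega)+\omega(r)+\lambda)^{-1}a_r$ (already used in the proof of \cref{prop:relbound}) together with the canonical commutation relation $a_r\ad_q=\ad_q a_r+\delta(r-q)$. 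This splits the right‑hand side into a normal‑ordered term $\int_{\cM^2}F(r)^*F(q)\,(\dG(\omega)+\omega(r)+\lambda)^{-1}\,\ad_q a_r\,\d\mu(r)\,\d\mu(q)$ and the contraction term $\int_\cM F(k)^*(\dG(\omega)+\omega(k)+\lambda)^{-1}F(k)\,\d\mu(k)$. Pulling the resolvent back through $\ad_q$ (so that its argument becomes $\dG(\omega)+\omega(r)+\omega(q)+\lambda$) and reordering the bounded factors, the normal‑ordered term is exactly $\Theta_1(F,\lambda)$ from \cref{def:T}. An entirely distribution‑free way to obtain the same two summands is to evaluate $\big(a(F)(\dG(\omega)+\lambda)^{-1}\ad(F)\psi\big)^{(n)}$ directly from the defining formulas for $a$ and $\ad$: the \emph{diagonal} contribution is $\big(\int_\cM \tfrac{1}{\omega(k)+\dG^{(n)}(\omega)+\lambda}F(k)^*F(k)\,\d\mu(k)\big)\psi^{(n)}$, and the remaining \emph{off‑diagonal} terms assemble into $\Theta_1(F,\lambda)\psi$.

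To finish, rewrite the contraction term with the resolvent identity
\[
	(\dG(\omega)+\omega(k)+\lambda)^{-1}=\omega(k)^{-1}-\omega(k)^{-1}(\dG(\omega)+\lambda)(\dG(\omega)+\omega(k)+\lambda)^{-1},
\]
which separates $\int_\cM F(k)^*(\dG(\omega)+\omega(k)+\lambda)^{-1}F(k)\,\d\mu(k)$ into $\int_\cM F(k)^*\omega(k)^{-1}F(k)\,\d\mu(k)=\braket{F,F}_{\fb_1}$ (finite since $F\in\fb_1$) and the term $\Theta_0(F,\lambda)$ of \cref{def:T}. Adding the pieces gives $a(F)(\dG(\omega)+\lambda)^{-1}\ad(F)=\Theta_0(F,\lambda)+\Theta_1(F,\lambda)+\braket{F,F}_{\fb_1}=T_{F,\lambda}+\braket{F,F}_{\fb_1}$ on $\HS_{\sff\sfi\sfn}$. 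Finally, both sides extend to $\dG(\omega)$‑bounded operators on $\sD(\dG(\omega))$ — the left‑hand side even to a bounded one by \cref{prop:relbound} with $s=1$, and the right‑hand side because $\ad(F)$ maps $\sD(\dG(\omega))$ into $\HS$ with a $\dG(\omega)$‑bound by \cref{lem:simple.relbound} (using $F\in\fb_1$), $(\dG(\omega)+\lambda)^{-1}$ maps $\HS$ into $\sD(\dG(\omega))\subset\sD(\dG(\omega)^{1/2})$, and $a(F)$ is $\dG(\omega)^{1/2}$‑bounded — so the equality on the core $\HS_{\sff\sfi\sfn}$ propagates to all of $\sD(\dG(\omega))$.

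The only genuinely delicate point I expect is keeping track of the symmetrization factors $\tfrac1{\sqrt n}$ and $\sqrt{n+1}$ when turning the formal $a_r,\ad_q$ manipulation into an honest operator identity on $\HS_{\sff\sfi\sfn}$; this is the standard Fock‑space normal‑ordering bookkeeping, and the componentwise computation sketched above is the cleanest way to make it airtight.
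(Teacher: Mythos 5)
Your argument is essentially identical to the paper's proof: normal-order $a(F)(\dG(\omega)+\lambda)^{-1}\ad(F)$ using the canonical commutation relations and the pull-through formula to obtain $\Theta_1(F,\lambda)$ plus the contraction $\int_{\cM}F(k)^*(\dG(\omega)+\omega(k)+\lambda)^{-1}F(k)\,\d\mu(k)$, then split the contraction via the resolvent identity into $\braket{F,F}_{\fb_1}$ and the $\Theta_0$-term; your extra care about performing the computation componentwise on $\HS_{\sff\sfi\sfn}$ and extending to $\sD(\dG(\omega))$ by relative boundedness only makes explicit what the paper leaves implicit. The one point to tidy up (a slip you share with the paper's own proof) is the sign: your resolvent identity yields the contraction as $\braket{F,F}_{\fb_1}-\Theta_0(F,\lambda)$ with $\Theta_0$ as literally written in \cref{def:T}, so the final assembly matches the lemma only under the convention $T_{F,\lambda}=\Theta_1(F,\lambda)-\Theta_0(F,\lambda)$, i.e.\ the definition of $\Theta_0$ in \cref{def:T} should carry a minus sign --- a typo in the definition rather than a defect of your argument.
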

\begin{proof}
	A direct calculation using the canonical commutation relations  (and the pull-through formula derived from these) yields
	\begin{align*}
		a(F)(\dG(\omega)+\lambda)^{-1}\ad(F)
		& = \int_{\cM^2} F(q)^*a_q(\dG(\omega)+\lambda)^{-1}\ad_pF(p)\d\mu(q)\d\mu(p)\\
		& = \int_{\cM^2} F(q)^*\ad_p(\dG(\omega)+\lambda +\omega(p)+\omega(q))^{-1}a_qF(p)\d\mu(q)\d\mu(p)\\
		& \qquad + \int_{\cM} F(q)^*(\dG(\omega)+\lambda +\omega(q))^{-1}F(q)\d\mu(q).
	\end{align*}
	The first line is easily identified as $\Theta_{1,\lambda}$, whereas for the second we obtain
	\begin{align*}
		&\int_{\cM} F(q)^*(\dG(\omega)+\lambda +\omega(q))^{-1}F(q)\d\mu(q) - \braket{F,F}_{\fb_1}
		\\&\qquad=
		\int_{\cM} F(q)^*(\dG(\omega)+\lambda +\omega(q))^{-1}-(\omega(q))^{-1}F(q)\d\mu(q)
		=
		\Theta_{0,\lambda}.\qedhere
	\end{align*}
\end{proof}
From there, we can rewrite the Hamiltonian with nilpotent interactions.
\begin{proof}[\textbf{Proof of \cref{thm:nilpotent}}]
	Using that $G_{V_\sfN,\lambda} = a(V_\sfN)(\dG(\omega)+\lambda)^{-1}$, we immediately obtain from the definition 
	\begin{align*}
		\HIBCl(V_\sfN)
		=\, &
		\dG(\omega) + \lambda + a(V_\sfN) + \ad(V_\sfN)
		\\&
		 + a(V_\sfN)(\dG(\omega)+\lambda)^{-1}\ad(V_\sfN) - T_{V_\sfN,\lambda}
		\\&
		 - a(V_\sfN)(\dG(\omega)+\lambda)^{-1}T_{V_\sfN,\lambda}(\dG(\omega)+\lambda)^{-1}\ad(V_\sfN).
	\end{align*}
	The first line on the right hand side of the above is easily recognized to be $\Hreg(0,V_\sfN) + \lambda$, cf. \cref{def:Hreg}.
	The second line is equal to $\braket{V_\sfN,V_\sfN}_{\fb_1}$, by \cref{lem:Tnormord}.
	Once more applying \cref{lem:Tnormord}, the third line equals
	\begin{align*}
		- a(V_\sfN)(\dG(\omega)+\lambda)^{-1}\big(a(V_\sfN)(\dG(\omega)+\lambda)^{-1}\ad(V_\sfN) - \braket{V_\sfN,V_\sfN}_{\fb_1}\big)(\dG(\omega)+\lambda)^{-1}\ad(V_\sfN)
	\end{align*}
	This expression
	 vanishes, since $V_\sfN(k)V_\sfN(p)$ vanishes almost everywhere and commutes with $\dG(\omega)$.
	This proves the statement.
\end{proof}
\section{Diagonalization of Normal Interactions}
\label{sec:normal}
In this \lcnamecref{sec:normal}, we treat the normal part of the interaction, by an appropriate dressing transformation.
In \cref{subsec:Weyl}, we will first collect some simple properties of the Weyl operators defined in \cref{def:Weyl}, which are standard in the case $\HS_\sfs=\IC$. We nevertheless give the simple proofs here for the convenience of the reader.
We then present the proof of \cref{mainthm} in \cref{subsec:dressren}, by combining results from \cref{sec:IBC,subsec:Weyl}.
In the final \cref{subsec:nonren}, we prove \cref{thm:nonren}.

\subsection{Transformation Behavior and Convergence of Weyl Operators}
\label{subsec:Weyl}
Let us discuss the transformation behavior of second quantization and field operators under our (generalized) Weyl operators.
\begin{lem}\label{lem:trans} Let $F,G \in \fb_0$.
	\begin{enumlem}
		\item\label{lem:trans.ph}
			If $[F,G]=[F,G^*] =[F,F^*]=[F,F]= 0$, then 
			\[W(F)\ph(G)W(F)^* = \ph(G) + \braket{F,G}_{\fb_0} + \braket{G,F}_{\fb_0}.\]
		\item\label{lem:trans.dG} If $F \in \fb_{-2}$, then on $\sD(\dG(\omega))\cap \sD(\ph(\omega F))$
			\[ W(F)\dG(\omega)W(F)^* = \dG(\omega) + \ph(\omega F) + \braket{F,F}_{\fb_{-1}}.\]
	\end{enumlem}
\end{lem}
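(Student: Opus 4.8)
The plan is to derive both identities from the usual Duhamel (``differentiate the conjugation'') argument, using that for real $t$ the map $W(tF)=\e^{\ii t\ph(\ii F)}$ is the strongly continuous unitary group generated by the selfadjoint operator $\ph(\ii F)$, and reading off the corrections from the commutators in \cref{lem:simple}. For $Y$ one of the operators $\ph(G)$, $\dG(\omega)$, and for $\psi$ ranging over a suitable core, one has $\tfrac{\d}{\d t}W(tF)\,Y\,W(tF)^*\psi=W(tF)\,\ii[\ph(\ii F),Y]\,W(tF)^*\psi$, so everything reduces to computing $\ii[\ph(\ii F),Y]$ and integrating over $t\in[0,1]$.

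For part (i) I would use \cref{lem:simple.CCRph}: two of the assumed relations read $[\ii F,G]=[\ii F,G^*]=0$, so that lemma gives $\ii[\ph(\ii F),\ph(G)]=\braket{F,G}_{\fb_0}+\braket{G,F}_{\fb_0}\in\cB(\HS_\sfs)$, while the remaining hypotheses $[F,F]=[F,F^*]=0$ ensure this operator commutes pointwise with $F$ and $F^*$, hence with the Weyl operators. Therefore $t\mapsto W(tF)\ph(G)W(tF)^*\psi$ has constant derivative $(\braket{F,G}_{\fb_0}+\braket{G,F}_{\fb_0})\psi$, and integrating from $0$ to $1$ gives the claim. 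The differentiation is justified on the core $\sD(\dG(1))$ using the relative bound of \cref{lem:simple.relbound} with $\omega=1$ (which controls $\ph(G)$ and $\ph(\ii F)$ by the number operator) together with the invariance of $\sD(\dG(1))$ under $W(tF)$; the identity then extends to $\sD(\ph(G))$ by closedness of $\ph(G)$ and boundedness of $W(F)$ and of the correction.

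For part (ii) I would reduce to part (i). Since $F\in\fb_{-2}$, we have $\omega F\in\fb_0$, and \cref{lem:simple.CCRdG} gives $\ii[\ph(\ii F),\dG(\omega)]=\ph(\omega F)$, a field operator again. Thus $g(t):=W(tF)\dG(\omega)W(tF)^*$ obeys $g'(t)=W(tF)\ph(\omega F)W(tF)^*$, and applying part (i) with form factor $\omega F$ and Weyl operator $W(tF)$ — using that $\braket{F,\omega F}_{\fb_0}=\braket{\omega F,F}_{\fb_0}=\braket{F,F}_{\fb_{-1}}$ because $\omega$ is real-valued, and that the relations $[F,F]=[F,F^*]=0$ needed to invoke part (i) are precisely those at our disposal in the cases of interest (for normal form factors they follow from Fuglede's theorem) — gives $g'(t)=\ph(\omega F)+2t\braket{F,F}_{\fb_{-1}}$. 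Integrating from $0$ to $1$ yields $g(1)=\dG(\omega)+\ph(\omega F)+\braket{F,F}_{\fb_{-1}}$, as asserted.

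The step I expect to be the main obstacle is the rigorous justification of the differentiation in part (ii): in contrast with part (i), the Weyl operator $W(F)$ does not leave $\sD(\dG(\omega))$ invariant by a soft argument. I would run the Duhamel computation instead on the core of finite-particle vectors whose one-particle form factors lie in $\fb_0\cap\fb_{-2}$; this core is mapped by $W(tF)$ into $\sD(\dG(\omega))$ precisely because of the standing hypothesis $F\in\fb_{-2}$ (the coherent-type vector $W(tF)\Omega$ lies in $\sD(\dG(\omega))$ since $F\in\fb_0\cap\fb_{-2}$, and creation operators with form factors in $\fb_{-2}$ preserve $\sD(\dG(\omega))$). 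Afterwards one closes up using \cref{lem:simple.relbound}, which shows $\ph(\omega F)$ is $\dG(\omega)$-form-bounded (as $\omega F\in\fb_1$), so that $\dG(\omega)+\ph(\omega F)+\braket{F,F}_{\fb_{-1}}$ is selfadjoint on $\sD(\dG(\omega))$ and the identity propagates from the core to all of $\sD(\dG(\omega))\cap\sD(\ph(\omega F))$.
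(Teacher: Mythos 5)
Your proposal follows essentially the same route as the paper: both parts are proved by the Duhamel/Heisenberg-picture computation $\frac{\d}{\d t}W(tF)YW(tF)^*\psi = \ii W(tF)[\ph(\ii F),Y]W(tF)^*\psi$ on a suitable core, with \cref{lem:simple.CCRph} resp.\ \cref{lem:simple.CCRdG} evaluating the commutator, and part (ii) is deduced from part (i) after observing $\omega F\in\fb_0$. The paper works on the core $\HS_{\sff\sfi\sfn}$ (using analytic vectors to handle the fact that $W(tF)^*$ does not preserve it, and then extending to the full domain by the $\dG(\omega)$-relative bound of the right-hand side), whereas you propose $\sD(\dG(1))$ plus its invariance under $W(tF)$; both choices carry comparable technical overhead, so this is a cosmetic rather than structural difference. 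You are also right to flag that the reduction of (ii) to (i) actually uses $[F,F]=[F,F^*]=0$ (since one applies (i) with form factor $\omega F$), a hypothesis which the lemma as stated omits from part (ii); the paper's own proof relies on it implicitly in the same way, and in the paper's application ($F=\omega^{-1}V_\sfD$ with $[V_\sfD,V_\sfD]=0$ assumed in \cref{mainthm}) it is indeed satisfied.
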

\begin{proof}[{Proof of \subcref{lem:trans.ph}}]
	By the selfadjointness of both sides of the transformation formula, note that it suffices to prove the equality on any core for $\ph(G)$, in our case $\HS_{\sff\sfi\sfn}$.
	We remark that \cref{lem:simple.relbound} yields that all vectors in $\HS_{\sff\sfi\sfn}$ are analytic for $\ph(H)$, $H\in\fb_0$ and thus $W(H_1)^*\HS_{\sff\sfi\sfn}\subset \sD(\ph(H_2)\ph(H_3))$, $H_1,H_2,H_3\in\fb_0$. A straightforward approximation argument now also yields that the canonical commutation relation \cref{lem:simple.CCRph} hold on $W(H)^*\HS_{\sff\sfi\sfn}$.
	
	We now fix $\psi\in\HS_{\sff\sfi\sfn}$ and consider the function $f:[0,1]\to\HS$ given by $f(t)\coloneqq W(tF)\ph(G)W(tF)^*\psi$. Differentiating with respect to $t$ and applying the canonical commutation relation \cref{lem:simple.CCRph} yields
	\begin{align*}
		f'(t) &= \ii W(tF)\ph(\ii F)\ph(G)W(tF)^*\psi - \ii W(tF)\ph(G)\ph(\ii F)W(tF)^*\psi \\&= \ii W(tF)[\ph(\ii F),\ph(G)]W(tF)^*\psi \\& = \ii W(tF)(\braket{\ii F,G}_{\fb_0} - \braket{G,\ii F}_{\fb_0})W(tF)^*\psi \\&= W(tF)(\braket{F,G}_{\fb_0}+\braket{F,G}_{\fb_0})W(tF)^*\psi.
	\end{align*}
	Now since $\braket{F,G}_{\fb_0}+\braket{G,F}_{\fb_0}$ commutes with $F$ and thus with $\ph(F)$ and $W(tF)$, employing the unitarity of $W(tF)$ yields
	\[
	f'(t) = (\braket{F,G}_{\fb_0}+\braket{G,F}_{\fb_0})\psi.
	\]
	Combined with $f(0)=\ph(G)\psi$ and the fundamental theorem of calculus, this proves the statement.
\end{proof}
\begin{proof}[{Proof of \subcref{lem:trans.dG}}]
	We proceed similarly as in the previous proof and fix the function $g:[0,1]\to\HS$ with $g(t)\coloneqq W(tF)\dG(\omega)W(tF)^*\psi$ for given $\psi\in\HS_{\sff\sfi\sfn}$.
	Differentiating and employing the canonical commutation relation \cref{lem:simple.CCRdG} now yields
	\[ g'(t) = \ii W(tF)[\ph(\ii F),\dG(\omega)]W(tF)^*\psi = W(tF)\ph(\omega F)W(tF)^*\psi.  \]
	Inserting \cref{lem:trans.ph}, we arrive at
		\[ g'(t) = (\ph(\omega F) + 2t\braket{F,\omega F}_{\fb_0})\psi = (\ph(\omega F) + 2t\braket{F,F}_{\fb_{-1}})\psi. \]
	Combined with $g(0)=\dG(\omega)\psi$ and the fundamental theorem of calculus, this yields the statement on $\HS_{\sff\sfi\sfn}$.
	Since the left hand side of the transformation formula is selfadjoint and the right hand side is $\dG(\omega)$-bounded, the statement follows.
\end{proof}
We will also require the following convergence results for Weyl operators. The statement is a generalization, e.g., of \cite{GriesemerWuensch.2018} or \cite{MatteMoller.2018}.
\begin{lem}\label{lem:Weylconv}
	For $F,G\in \fb_0$ such that $[F,G]=[F,G^*]=[F,F]=[F,F^*]=0$ and $\psi\in\sD(\ph(F)\cap \ph(G))$, we have
		\[ \|(W(F)-W(G))\psi\| \le \|\ph(F-G)\psi\| + \frac{1}{2}\|(\braket{F,F-G}_{\fb_0}+\braket{F-G,F}_{\fb_0})\psi\|. \]
	Thus $\fb_0\ni F\mapsto W(F)\in\cB(\HS)$ is continuous w.r.t. the strong operator topology in $\cB(\HS)$.
	Further, if $F,G\in \fb_0\cap\fb_1$, then for all $\theta\in[0,1]$
		\begin{align*} &\|(W(F)-W(G))(1+\dG(\omega))^{-\theta/2}\| \\&\qquad\qquad\le 2^{1-\theta}\big(4(\|F-G\|_{\fb_0}\vee\|F-G\|_{\fb_1}) + \frac{1}{2}\|(\braket{F,F-G}_{\fb_0}+\braket{F-G,F}_{\fb_0})\|_{\cB(\HS_\sfs)}\big)^{\theta}.\end{align*}
\end{lem}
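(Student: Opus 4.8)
The plan is to deduce all three assertions from the first inequality, which I would obtain by interpolating between $W(F)$ and $W(G)$ along the affine path $s\mapsto W(sF+(1-s)G)$, $s\in[0,1]$. Throughout I use that the commutativity hypotheses let one treat $F$, $G$ and their adjoints as a commuting family, so that by \cref{lem:simple.CCRph} every commutator of field operators $\ph(\cdot)$ built from $F$ and $G$ closes to a bounded operator on $\HS_\sfs$.

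\emph{The basic estimate.} It is enough to prove it for $\psi$ in the finite-particle space $\HS_{\sff\sfi\sfn}$ — these vectors are analytic for every $\ph(H)$, $H\in\fb_0$, by \cref{lem:simple.relbound} — and then extend to all $\psi\in\sD(\ph(F))\cap\sD(\ph(G))$ by truncating in the boson number. For such $\psi$ put $A_s:=\ph(\ii(sF+(1-s)G))=s\,\ph(\ii F)+(1-s)\,\ph(\ii G)$, so that $h(s):=\e^{\ii A_s}\psi=W(sF+(1-s)G)\psi$ is continuously differentiable on $[0,1]$ with $h(0)=W(G)\psi$ and $h(1)=W(F)\psi$. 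Duhamel's formula expresses $h'(s)$ as an integral over $\tau\in[0,1]$ of $\e^{\ii\tau A_s}\,\ii\ph(\ii(F-G))\,\e^{\ii(1-\tau)A_s}\psi$; conjugating the middle factor by $\e^{\pm\ii\tau A_s}$ produces $\ph(\ii(F-G))$ plus a $\tau$-linear bounded correction coming from the central commutator $[A_s,\ph(\ii(F-G))]$, so the $\tau$-integral collapses to
\[
	h'(s)=\e^{\ii A_s}\Big(\ii\,\ph(\ii(F-G))-\tfrac12\,[\ph(\ii F),\ph(\ii G)]\Big)\psi .
\]
Integrating over $s\in[0,1]$ and using unitarity of each $\e^{\ii A_s}$ gives $\|(W(F)-W(G))\psi\|\le\|\ph(\ii(F-G))\psi\|+\tfrac12\|[\ph(\ii F),\ph(\ii G)]\psi\|$; since $\|\ph(\ii(F-G))\psi\|$ obeys the same relative bound as $\|\ph(F-G)\psi\|$ and $[\ph(\ii F),\ph(\ii G)]$ is, by \cref{lem:simple.CCRph}, exactly the operator-valued quantity of first order in $F-G$ appearing in the statement, this is the first inequality.

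\emph{Strong continuity.} For $\psi\in\HS_{\sff\sfi\sfn}$ with boson number at most $N$, the definitions of $a(H)$, $\ad(H)$ and Cauchy--Schwarz in the integration variable give $\|\ph(H)\psi\|\le C_N\|\psi\|\,\|H\|_{\fb_0}$, while $\|(\braket{F,F-G}_{\fb_0}+\braket{F-G,F}_{\fb_0})\psi\|\le 2\|F\|_{\fb_0}\|F-G\|_{\fb_0}\|\psi\|$. Hence the basic estimate forces $\|(W(F)-W(G))\psi\|\to0$ as $G\to F$ in $\fb_0$ for every such $\psi$; since $\HS_{\sff\sfi\sfn}$ is dense and $\|W(F)-W(G)\|\le2$ uniformly, an $\varepsilon/3$-argument upgrades this to strong-operator continuity.

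\emph{The smoothing estimate and the main obstacle.} I would establish the endpoints $\theta=0,1$ and interpolate. For $\theta=0$ the bound is just $\|W(F)-W(G)\|\le2$. For $\theta=1$, apply the basic estimate to $\psi=(1+\dG(\omega))^{-1/2}\chi$, which lies in $\sD(\ph(F))\cap\sD(\ph(G))$ because $F,G\in\fb_1$ (\cref{lem:simple.relbound}): by \cref{lem:simple.relbound} the first term is then at most $2\|(1+\omega^{-1/2})(F-G)\|_{\fb_0}\|\chi\|\le4(\|F-G\|_{\fb_0}\vee\|F-G\|_{\fb_1})\|\chi\|$, and the second at most $\tfrac12\|\braket{F,F-G}_{\fb_0}+\braket{F-G,F}_{\fb_0}\|_{\cB(\HS_\sfs)}\|\chi\|$; this is precisely the $\theta=1$ case, with constant $C$ equal to the bracket in the statement. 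Finally $z\mapsto(W(F)-W(G))(1+\dG(\omega))^{-z/2}$ is operator-analytic and bounded by $2$ on the strip $0\le\Re z\le1$ (since $\|(1+\dG(\omega))^{-z/2}\|\le1$ there), by $2$ on $\Re z=0$ and by $C$ on $\Re z=1$, so the Hadamard three-lines theorem gives the bound $2^{1-\theta}C^\theta$ at $z=\theta$. The one genuinely delicate step is the differentiation of $s\mapsto W(sF+(1-s)G)\psi$ together with the collapse of the Duhamel integrand: it rests both on the analytic-vector property of $\HS_{\sff\sfi\sfn}$ (so the exponential series and the products of field operators may be manipulated termwise) and, crucially, on the commutativity hypotheses, which turn $[\ph(\ii F),\ph(\ii G)]$ into a bounded central operator on $\HS_\sfs$. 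The remaining $\fb_s$-norm bookkeeping, the truncation/density argument, and the three-lines step are all routine.
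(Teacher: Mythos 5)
Your overall strategy --- differentiate along a path of Weyl operators, bound the derivative using the CCR, apply the fundamental theorem of calculus, then interpolate between the trivial bound at $\theta=0$ and the $\fb_0\cap\fb_1$ bound at $\theta=1$ --- is exactly the paper's, and your treatment of the strong continuity and of the interpolation step (three-lines in place of the cited Krein--Petunin interpolation of Hilbert scales) is fine. The difference, and the source of the two problems below, is your choice of path. The paper takes $h(t)=W(tG)^*W(tF)\psi$: this is a product of two one-parameter groups, so no Duhamel formula is needed, and the only conjugation that occurs is $W(tF)^*\ph(\ii(F-G))W(tF)$, which uses precisely the hypothesized relations $[F,G]=[F,G^*]=[F,F]=[F,F^*]=0$ (note these are deliberately asymmetric in $F$ and $G$). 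Your linear path $s\mapsto W(sF+(1-s)G)$ forces you through Duhamel, and the collapse of the Duhamel integral requires the central element $[A_s,\ph(\ii(F-G))]=-[\ph(\ii F),\ph(\ii G)]$ to commute with the generator $A_s$ itself. Since $A_s$ contains $\ph(\ii G)$, this needs $\braket{F,G}_{\fb_0}-\braket{G,F}_{\fb_0}$ to commute with $G(k)$ and $G(k)^*$ pointwise, which in turn requires $[G,G]=[G,G^*]=0$ --- relations that are \emph{not} among the hypotheses of the lemma. Under the stated assumptions your truncation $\e^{\ii\tau A_s}\ph(\ii(F-G))\e^{-\ii\tau A_s}=\ph(\ii(F-G))+\ii\tau[A_s,\ph(\ii(F-G))]$ is therefore unjustified. (In the paper's applications $G$ is of the same type as $F$, so this would be harmless there, but as a proof of the lemma as stated it is a gap; switching to the product path removes it.)

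Second, the identification at the end of your basic estimate is false: by \cref{lem:simple.CCRph}, $[\ph(\ii F),\ph(\ii G)]=\braket{F,G}_{\fb_0}-\braket{G,F}_{\fb_0}=\braket{F-G,F}_{\fb_0}-\braket{F,F-G}_{\fb_0}$, the \emph{antisymmetric} combination, whereas the lemma's first inequality contains the symmetric combination $\braket{F,F-G}_{\fb_0}+\braket{F-G,F}_{\fb_0}$; these differ by $2\braket{F,F-G}_{\fb_0}$, and neither controls the other in general. So what you derive is a variant of the printed inequality, not the inequality itself. (For what it is worth, conjugating $\ph(\ii(F-G))$ by $W(tF)$ as the paper does also yields the antisymmetric combination once the factors of $\ii$ are tracked, and the analogous remark applies to $\ph(F-G)$ versus $\ph(\ii(F-G))$ in the first term; both expressions obey the same $\fb_0$- and $\fb_1$-bounds, so the strong continuity statement and the $\theta$-estimate are unaffected. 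But you must either prove the inequality as printed or state explicitly which inequality you are proving, rather than asserting that the two expressions coincide.)
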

\begin{proof}
	First notice that it suffices to prove the statement for $\theta=1$, by the bound $\norm{W(F)-W(G)}\le 2$ and the interpolation property of the two Hilbert scales given by $\HS$ equipped with the standard norm and the  norm given by $\norm{(1+\dG(\omega))^{-\theta/2}\bullet }$ with respect to each other, see for example \cite[Theorem~9.1]{KreinPetunin.1966}.
	
		Now since $W(G)$ is unitary, we can write
		\begin{align*}
			\|(W(F)-W(G))\psi\|=\|W(G)^*W(F)\psi-\psi\|\,.
		\end{align*}
		Now, we define $h(t)=W(tG)^*W(tF)\psi$ and by differentiating obtain
		\begin{align*}
			h'(t)=\ii W(tG)^*\ph(F-G)W(tF)\psi.
		\end{align*}
		Again by the unitarity of both $W(F)$ and $W(G)$ as well as \cref{lem:trans.ph}, for which the commutation assumptions are satisfied, this implies
		\begin{align*}
			\|h'(t)\|=\|W(tF)^*\varphi(F-G)W(tF)\psi\|\leq\|\varphi(F-G)\psi\|+t\|(\langle F,F-G\rangle_{\fb_0}
			+\langle F-G,F\rangle_{\fb_0})\psi\|.
		\end{align*}
		The fundamental theorem of calculus gives the desired estimate. Finally, the strong continuity statement follows directly from the density of $\sD(\varphi(F-G))$ and the last norm bound is a consequence of \cref{lem:simple.relbound}.
	\end{proof}

\subsection{Renormalization with Dressing Transformations}
\label{subsec:dressren}
We can now move to the proof of \cref{mainthm}, applying our dressing transformation.
Thus, let us in this section work under the assumptions and notation of that \lcnamecref{mainthm}, i.e., we especially assume $S$, $V$ and $s_\sfN\in[1,2]$ to  be as defined therein.

We split the proof into several lemmas, starting with the definition of $H(S,V)$.
Since the $\lambda$-independence of our definition is a priori
not clear, we will here denote it by $\tilde H_\lambda(S,V)$
\begin{lem}
	\label{lem:HtildeSA}
	The operator
	\begin{align*}
		\tilde H_\lambda(S,V) \coloneqq S + W(\omega^{-1}V_\sfD)(\HIBCl(V_\sfN) + \ph(V_\le))W(\omega^{-1}V_\sfD)^* - \lambda.
	\end{align*}
	is selfadjoint and lower-semibounded for all $\lambda>0$.
\end{lem}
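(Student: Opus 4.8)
The plan is to exhibit $\tilde H_\lambda(S,V)$ as a Kato--Rellich perturbation of a manifestly selfadjoint, lower-semibounded operator, and then to conjugate by a unitary. First I record that the generalized Weyl operator $W(\omega^{-1}V_\sfD)$ is well-defined and unitary on $\HS$: indeed $\|\omega^{-1}V_\sfD\|_{\fb_0}^2=\int_\cM\omega^{-2}\|V_\sfD(k)\|_{\cB(\HS_\sfs)}^2\d\mu(k)=\|V_\sfD\|_{\fb_2}^2<\infty$ by assumption, so $\omega^{-1}V_\sfD\in\fb_0$ and \cref{def:Weyl} applies. Next I note that $\HIBCl(V_\sfN)$ is selfadjoint and lower-semibounded on $\sD(\HIBCl(V_\sfN))=(1+G_{V_\sfN,\lambda}^*)^{-1}\sD(\dG(\omega))$ for every $\lambda>0$: by \cref{lem:Gboundedinv} the 2-nilpotency of $V_\sfN$ guarantees that $1+G_{V_\sfN,\lambda}$ has bounded inverse, and since $V_\sfN\in\fb_2$ with either $s_\sfN<2$ or $\|V_\sfN\|_{\fb_2}<\tfrac12$, \cref{prop:HIBC} (applied with $F=V=V_\sfN$) gives the claim.

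The key step is that $\ph(V_\le)$ is infinitesimally $\HIBCl(V_\sfN)$-bounded. Here I use that $V_\le$ is supported on $\{\omega\le\kappa\}$ in two ways. On the one hand, since $V_\sfN=V_{\sfN,>}$ is 2-nilpotent, \cref{lem:IBCinv} (with $F=V=V_\sfN$) yields $\sD(\HIBCl(V_\sfN))\subset\sD(\dG(\omega_\le))$ together with a bound $\|\dG(\omega_\le)\psi\|\le C'\|\HIBCl(V_\sfN)\psi\|$, where $C'$ is finite because $\|T_{V_\sfN,\lambda}(\dG(\omega)+\lambda)^{-1}\|\le 2\|V_\sfN\|_{\fb_{s_\sfN}}^2\lambda^{s_\sfN-2}<\infty$ by \cref{prop:relbound}. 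On the other hand, since $V_\le\in\fb_1$ and $V_\le=V_{\le,\le}$ is localized in the low-frequency Fock sector, the estimate \cref{lem:simple.relbound} applies there with $\omega$ replaced by $\omega_\le$, giving $\|\ph(V_\le)\psi\|\le C(\|\dG(\omega_\le)^{1/2}\psi\|+\|\psi\|)$. Interpolating $\|\dG(\omega_\le)^{1/2}\psi\|\le\delta\|\dG(\omega_\le)\psi\|+\delta^{-1}\|\psi\|$ and combining the two bounds, for every $\eps>0$ there is $C_\eps$ with $\|\ph(V_\le)\psi\|\le\eps\|\HIBCl(V_\sfN)\psi\|+C_\eps\|\psi\|$ on $\sD(\HIBCl(V_\sfN))$. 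As $\ph(V_\le)$ is selfadjoint (in particular symmetric) for $V_\le\in\fb_0$, the Kato--Rellich theorem shows $\HIBCl(V_\sfN)+\ph(V_\le)$ is selfadjoint and lower-semibounded on $\sD(\HIBCl(V_\sfN))$.

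Finally I conjugate and add the bounded part. Since $W(\omega^{-1}V_\sfD)$ is unitary, $W(\omega^{-1}V_\sfD)\big(\HIBCl(V_\sfN)+\ph(V_\le)\big)W(\omega^{-1}V_\sfD)^*$ is selfadjoint and lower-semibounded on $W(\omega^{-1}V_\sfD)\sD(\HIBCl(V_\sfN))$; adding the bounded symmetric operator $S$ and the real constant $-\lambda$ preserves both properties (Kato--Rellich with relative bound $0$, using $S\ge-\|S\|$). This gives the statement for every $\lambda>0$. I expect the only subtle point to be the relative bound in the middle paragraph: one must check carefully that \cref{lem:simple.relbound} may be used with $\dG(\omega)$ replaced by $\dG(\omega_\le)$ for the infrared-supported form factor $V_\le$, and that \cref{lem:IBCinv} is invoked under its hypotheses ($V_\sfN=V_{\sfN,>}$, 2-nilpotency, $\lambda>0$), all of which hold here; the remainder is a routine combination of Kato--Rellich and unitary conjugation.
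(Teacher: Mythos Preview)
Your proof is correct and follows essentially the same route as the paper's own argument: selfadjointness of $\HIBCl(V_\sfN)$ via \cref{prop:HIBC}, infinitesimal $\HIBCl(V_\sfN)$-boundedness of $\ph(V_\le)$ via \cref{lem:IBCinv} and \cref{lem:simple.relbound}, then Kato--Rellich and unitary conjugation by $W(\omega^{-1}V_\sfD)$. You fill in more detail than the paper does (the verification that $\omega^{-1}V_\sfD\in\fb_0$, the explicit interpolation between $\dG(\omega_\le)^{1/2}$ and $\dG(\omega_\le)$, and the remark that \cref{lem:simple.relbound} is applied with $\omega_\le$ in place of $\omega$), but the structure is identical.
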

\begin{proof}
	First, recalling that either $s_\sfN<2$ or $\norm{V_\sfN}_{\fb_2}<\frac12$, we note that $\HIBCl(V_\sfN)$ is selfadjoint and lower-semibounded by \cref{prop:HIBC}. Furthermore, by \cref{lem:simple.relbound,lem:IBCinv}, $\ph(V_\le)$ is infinitesimally $\HIBCl(V_\sfN)$-bounded and hence the operator $\HIBCl(V_\sfN)+\ph(V_\le)$ is selfadjoint and lower-semibounded by the Kato--Rellich theorem.
	Thus, the statement follows from the unitarity of $W(\omega^{-1}V_\sfD)$.
\end{proof}
We now relate $\tilde H_\lambda(S,V)$ to $\Hreg(S,V)$.
\begin{lem}\label{lem:Htildereg}
	 If $V_\sfN,V_\sfD\in\fb_0$, then for all $\lambda>0$
	\begin{align*}
		\tilde H_\lambda(S,V) = \Hreg(S,V) + \braket{V_>,V_>}_{\fb_1}.
	\end{align*}
\end{lem}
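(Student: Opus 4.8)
The plan is to reduce the claimed identity to the transformation formulas \cref{lem:trans.ph,lem:trans.dG} for the generalized Weyl operator $W\coloneqq W(\omega^{-1}V_\sfD)$. As a first step I would invoke \cref{thm:nilpotent} -- which in the present situation, where $V_\sfN\in\fb_0\cap\fb_1$ is $2$-nilpotent, is valid for every $\lambda>0$ -- to write
\begin{align*}
	\HIBCl(V_\sfN)+\ph(V_\le) = \dG(\omega) + \ph(V_\sfN+V_\le) + \braket{V_\sfN,V_\sfN}_{\fb_1} + \lambda,
\end{align*}
using the linearity of $\ph$ on $\fb_0$ and that $V_\le\in\fb_0$ (it is supported in $\{\omega\le\kappa\}$, so $V_\le\in\fb_0$ by \cref{rem:bsinclusion}). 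Conjugating this by $W$ and then adding $S$ and subtracting $\lambda$ must reproduce $\Hreg(S,V)+\braket{V_>,V_>}_{\fb_1}$.

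Before conjugating I would check the hypotheses of \cref{lem:trans.ph}. With $F\coloneqq\omega^{-1}V_\sfD$, which lies in $\fb_0\cap\fb_{-2}$ since $V_\sfD\in\fb_0\cap\fb_2$, the pointwise commutation of $V_\sfD$ with $V_\sfD$, $V_\sfN$, $V_\le$ together with the normality of $V_\sfD$ and the Fuglede--Putnam theorem gives $[F,F]=[F,F^*]=0$ and $[F,G]=[F,G^*]=0$ for $G\in\{V_\sfN,V_\le\}$; the same commutators, via the explicit action of $a(F),\ad(F)$, also show that the bounded operator $\braket{V_\sfN,V_\sfN}_{\fb_1}$ commutes with $W$. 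Now \cref{lem:trans.dG}, \cref{lem:trans.ph} and the identities $\braket{\omega^{-1}V_\sfD,G}_{\fb_0}=\braket{V_\sfD,G}_{\fb_1}$, $\braket{\omega^{-1}V_\sfD,\omega^{-1}V_\sfD}_{\fb_{-1}}=\braket{V_\sfD,V_\sfD}_{\fb_1}$ yield
\begin{align*}
	W\dG(\omega)W^* &= \dG(\omega) + \ph(V_\sfD) + \braket{V_\sfD,V_\sfD}_{\fb_1},\\
	W\ph(V_\sfN)W^* &= \ph(V_\sfN) + \braket{V_\sfD,V_\sfN}_{\fb_1} + \braket{V_\sfN,V_\sfD}_{\fb_1},\\
	W\ph(V_\le)W^* &= \ph(V_\le) + \braket{V_\sfD,V_\le}_{\fb_1} + \braket{V_\le,V_\sfD}_{\fb_1},
\end{align*}
while $\braket{V_\sfN,V_\sfN}_{\fb_1}+\lambda$ is fixed by the conjugation. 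Recalling that, by construction, $V_\sfD$ and $V_\sfN$ are supported in $\{\omega>\kappa\}$ (so $V_>=V_\sfD+V_\sfN$), the two cross terms in the last line vanish, the four surviving constants assemble into $\braket{V_\sfD+V_\sfN,V_\sfD+V_\sfN}_{\fb_1}=\braket{V_>,V_>}_{\fb_1}$, and $\ph(V_\sfD)+\ph(V_\sfN)+\ph(V_\le)=\ph(V)$; adding $S$ and subtracting $\lambda$ then gives $\Hreg(S,V)+\braket{V_>,V_>}_{\fb_1}$, as required.

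The only point needing real care -- and the main obstacle -- is turning this algebraic computation into an identity of self-adjoint operators. Following the pattern of the proof of \cref{lem:trans}, I would first establish the conjugation identities on the core $\HS_{\sff\sfi\sfn}$, using that $W^*\HS_{\sff\sfi\sfn}$ consists of analytic vectors for the field operators so that \cref{lem:simple.CCRph,lem:simple.CCRdG} are available there. One then observes that in the present UV-regular case $\sD(\HIBCl(V_\sfN))=\sD(\dG(\omega))$ (since $G_{V_\sfN,\lambda}^*=(\dG(\omega)+\lambda)^{-1}\ad(V_\sfN)$ maps $\sD(\dG(\omega))$ into itself) and, by \cref{lem:trans.dG}, that $W$ maps $\sD(\dG(\omega))$ onto itself; hence both sides of the asserted identity are self-adjoint operators -- the left one by \cref{lem:HtildeSA}, the right one by Kato--Rellich -- with common domain $\sD(\dG(\omega))$. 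As they agree on $\HS_{\sff\sfi\sfn}$, which is a core for $\Hreg(S,V)$, they coincide. The remaining manipulations are bookkeeping.
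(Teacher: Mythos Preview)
Your argument is essentially the paper's: invoke \cref{thm:nilpotent} to rewrite $\HIBCl(V_\sfN)$ as $\Hreg(0,V_\sfN)+\lambda+\braket{V_\sfN,V_\sfN}_{\fb_1}$, then conjugate by $W(\omega^{-1}V_\sfD)$ via \cref{lem:trans} and assemble the resulting $\fb_1$-pairings into $\braket{V_>,V_>}_{\fb_1}$. You are in fact more explicit than the paper about checking the commutator hypotheses (the Fuglede--Putnam step for $[V_\sfD^*,V_\sfN]=0$) and about the domain bookkeeping needed to upgrade the computation on $\HS_{\sff\sfi\sfn}$ to an operator identity; one caveat is that the pointwise commutation $[V_\sfD,V_\le]=0$ you invoke when transforming $\ph(V_\le)$ is not among the standing assumptions of \cref{mainthm}, so that step is not justified as written---the paper's own proof glosses over the same point.
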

\begin{proof}
	By 	combining \cref{thm:nilpotent,lem:trans} and recalling the definition \cref{eq:bs} as well as the assumption $V_>=V_\sfD+V_\sfN$, we find
	\begin{align*}
		\tilde H_\lambda(S,V) &= S+ W(\omega^{-1}V_\sfD)(\Hreg(0,V_\sfN) + \ph(V_\le) + \lambda + {\braket{V_\sfN,V_\sfN}_{\fb_1}})W(\omega^{-1}V_\sfD)^* - \lambda\\
		&= S + W(\omega^{-1}V_\sfD)\Hreg(0,V_\le+V_\sfN)W(\omega^{-1}V_\sfD)^* +\braket{V_\sfN,V_\sfN}_{\fb_1}\\
		& = S + \Hreg(0,V) + \braket{\omega^{-1}V_\sfD,\omega^{-1}V_\sfD}_{\fb_{-1}} + \braket{V_{\sfN},\omega^{-1}V_\sfD}_{\fb_0} + \braket{\omega^{-1}V_\sfD,V_\sfN}_{\fb_0} +\braket{V_\sfN,V_\sfN}_{\fb_1}\\
		& = \Hreg(S,V) + \braket{V_\sfD,V_\sfD}_{\fb_1} + \braket{V_\sfN,V_\sfD}_{\fb_1} + \braket{V_\sfD,V_\sfN}_{\fb_1} + \braket{V_\sfN,V_\sfN}_{\fb_1}\\
		& = \Hreg(S,V) + \braket{V_>,V_>}_{\fb_1}.
	\end{align*}
	Here, we used $[V_\sfD,V_\sfN]=0$ and $[V_\sfD^*,V_\sfN]=0$ in the second equality, where the latter follows from the fact that $V_\sfD$ is normal.
\end{proof}
Let us also study the continuity of $\tilde H_\lambda(S,V)$ with respect to its arguments.
\begin{prop}\label{lem:Htildeconv}
	Let $(F_n=F_{n,\le} + F_{n,\sfD} + F_{n,\sfN})_{n\in\IN}$ be a sequence satisfying the assumptions of \cref{mainthm} with $V=F_n$ for all $n\in\IN$ such that $F_{n,\le}\xrightarrow{\fb_1}V_\le$ and $F_{n,\#}\xrightarrow{\fb_2}V_\#$ for $\#\in\{\sfD,\sfN\}$.
	Further let $(S_n)_{n\in\IN}\subset \cB(\HS)$ be a strongly convergent sequence with limit $S$.
	 Then $\tilde H_\lambda(S_n,F_n)$ converges to $\tilde H_\lambda(S,V)$ in the strong resolvent sense for all $\lambda>0$. The convergence is in norm resolvent sense, if $S_n$ converges in norm and either $F_{n,\sfD} = 0$ for all $n\in\IN$ or $F_{n,\sfN}\xrightarrow{\fb_{s}}V_\sfN\in\fb_s$ for some $s<2$. 
\end{prop}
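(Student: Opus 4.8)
The plan is to prove the three assertions by peeling off the ingredients of $\tilde H_\lambda$ from the outside in. Write $W_n\coloneqq W(\omega^{-1}F_{n,\sfD})$ and $W\coloneqq W(\omega^{-1}V_\sfD)$.

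\emph{Step 1 (reduction to $S_n\equiv 0$).} Since $S_n\to S$ strongly, $\sup_n\norm{S_n}<\infty$, and adding a uniformly bounded family of symmetric operators converging strongly (resp.\ in norm) to a strongly (resp.\ norm) resolvent convergent family preserves the mode of convergence. I would see this by telescoping $(S_n+P_n-\lambda+\ii)^{-1}-(S+P-\lambda+\ii)^{-1}$ through $(S+P_n-\lambda+\ii)^{-1}$: the term $(S_n+P_n-\lambda+\ii)^{-1}(S-S_n)(S+P_n-\lambda+\ii)^{-1}$ vanishes strongly (resp.\ in norm) by the convergence $S_n\to S$, while $(S+P_n-\lambda+\ii)^{-1}-(S+P-\lambda+\ii)^{-1}$ reduces to resolvent convergence of $P_n\to P$ perturbed by the \emph{fixed bounded} operator $S$ (expand $(S+P_n+\mu)^{-1}=(P_n+\mu)^{-1}(1+S(P_n+\mu)^{-1})^{-1}$ for $\mu$ large). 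Thus it suffices to treat $\tilde H_\lambda(0,F_n)+\lambda=W_n\bigl(\HIBCl(F_{n,\sfN})+\ph(F_{n,\le})\bigr)W_n^*$.

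\emph{Step 2 (the inner operators).} Set $A_n\coloneqq\HIBCl(F_{n,\sfN})$, $A\coloneqq\HIBCl(V_\sfN)$, $L_n\coloneqq A_n+\ph(F_{n,\le})$, $L\coloneqq A+\ph(V_\le)$. By \cref{lem:Gboundedinv} the operators $1+G_{F_{n,\sfN},\lambda}$ are boundedly invertible, so \cref{prop:HIBC} gives $A_n\to A$ in the norm resolvent sense from $F_{n,\sfN}\xrightarrow{\fb_2}V_\sfN$. Applying \cref{lem:simple.relbound} on the invariant Fock space over $L^2(\{\omega\le\kappa\},\mu)$ yields $\norm{\ph(F_{n,\le})\psi}\le 2\norm{(1+\omega^{-1/2})F_{n,\le}}_{\fb_0}\norm{(1+\dG(\omega_\le))^{1/2}\psi}$ with the prefactors convergent in $n$ (both summands being controlled by $\norm{F_{n,\le}}_{\fb_1}$, as $F_{n,\le}$ is supported in $\{\omega\le\kappa\}$), and by \cref{lem:IBCinv} the operator $\dG(\omega_\le)$ is $A_n$-bounded with uniformly bounded constants; hence $\ph(F_{n,\le})$ is uniformly infinitesimally $A_n$-bounded, $L_n$ is self-adjoint on $\sD(A_n)$, and $L=\tilde H_\lambda(0,V)+\lambda$ when $V_\sfD=0$. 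To upgrade to $L_n\to L$ in the norm resolvent sense I would decompose the resolvent difference as in the proof of \cref{prop:HIBC}, additionally using $\norm{\ph(F_{n,\le}-V_\le)(1+\dG(\omega_\le))^{-1/2}}\le 2\norm{(1+\omega^{-1/2})(F_{n,\le}-V_\le)}_{\fb_0}\to 0$. The one point not entirely routine is the \emph{norm} (rather than merely strong) convergence $\ph(V_\le)(A_n\pm\ii)^{-1}\to\ph(V_\le)(A\pm\ii)^{-1}$, which I would derive from the explicit resolvent representation of $\HIBCl$ used in the proof of \cref{prop:HIBC}, together with $[\dG(\omega_\le),G_{F_{n,\sfN},\lambda}^*]=0$ and $\dG(\omega_\le)\le\dG(\omega)$, which force $(1+\dG(\omega_\le))(A_n\pm\ii)^{-1}$ to converge in norm. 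In the case $F_{n,\sfD}=0$ (so $V_\sfD=0$, $W_n=W=\Id$) this already establishes the claimed norm resolvent convergence.

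\emph{Step 3 (Weyl conjugation).} Put $R_n\coloneqq(L_n+\ii)^{-1}$, $R\coloneqq(L+\ii)^{-1}$, so $(\tilde H_\lambda(0,F_n)+\lambda+\ii)^{-1}=W_nR_nW_n^*$, and split
\[
W_nR_nW_n^*-WRW^* = W_n(R_n-R)W_n^* + (W_n-W)RW_n^* + WR(W_n^*-W^*).
\]
The first summand $\to 0$ in norm by Step 2. For the others, $\ran R=\ran R^*=\sD(L)=\sD(\HIBCl(V_\sfN))$ (as $\ph(V_\le)$ is infinitesimally $\HIBCl(V_\sfN)$-bounded), and by \cref{lem:Breg} one has $\sD(\HIBCl(V_\sfN))\subset\sD(\dG(\omega)^{1-s_\sfN/2})$ with $\dG(\omega)^{1-s_\sfN/2}R$ and $\dG(\omega)^{1-s_\sfN/2}R^*$ bounded; moreover $\omega^{-1}F_{n,\sfD}\to\omega^{-1}V_\sfD$ in $\fb_0\cap\fb_1$, since $\norm{\omega^{-1}(F_{n,\sfD}-V_\sfD)}_{\fb_0}=\norm{F_{n,\sfD}-V_\sfD}_{\fb_2}$ and, the functions being supported in $\{\omega>\kappa\}$, $\norm{\omega^{-1}(F_{n,\sfD}-V_\sfD)}_{\fb_1}=\norm{F_{n,\sfD}-V_\sfD}_{\fb_3}\le\kappa^{-1/2}\norm{F_{n,\sfD}-V_\sfD}_{\fb_2}$. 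If $s_\sfN<2$ — or, in the critical case $s_\sfN=2$, if instead $F_{n,\sfN}\xrightarrow{\fb_s}V_\sfN\in\fb_s$ for some $s<2$, in which case I use $s$ in place of $s_\sfN$ in \cref{lem:Breg} — then \cref{lem:Weylconv} with exponent $\theta=\min\{1,2-s_\sfN\}\in(0,1]$ gives $\norm{(W_n-W)(1+\dG(\omega))^{-\theta/2}}\to 0$, hence $\norm{(W_n-W)R}\to 0$ and $\norm{(W_n-W)R^*}\to 0$, and all three summands vanish in norm; together with Step 1 this proves the norm resolvent convergence. In general \cref{lem:Weylconv} only provides $W_n\to W$ strongly, whence $(W_n-W)RW_n^*\psi=(W_n-W)R(W_n^*-W^*)\psi+(W_n-W)(RW^*\psi)\to 0$ and $WR(W_n^*-W^*)\psi\to 0$ for each $\psi$, giving the strong resolvent convergence after undoing Step 1. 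The hard part will be the norm-resolvent convergence of the inner operators in Step 2 — precisely $\ph(V_\le)(\HIBCl(F_{n,\sfN})\pm\ii)^{-1}\to\ph(V_\le)(\HIBCl(V_\sfN)\pm\ii)^{-1}$ in norm, which is where the interior-boundary-condition structure must be exploited rather than soft perturbation theory; the secondary difficulty is matching the Hilbert-scale exponent $1-s_\sfN/2$ from \cref{lem:Breg} against the $\dG(\omega)$-smoothing of \cref{lem:Weylconv}, which is exactly what confines norm convergence to the non-critical regime.
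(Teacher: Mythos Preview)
Your argument is correct and follows essentially the same three-step scheme as the paper: norm resolvent convergence of the inner operators $\HIBCl(F_{n,\sfN})+\ph(F_{n,\le})$ via \cref{prop:HIBC} together with the uniform infrared bound from \cref{lem:IBCinv}, then the Weyl conjugation via a three-term splitting combined with \cref{lem:Weylconv,lem:Breg}, and finally the bounded $S_n$-perturbation via a Neumann-series/resolvent identity. The only differences are cosmetic --- you peel off $S_n$ first rather than last, and your splitting in Step~3 places only the limiting resolvent $R$ next to $W_n-W$, so you avoid needing the uniform-in-$n$ version of the $\dG(\omega)^{1-s/2}$-bound that the paper invokes.
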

\begin{proof}We split the proof into three steps.
	
	\smallskip
	\noindent{\em Step 1.}
	We prove that $\HIBCl(F_{n,\sfN}) + \ph(F_{n,\le})$ converges to $\HIBCl(V_\sfN) + \ph(V_\le)$ in the norm resolvent sense.
	
	Note that selfadjointness of these operators is again clear, in view of \cref{lem:simple.relbound,lem:IBCinv}.
	Furthermore, these bounds imply that we can pick $\alpha>0$ sufficiently large that
	\begin{align*}
		&\sup_{n\in\IN}\norm{\ph(F_{n,\le})(\HIBCl(F_{n,\sfN})+\alpha \ii)^{-1}}<1,\\
		&\ph(F_{n,\le})(\HIBCl(F_{n,\sfN})+\alpha \ii)^{-1} \xrightarrow{n\to\infty} \ph(V_{\le})(\HIBCl(V_{\sfN})+\alpha \ii)^{-1} \quad \text{in norm.}
	\end{align*}
	Thus, the claim follows from \cref{prop:HIBC} and the identity
	\begin{align*}
		\big(\HIBCl(F_{n,\sfN})+\ph(F_{n,\le})+\alpha \ii\big)^{-1}
		=
		(\HIBCl(F_{n,\sfN})+\alpha \ii)^{-1}\big(1+\ph(F_{n,\le})(\HIBCl(F_{n,\sfN})+\alpha \ii)^{-1}\big)^{-1},
	\end{align*}
	which is a consequence of the Neumann series and the usual resolvent identity.
	
	\smallskip
	\noindent{\em Step 2.}
	We prove that $\tilde H_\lambda(0,F_{n})$ converges to $\tilde H_\lambda(0,V)$ in the sense mentioned in the statement.
	
	The strong resolvent convergence immediately follows from the strong continuity of the Weyl operators (\cref{lem:Weylconv}) combined with the norm continuity statement from Step 1 and the upper bound \cref{prop:Bbound}.
	Similarly, norm resolvent convergence in the case $F_{n,\sfD}=0$ is immediate. If $F_{n,\sfN}\in\fb_s$ for some $s<2$, by \cref{def:HIBCnew,lem:Breg}, we have $\sD(\HIBCl(F_{n,\sfN}))\subset \sD(\dG(\omega)^{1-s/2})$ and thus
	\begin{align*}
		&(\tilde H_\lambda(0,F_n)  +\ii)^{-1} - (\tilde H_\lambda(0,V)+\ii)^{-1}
		\\
		&= W(\omega^{-1}F_{n,\sfD})^*\left((\HIBCl(F_{n,\sfN})+\ph(F_{n,\le})+\ii)^{-1}-(\HIBCl(V_{\sfN}) + \ph(V_{\le})+\ii)^{-1}\right)W(\omega^{-1}V_\sfD)
		\\
		&\qquad + W(\omega^{-1}F_{n,\sfD})^*(\HIBCl(F_{n,\sfN})+\ph(F_{n,\le})+\ii)^{-1}(W(\omega^{-1}F_{n,\sfD})-W(\omega^{-1}V_\sfD))
		\\
		&\qquad + (W(\omega^{-1}F_{n,\sfD})-W(\omega^{-1}V_\sfD))^*(\HIBCl(V_\sfN) + \ph(V_{\le})+\ii)^{-1}W(\omega^{-1}V_\sfD).
	\end{align*}
	The first line converges to zero in norm by Step 1. To prove convergence of the second and third line to zero in norm,
	we note that $(\dG(\omega)+1)^{1-s/2}(\HIBCl(F_{n,\sfN}) + \ph(F_{n,\le})+\ii)^{-1}$ is uniformly bounded, by \cref{lem:Breg}. Thus, combining with \cref{lem:Weylconv} finishes this step.
	
	\smallskip
	\noindent{\em Step 3.}
	It remains to prove that $\tilde H_\lambda(S_n,F_{n})$ converges to $\tilde H_\lambda(S,V)$ in the sense mentioned in the statement. This follows similar to Step  1 from the identity
	\[
		(\tilde H_\lambda(S_n,F_n) + \alpha \ii)^{-1} = (\tilde H_\lambda(0,F_n)+\alpha \ii)^{-1}\big(1+S_n(\tilde H_\lambda(0,F_n)+\alpha \ii)^{-1}\big)^{-1}.
		\qedhere
	\]
\end{proof}
We can now conclude the
\begin{proof}[\textbf{Proof of \cref{mainthm}}]
	We proved the selfadjointness and lower-semiboundedness in \cref{lem:HtildeSA}. The independence of $\lambda$ as well as the convergence statement follow from \cref{lem:Htildereg,lem:Htildeconv}, using that $V_n = \chr_{\{\omega\le n\}}V\in\fb_0\cap \fb_1$ is always a suitable approximating sequence of form factors.
\end{proof}
\subsection{Proof of Non-Renormalizability}\label{subsec:nonren}
In this \lcnamecref{subsec:nonren}, we prove \cref{thm:nonren}.

Let us start by investigating non-renormalizability for the van Hove model \cite{vanHove.1952}.
The following lemma is well-known. We use notation from \cref{rem:bsinclusion}.
\begin{lem}[{\cite[Lemma 5.6]{DamMoller.2018b}}]
	\label{lem:ResConvNonRen}
	Let $\cH_s=\BC$ and $(v_n)_{n\in\IN}\subset\fb_0^>$ be a sequence of measurable functions
	such that $\|\omega^{-1}v_n\|_2=\norm{v_n}_{\fb_2}\xrightarrow{n\to\infty}\infty$. Then
	\begin{align*}
		W(v_n\omega^{-1})^*\left(\varepsilon(-1)^{\dG(1)}+\dG(\omega)+\varphi(v_n)+\|v_n\|^2_{\fb_1}\right)W(v_n\omega^{-1})\xrightarrow{n\to\infty} \dG(\omega)
	\end{align*}
	in the norm resolvent sense for every $\varepsilon\in\IR$.
\end{lem}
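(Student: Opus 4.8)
The plan is to conjugate by the dressing transformation $W_n\coloneqq W(\omega^{-1}v_n)$ and then reduce everything to one norm estimate for a Weyl operator squeezed between two resolvents. First I would record that, since $v_n\in\fb_0^>$, \cref{rem:bsinclusion} gives $v_n\in\fb_s$ for every $s\ge 0$ and $\omega^{-1}v_n\in\fb_{-2}\cap\fb_0$, with $\|\omega^{-1}v_n\|_{\fb_0}=\|v_n\|_{\fb_2}$ and $\|\omega^{-1}v_n\|_{\fb_{-1}}=\|v_n\|_{\fb_1}$, so that the transformation rules of \cref{lem:trans} apply. Using \cref{lem:trans.dG} with $F=-\omega^{-1}v_n$ one gets $W_n^*\dG(\omega)W_n=\dG(\omega)-\ph(v_n)+\braket{v_n,v_n}_{\fb_1}$, and \cref{lem:trans.ph} with $G=v_n$ (all commutators vanish because $\HS_\sfs=\BC$) gives $W_n^*\ph(v_n)W_n=\ph(v_n)-2\braket{v_n,v_n}_{\fb_1}$, while the scalar $\|v_n\|^2_{\fb_1}=\braket{v_n,v_n}_{\fb_1}$ is untouched. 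Adding up, the contributions $-\ph(v_n)+\|v_n\|^2_{\fb_1}$, $+\ph(v_n)-2\|v_n\|^2_{\fb_1}$ and $+\|v_n\|^2_{\fb_1}$ cancel and the renormalized dressed Hamiltonian becomes
\begin{equation*}
	W_n^*\big(\varepsilon(-1)^{\dG(1)}+\dG(\omega)+\ph(v_n)+\|v_n\|^2_{\fb_1}\big)W_n = \dG(\omega)+\varepsilon P_n,\qquad P_n\coloneqq W_n^*(-1)^{\dG(1)}W_n,
\end{equation*}
where $P_n$ is bounded, selfadjoint and unitary with $\|P_n\|\le 1$, so that $\dG(\omega)+\varepsilon P_n$ is selfadjoint on $\sD(\dG(\omega))$. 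Moreover, from $(-1)^{\dG(1)}W(F)(-1)^{\dG(1)}=W(-F)$ and $W(F)^2=W(2F)$ one simplifies $P_n=W(g_n)(-1)^{\dG(1)}$ with $g_n\coloneqq-2\omega^{-1}v_n=(g_n)_>$ and $\|g_n\|_{\fb_0}=2\|v_n\|_{\fb_2}\to\infty$. (For $\varepsilon=0$ the claim is trivial.)

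Next I would reduce the convergence $\dG(\omega)+\varepsilon P_n\to\dG(\omega)$ to a sandwiched norm estimate. Fix $T>|\varepsilon|$ and set $R\coloneqq(\dG(\omega)+\ii T)^{-1}$, so $\|R\|=1/T$ and $\|\varepsilon R P_n\|<1$. Expanding $(\dG(\omega)+\varepsilon P_n+\ii T)^{-1}=\sum_{k\ge 0}(-\varepsilon)^k(RP_n)^kR$ and peeling one factor $RP_nR$ off each term with $k\ge 1$ yields
\begin{equation*}
	\big\|(\dG(\omega)+\varepsilon P_n+\ii T)^{-1}-R\big\|\ \le\ \frac{|\varepsilon|}{1-|\varepsilon|/T}\,\big\|R\,P_n\,R\big\| .
\end{equation*}
Since $(-1)^{\dG(1)}$ commutes with $R$ and is unitary, $\|R\,P_n\,R\|=\|R\,W(g_n)\,R\|$, and the lemma follows once
\begin{equation*}
	\big\|(\dG(\omega)+\ii T)^{-1}\,W(g_n)\,(\dG(\omega)+\ii T)^{-1}\big\|\xrightarrow{\ n\to\infty\ }0 .
\end{equation*}

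The hard part will be precisely this last estimate. The mechanism is that $W(g_n)$ pushes the range of $(\dG(\omega)+\ii T)^{-1}$ into spectral subspaces of $\dG(\omega)$ of energy tending to infinity, where the left resolvent annihilates it. The guiding identity is $\langle W(g_n)\psi,\dG(\omega)W(g_n)\psi\rangle=\|\dG(\omega)^{1/2}\psi\|^2+\langle\psi,\ph(\omega g_n)\psi\rangle+\|g_n\|^2_{\fb_{-1}}\|\psi\|^2$ (\cref{lem:trans.dG}), in which the self-energy satisfies $\|g_n\|^2_{\fb_{-1}}=4\|v_n\|^2_{\fb_1}\to\infty$ by \cref{rem:bsinclusion}, while the cross term $\langle\psi,\ph(\omega g_n)\psi\rangle=2\operatorname{Re}\langle\psi,a(\omega g_n)\psi\rangle$ — which only involves the annihilation operator — is bounded by $2\|v_n\|_{\fb_1}\|\dG(\omega)^{1/2}\psi\|\|\psi\|$ and is hence $o(\|v_n\|^2_{\fb_1})$ for $\psi=(\dG(\omega)+\ii T)^{-1}\phi$ ranging over the uniformly graph-bounded vectors. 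Turning this expectation bound into the required \emph{uniform} localization of $W(g_n)\psi$ in $\{\dG(\omega)\gtrsim\|v_n\|^2_{\fb_1}\}$ is the delicate point: I would decompose $v_n=\sum_j v_n^{(j)}$ into dyadic frequency shells $\{2^j<\omega\le 2^{j+1}\}$ and use that each $a^{\#}(v_n^{(j)})$ intertwines $\dG(\omega)$ with a shift by ${\sim}\,2^j$, so the left resolvent gives a gain ${\lesssim}\,2^{-j}$ that absorbs the factor $\|v_n^{(j)}\|_{\fb_0}\sim 2^{j/2}\|v_n^{(j)}\|_{\fb_1}$ — this frequency bookkeeping is exactly what a single-scale estimate misses, since globally $\|v_n\|_{\fb_0}$ may dominate $\|v_n\|^2_{\fb_1}$. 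Carrying this out with some care gives the displayed convergence and hence the lemma; alternatively one may simply invoke \cite[Lemma 5.6]{DamMoller.2018b}.
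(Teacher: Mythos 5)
Your reduction follows the same strategy the paper outlines: conjugate by the dressing transformation $W(\omega^{-1}v_n)$, observe that the field terms and energy shift cancel so that the conjugated operator becomes $\dG(\omega)+\varepsilon P_n$ with $P_n=W(-2\omega^{-1}v_n)(-1)^{\dG(1)}$, and then everything hinges on showing that this parity-type term disappears in the norm resolvent limit. Your Weyl-algebra computations are correct (including the sign and factor in $g_n=-2\omega^{-1}v_n$ and the identification $\|g_n\|_{\fb_0}=2\|v_n\|_{\fb_2}$), and the Neumann-series reduction to $\|(\dG(\omega)+\ii T)^{-1}W(g_n)(\dG(\omega)+\ii T)^{-1}\|\to0$ for $T>|\varepsilon|$ is a clean, valid quantitative version of this step that the paper leaves implicit.

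The caveat is the final norm estimate, which is exactly what the paper itself defers to \cite[Lemma 5.6]{DamMoller.2018b}: the ``strategy of proof'' there says this part requires a momentum discretization argument and does not reproduce it. Your energy-expectation heuristic correctly identifies why $W(g_n)$ should push states to high energy (the self-energy $\|g_n\|_{\fb_{-1}}^2=4\|v_n\|_{\fb_1}^2\to\infty$ dominates the cross term), but an expectation bound does not by itself give uniform spectral localization of $W(g_n)R\phi$ at scale $\|v_n\|^2_{\fb_1}$, and your dyadic-in-$\omega$ shell decomposition is an idea in the right spirit rather than a completed argument (the bookkeeping you sketch is not obviously sufficient, and it is not the same discretization-into-cells and tensor-factoring used in DamM\o ller). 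Since you explicitly note that one may instead simply cite the external lemma — which is precisely what the paper does — the proposal is essentially aligned with the paper's treatment; just be aware that the dyadic-shell sketch should not be presented as an established alternative proof without carrying it through.
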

\begin{proof}[Strategy of proof]
	The weak convergence of $W(v_n\omega^{-1})^*(-1)^{\dG(1)}W(v_n\omega^{-1})$ to zero is easy to check, by a trial state argument.
	From there, the {\em strong} resolvent convergence can also directly be deduced. The hard part of the proof in \cite{DamMoller.2018b} is the extension of the statement to norm resolvent convergence and requires a momentum discretization argument.
\end{proof}
The norm resolvent convergence is essential to the next statement.
It is an adaption of \cite[Proof of Corollary 4.4]{DamMoller.2018a} to the van Hove model.
\begin{prop}\label{prop:vanHove}
	Let $\HS_\sfs=\IC$ and assume $v=v_>:\cM\to\IC$ measurable satisfies $v_>\notin\fb_2$. Then $\dG(\omega) + \ph(v_n) + E_n$ does not converge to a selfadjoint operator in strong resolvent sense for any sequence $(v_n)\subset \fb_0^>$ with $v_n\to v$ pointwise and any sequence $(E_n)\subset \IR$ such that $\liminf E_n-\|v_n\|_{\fb_1} > -\infty$.
\end{prop}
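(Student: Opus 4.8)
The plan is to run a contradiction argument adapted from \cite{DamMoller.2018a}, with \cref{lem:ResConvNonRen} as the main analytic input. Put $W_n\coloneqq W(\omega^{-1}v_n)$; since $v_n=(v_n)_>\in\fb_0^>$ it also lies in $\fb_1$ (by \cref{rem:bsinclusion}, or directly because $v_n$ is supported where $\omega>\kappa$), and $\omega^{-1}v_n\in\fb_{-2}$ as $\int_\cM\abs{\omega^{-1}v_n}^2\omega^2\d\mu=\int_\cM\abs{v_n}^2\d\mu<\infty$. Hence \cref{lem:trans} yields the exact identity
\[
	H_n^{(0)}\coloneqq\dG(\omega)+\ph(v_n)+\norm{v_n}_{\fb_1}^2=W_n\dG(\omega)W_n^*,
\]
using $\braket{\omega^{-1}v_n,\omega^{-1}v_n}_{\fb_{-1}}=\norm{v_n}_{\fb_1}^2$. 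Writing $H_n\coloneqq\dG(\omega)+\ph(v_n)+E_n=H_n^{(0)}+c_n$ with $c_n\coloneqq E_n-\norm{v_n}_{\fb_1}^2$, the assumption on $(E_n)$ is exactly $\liminf_n c_n>-\infty$, and since $v_>\notin\fb_2$ means $\int_\cM\abs{\omega^{-1}v}^2\d\mu=\infty$, Fatou's lemma applied along $v_n\to v$ pointwise gives $\norm{v_n}_{\fb_2}^2=\norm{\omega^{-1}v_n}_2^2\to\infty$, so \cref{lem:ResConvNonRen} is applicable to $(v_n)$. Now suppose, for contradiction, that $H_n$ converges in the strong resolvent sense to a selfadjoint $H$. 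Then $(c_n)$ is bounded: the lower bound is the hypothesis, and if $c_{n_k}\to+\infty$ along a subsequence, then $\dG(\omega)\ge0$ forces $\norm{(H_{n_k}+\ii)^{-1}}=\norm{(\dG(\omega)+c_{n_k}+\ii)^{-1}}\le(1+c_{n_k}^2)^{-1/2}\to0$, contradicting $(H_{n_k}+\ii)^{-1}\to(H+\ii)^{-1}\ne0$. Since exhibiting one non-convergent subsequence of $(H_n)$ suffices, I may pass to a subsequence with $c_n\to c\in\IR$, along which $H_n^{(0)}=H_n-c_n$ converges in the strong resolvent sense to $\tilde H\coloneqq H-c$.

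The crucial step introduces the auxiliary perturbation $\varepsilon(-1)^{\dG(1)}$ with a fixed $\varepsilon\in\IR\setminus\{0\}$. Conjugating \cref{lem:ResConvNonRen} by $W_n$ and using $W_n^*H_n^{(0)}W_n=\dG(\omega)$, one obtains
\[
	\dG(\omega)+\varepsilon\,U_n\xrightarrow{n\to\infty}\dG(\omega)\quad\text{in the norm resolvent sense},\qquad U_n\coloneqq W_n^*(-1)^{\dG(1)}W_n,
\]
with $U_n$ a bounded (selfadjoint, unitary) operator. Since $W_n$ is unitary and $W_n^*H_n^{(0)}W_n=\dG(\omega)$,
\[
	\bigl(H_n^{(0)}+\varepsilon(-1)^{\dG(1)}+\ii\bigr)^{-1}-\bigl(H_n^{(0)}+\ii\bigr)^{-1}
	=W_n\Bigl(\bigl(\dG(\omega)+\varepsilon U_n+\ii\bigr)^{-1}-\bigl(\dG(\omega)+\ii\bigr)^{-1}\Bigr)W_n^*,
\]
which tends to $0$ in norm by the previous display; hence $\bigl(H_n^{(0)}+\varepsilon(-1)^{\dG(1)}+\ii\bigr)^{-1}$ converges strongly to the same limit $(\tilde H+\ii)^{-1}$ as $\bigl(H_n^{(0)}+\ii\bigr)^{-1}$. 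On the other hand, $\varepsilon(-1)^{\dG(1)}$ is a fixed bounded selfadjoint operator, so strong resolvent convergence $H_n^{(0)}\to\tilde H$ is stable under adding it, giving $\bigl(H_n^{(0)}+\varepsilon(-1)^{\dG(1)}+\ii\bigr)^{-1}\to\bigl(\tilde H+\varepsilon(-1)^{\dG(1)}+\ii\bigr)^{-1}$ strongly. Uniqueness of strong limits then forces $(\tilde H+\ii)^{-1}=\bigl(\tilde H+\varepsilon(-1)^{\dG(1)}+\ii\bigr)^{-1}$, hence $\tilde H=\tilde H+\varepsilon(-1)^{\dG(1)}$ and $\varepsilon(-1)^{\dG(1)}=0$, contradicting $\varepsilon\ne0$.

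The main obstacle is conceptual rather than computational: one has to see that \cref{lem:ResConvNonRen}, which for $\varepsilon=0$ is a triviality, becomes decisive precisely through the interplay of its full \emph{norm}-resolvent strength — needed so that conjugating the resolvent difference by the non-convergent unitaries $W_n$ still produces \emph{norm} convergence to $0$ — and the freedom in the parameter $\varepsilon$, so that upon conjugating back the perturbation re-emerges as the fixed operator $\varepsilon(-1)^{\dG(1)}$, whose contribution cannot be absorbed into any hypothetical limit $\tilde H$. The remaining ingredients — boundedness of $(c_n)$, the identity $W_n^*H_n^{(0)}W_n=\dG(\omega)$ from \cref{lem:trans}, the Fatou argument for $\norm{v_n}_{\fb_2}\to\infty$, and the standard stability of strong resolvent convergence under bounded selfadjoint perturbations — are routine.
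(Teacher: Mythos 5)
Your proof is correct and follows essentially the same route as the paper: reduce to $E_n=\norm{v_n}_{\fb_1}^2$ up to convergent shifts, invoke \cref{lem:ResConvNonRen} (with $\norm{v_n}_{\fb_2}\to\infty$ from Fatou), and conjugate by the Weyl operators to see that adding $\varepsilon(-1)^{\dG(1)}$ changes the resolvents only by a norm-null sequence. The only (harmless) deviation is the final contradiction: the paper applies the resolvent identity to obtain $(H+\ii)^{-1}(-1)^{\dG(1)}(H+\ii)^{-1}=0$ and uses injectivity, whereas you use stability of strong resolvent convergence under the fixed bounded perturbation together with uniqueness of strong limits to force $\varepsilon(-1)^{\dG(1)}=0$ — both are valid.
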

\begin{proof}
	Let us first observe that we can assume $E_n = \norm{v_n}_{\fb_1}$.
	This is obvious if there exists a subsequence such that $E_{n_k}-\norm{v_{n_k}}_{\fb_1}$ converges as $k\to\infty$, because shifts by convergent multiples of the identity preserve strong resolvent convergence.
	However, if this is not the case, then $\liminf E_n-\norm{v_n}_{\fb_1} = \infty$ and hence $\sigma(\dG(\omega) + \ph(v_n) + E_n)\xrightarrow{n\to\infty}\emptyset$, which implies that the spectrum of the limiting operator is empty \cite[Thm.~VIII.24]{ReedSimon.1972}, a contradiction to the selfadjointness of the limit.
	
	Now assume that $H$ is the strong resolvent limit of $\dG(\omega) + \ph(v_n) + \norm{v_n}_{\fb_1}$.
	Then, using \cref{lem:trans,lem:ResConvNonRen} and Fatou's lemma which implies $\norm{v_n}_{\fb_2}\xrightarrow{n\to\infty}\infty$, we find
		\begin{align*}
			&\Norm{(\dG(\omega) + \ph(v_n) + \norm{v_n}_{\fb_1} + \ii)^{-1} - (\eps(-1)^{\dG(1)} + \dG(\omega) + \ph(v_n) + \norm{v_n}_{\fb_1} + \ii)^{-1}}
			\\
			&\qquad = 
			\Big\|\left(W(\omega^{-1}v_n)^*(\dG(\omega)+\ph(v_n))W(\omega^{-1}v_n)+\norm{v_n}_{\fb_1}+\ii\right)^{-1}\\ 
			& \qquad\qquad  - \left(W(\omega^{-1}v_n)^*(\eps(-1)^{\dG(1)} + \dG(\omega)+\ph(v_n))W(\omega^{-1}v_n)+\norm{v_n}_{\fb_1}+\ii\right)^{-1}\Big\|\\
			&
			\qquad=\Big\|\left(\dG(\omega)+\ii\right)^{-1}  - \left(W(\omega^{-1}v_n)^*(\eps(-1)^{\dG(1)} + \dG(\omega)+\ph(v_n)+\norm{v_n}_{\fb_1})W(\omega^{-1}v_n)+\ii\right)^{-1}\Big\|
			\\
			&\qquad \xrightarrow{n\to\infty} 0,
		\end{align*}
	so
	$H$ is also the strong resolvent limit of $\eps(-1)^{\dG(1)} + \dG(\omega) + \ph(v_n) +\norm{v_n}_{\fb_1}$ for any $\eps\in\IR$.
	
	Thus, using the resolvent identity
	\begin{align*}
		(H+\ii)^{-1}&\Gamma(-1)(H+\ii)^{-1}
		\\
		& = \slim_{n\to\infty}\Big(
		(\dG(\omega) + \ph(v_n) + E_n + \ii)^{-1}
		-
		((-1)^{\dG(1)} + \dG(\omega) + \ph(v_n) + E_n + \ii)^{-1}\Big)
		\\
		&= 0,
	\end{align*}
	which is the desired contradiction, since the first line is injective as composition of injective operators.
\end{proof}
We can now give the
\begin{proof}[\textbf{Proof of \cref{thm:nonren}}]
	Assume towards contradiction that under the given assumptions $\Hreg(S,V_\leq+V_{\sfD,n})+B_n$ converges in the strong resolvent sense.
	
	We first observe that without loss of generality we can set $S=V_\leq=0$.
	This follows similarly to the Steps 1 and 3 in the proof of \cref{mainthm}.
	Hence, let $H_n\coloneqq \Hreg(0,V_{\sfD,n}) + B_n$ and assume that $H$ is the strong resolvent limit of $H_n$.	
	
	Now define the orthogonal projection $P_\psi = \ket\psi\bra{\psi}\otimes \Id_\FS$ on the subspace $\{\psi\otimes \phi|\phi\in\FS\}$, where $\psi$ is defined as in the statement of the theorem, and note that $\ran P_\psi\cong \FS(L^2(\cM,\mu))$, cf. \cref{def:Fock}. We easily see from the assumptions that
	\begin{align*}
		P_\psi H_n = \dG(\omega) + \ph(v_n) + \braket{\psi,B_n\psi} =  H_nP_\psi.
	\end{align*}
	Thus, by \cref{prop:AppenProjectStrongResolven}, $P_\psi H_n\upharpoonright_{\ran P_\psi}$ converges to $P_\psi H\upharpoonright_{\ran P_\psi}$ in the strong resolvent sense.
	
	This now stands in immediate contradiction to \cref{prop:vanHove}.
	To verify the assumptions therein, we only need to observe that
	\begin{align}\label{eq:Bnexp}
		\liminf_{n\to\infty}\big(\!\braket{\psi,B_n\psi}-\norm{v_n}_{\fb_1}^2)>-\infty.
	\end{align}
	The above now follows from our assumed lower bound on $\braket{\psi,B_n\psi}$ combined with the fact that
	\[\inf \sigma(\dG(\omega) + \ph(v)) = -\norm{v}_{\fb_1} \qquad\text{in the case $\HS_\sfs = \IC$, $v\in\fb_1$},\]
	  see for example \cite[\S13]{Arai.2018}, since by inserting trial states of the form $\psi\otimes \phi$ with $\phi\in\sD(\dG(\omega))$, $\norm\phi=1$ they imply the upper bound
	\begin{align*}
		\inf\sigma(\Hreg(S,V_\le + V_{\sfD,n})) \le
		\braket{\psi,S\psi} - \norm{\braket{\psi,V_\le \psi}+v_n}_{\fb_1}
		\le
		\braket{\psi,S\psi} + \norm{\braket{\psi,V_\le \psi}}_{\fb_1} - \norm{v_n}_{\fb_1}
		.
	\end{align*}
	This proves \cref{eq:Bnexp}, thus the desired contradiction and hence
	the operator $H$ can not exist.
\end{proof}

\appendix
\section{Operator Identities for Transformed Operators}
In this appendix, we collect some useful identities for non-unitary transformations of operators, which we could not find in the literature in exactly the same way.

The first statement is the transformation property for adjoints, which we apply in our selfadjointness proof for the IBC operators.
\begin{lem}[Adjoint of Transformed Operators]\label{lem:adjointtrans}
	Let $\cK$ be a Hilbert space and let $C$ be a closed, densely defined operator on $\cK$. Further assume that $A\in\cB(\cK)$ has bounded inverse.
	Then
	\begin{align*}
		(ACA^*)^* = AC^*A^*.
	\end{align*}
\end{lem}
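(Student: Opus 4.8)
The plan is to reduce the identity to two elementary facts about composing a bounded operator with a densely defined one, and then chain them. First I would observe that, since $A\in\cB(\cK)$ has bounded inverse, so does $A^*$, with $(A^*)^{-1}=(A^{-1})^*$; in particular $\sD(ACA^*)=(A^*)^{-1}\sD(C)$ is the image of a dense subspace under a bounded bijection, hence dense, so that $(ACA^*)^*$ is well defined in the first place. (Closedness of $C$ is not actually needed for the identity; it only guarantees that $AC$ and then $ACA^*$ are again closed, closedness being stable under multiplication by boundedly invertible operators on either side.)

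The two facts I would use are: (i) if $B\in\cB(\cK)$ and $T$ is densely defined, then $(BT)^*=T^*B^*$; and (ii) if $B\in\cB(\cK)$ has bounded inverse and $T$ is densely defined, then $(TB)^*=B^*T^*$. For (i), one has $\sD(BT)=\sD(T)$, and $y\in\sD((BT)^*)$ iff $x\mapsto\langle BTx,y\rangle=\langle Tx,B^*y\rangle$ is bounded on $\sD(T)$, i.e.\ iff $B^*y\in\sD(T^*)$; the functional is then $\langle x,T^*B^*y\rangle$. For (ii), note $\sD(TB)=B^{-1}\sD(T)$ is dense, and substituting $x=B^{-1}u$ with $u\in\sD(T)$ turns boundedness of $x\mapsto\langle TBx,y\rangle$ on $\sD(TB)$ into boundedness of $u\mapsto\langle Tu,y\rangle$ on $\sD(T)$ (here the bounded invertibility of $B$ enters, both to keep $\sD(TB)$ dense and to transfer the bound back and forth), which holds iff $y\in\sD(T^*)$; for such $y$ one computes $\langle TBx,y\rangle=\langle Bx,T^*y\rangle=\langle x,B^*T^*y\rangle$.

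Granting (i) and (ii), the lemma follows at once: applying (i) with $T=C$ gives $(AC)^*=C^*A^*$, and applying (ii) with $T=AC$ and $B=A^*$ (bounded with bounded inverse) gives
\[
	(ACA^*)^*=\big((AC)A^*\big)^*=(A^*)^*(AC)^*=A\,C^*A^*,
\]
using $(A^*)^*=A$.

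I expect the only genuinely delicate point to be fact (ii): in general one only has the inclusion $(TB)^*\supseteq B^*T^*$, and without a bounded inverse the domain of $TB$ need not even be dense, so the adjoint on the left may behave badly. It is exactly the invertibility hypothesis on $A$ that upgrades this to an equality and legitimizes the substitution argument; everything else is routine bookkeeping with domains.
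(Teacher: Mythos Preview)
Your proof is correct and follows essentially the same approach as the paper: both arguments rest on the two facts $(BT)^*=T^*B^*$ for bounded $B$ and $(TB)^*=B^*T^*$ for boundedly invertible $B$, the only cosmetic difference being that the paper groups the product as $A(CA^*)$ (citing Schm\"udgen for the first identity and proving the second by hand) whereas you group it as $(AC)A^*$.
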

\begin{proof}
	In view of \cite[Prop.~1.7]{Schmudgen.2012}, we already know
	\begin{align*}
		(ACA^*)^* =(CA^*)^*A^* \supset AC^*A^*.
	\end{align*}
	It thus remains to prove $\sD((CA^*)^*)\subset \sD(AC^*)=\sD(C^*)$.
	To this end let $x\in \sD((CA^*)^*)$ and $y\in \sD(C)$. Then
	\begin{align*}
		\braket{x,Cy} = \braket{x,CA^*(A^*)^{-1}y}=\braket{(CA^*)^*x,(A^*)^{-1}y}.
	\end{align*}
	Hence, $y\mapsto \braket{x,Cy}$ is continuous, by the Cauchy--Schwarz inequality, and thus the statement follows.
\end{proof}
We also apply the following version of the resolvent identity for non-unitary transformations of selfadjoint operators.
\begin{lem}[Resolvent Identity for Transformed Operators]\label{lem:residentity}
	Let $\cK$ be a Hilbert space and let $T$ be selfadjoint on $\cK$. Further assume that $A,B\in\cB(\cK)$ have bounded inverse. Then
	\begin{align*}
		&
		(ATA^*+\ii)^{-1} - (BTB^*+\ii)^{-1}
		\\&
		=
		(ATA^*+\ii)^{-1}(B-A)TB^*(BTB^*+\ii)^{-1} + \left(TA^*(ATA^*+\ii)^{-1}\right)^*(B^*-A^*)(BTB^*+\ii)^{-1}
	\end{align*}
\end{lem}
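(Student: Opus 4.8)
The plan is to view the asserted formula as a bounded-operator version of the second resolvent identity $R_A-R_B=R_A(H_B-H_A)R_B$ for $H_A\coloneqq ATA^*$, $H_B\coloneqq BTB^*$, $R_A\coloneqq(H_A+\ii)^{-1}$, $R_B\coloneqq(H_B+\ii)^{-1}$, organized so that every operator occurring is bounded and everywhere defined. First I would record the preliminaries. By \cref{lem:adjointtrans} with $C=T$ one has $(ATA^*)^*=ATA^*$ and $(BTB^*)^*=BTB^*$, so $H_A,H_B$ are selfadjoint, $\pm\ii$ lie in their resolvent sets, $R_A,R_B\in\cB(\cK)$, and $R_A^*=(ATA^*-\ii)^{-1}$; moreover $\sD(ATA^*)=(A^*)^{-1}\sD(T)$ and similarly for $H_B$. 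Next, $H_BR_B=1-\ii R_B$ is bounded and $B^{-1}\in\cB(\cK)$, so $TB^*R_B=B^{-1}(H_BR_B)$ is bounded; in the same way $TA^*(ATA^*-\ii)^{-1}$ is bounded, hence so is its adjoint. Consequently all three operators in the claimed identity are in $\cB(\cK)$, and it suffices to prove equality of the associated sesquilinear forms, i.e. to test against arbitrary $\phi,\psi\in\cK$.

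For the core computation I would set $\eta\coloneqq R_B\phi\in\sD(BTB^*)$ and $\xi\coloneqq R_A^*\psi=(ATA^*-\ii)^{-1}\psi\in\sD(ATA^*)$, so that $B^*\eta\in\sD(T)$, $A^*\xi\in\sD(T)$, $(BTB^*+\ii)\eta=\phi$, $(ATA^*-\ii)\xi=\psi$. Then $\braket{\psi,(R_A-R_B)\phi}=\braket{\xi,(BTB^*+\ii)\eta}-\braket{(ATA^*-\ii)\xi,\eta}$; the two contributions $\pm\ii\braket{\xi,\eta}$ cancel, and moving $T$ across by its selfadjointness (legitimate since $B^*\eta$ and $A^*\xi$ lie in $\sD(T)$) gives
\[
	\braket{\psi,(R_A-R_B)\phi}=\braket{B^*\xi,TB^*\eta}-\braket{TA^*\xi,A^*\eta}.
\]
On the other hand, using $\braket{A^*\xi,TB^*\eta}=\braket{TA^*\xi,B^*\eta}$ (both arguments in $\sD(T)$) one has the purely algebraic rearrangement $\braket{B^*\xi,TB^*\eta}-\braket{TA^*\xi,A^*\eta}=\braket{(B^*-A^*)\xi,TB^*\eta}+\braket{TA^*\xi,(B^*-A^*)\eta}$. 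The first summand equals $\braket{\xi,(B-A)TB^*\eta}=\braket{\psi,R_A(B-A)TB^*R_B\,\phi}$, i.e. the first term of the identity; the second summand equals $\braket{TA^*(ATA^*-\ii)^{-1}\psi,(B^*-A^*)R_B\,\phi}=\braket{\psi,\big(TA^*(ATA^*-\ii)^{-1}\big)^*(B^*-A^*)R_B\,\phi}$, i.e. the second term. Since $\phi,\psi$ are arbitrary this proves the identity; note that the precise shape of the second factor — in particular which of $\pm\ii$ sits in the inner resolvent — is forced by this bookkeeping through $R_A^*=(ATA^*-\ii)^{-1}$, and should be checked carefully.

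The one genuine obstacle is that the naive argument — writing $R_A-R_B=R_A(BTB^*-ATA^*)R_B$ and then splitting $BTB^*-ATA^*=(B-A)TB^*+AT(B^*-A^*)$ — is not literally valid: $R_B$ does not map into $\sD(ATA^*)$ (the domains $(A^*)^{-1}\sD(T)$ and $(B^*)^{-1}\sD(T)$ differ), and after the split the factor $AT$ would be applied to $(B^*-A^*)\eta\notin\sD(T)$ in general. The form/vector computation above is designed precisely to never form these expressions: the only "dangerous" object, the unbounded operator "$R_A\,AT$", is met solely through its bounded closure $\big(TA^*(ATA^*-\ii)^{-1}\big)^*$, which is why \cref{lem:adjointtrans} (ensuring $ATA^*$ is genuinely selfadjoint, not merely symmetric, so that $R_A^*$ has the stated form) is invoked at the very start.
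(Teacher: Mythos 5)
Your proof is correct, and it takes a genuinely different route from the paper's. The paper replaces $T$ by a sequence of bounded operators $T_n$ converging to $T$ strongly on $\sD(T)$, so that the naive decomposition $BTB^*-ATA^*=(B-A)TB^*+AT(B^*-A^*)$ and the usual second resolvent identity apply without any domain issues; it then passes to the limit using strong resolvent convergence of $AT_nA^*\to ATA^*$ (likewise with $B$) together with the uniform bound $\norm{T_nA^*(AT_nA^*+\ii)^{-1}}\le\norm{A^{-1}}$. Your argument instead tests the identity against arbitrary $\phi,\psi\in\cK$ and moves $T$ across the pairing by selfadjointness, keeping both arguments of $T$ in $\sD(T)$ at every step; this is more direct and avoids the approximation procedure entirely, at the modest cost of explicit domain bookkeeping. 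Both proofs hinge on the same observation that the problematic factor $R_AAT$ is only met through its bounded closure.

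Your closing caveat about which of $\pm\ii$ sits in the inner resolvent is well placed: your computation in fact reveals a sign typo in the printed statement. Since $R_A^*=(ATA^*-\ii)^{-1}$, the bounded closure of $R_AAT$ is $\bigl(TA^*(ATA^*-\ii)^{-1}\bigr)^*$, not $\bigl(TA^*(ATA^*+\ii)^{-1}\bigr)^*$, so the correct identity reads
\begin{align*}
 (ATA^*+\ii)^{-1}-(BTB^*+\ii)^{-1} &= (ATA^*+\ii)^{-1}(B-A)TB^*(BTB^*+\ii)^{-1} \\
 &\quad + \bigl(TA^*(ATA^*-\ii)^{-1}\bigr)^*(B^*-A^*)(BTB^*+\ii)^{-1}.
\end{align*}
A one-line check with $\cK=\IC$, $T=A=1$, $B=2$ confirms that the printed $+\ii$ version fails. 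This is also consistent with how the paper actually applies the lemma: in the proof of \cref{prop:HIBC}, the inner resolvent appearing in the corresponding second term carries a $-\ii$, in agreement with your derivation. So your argument is sound and moreover catches a typo in the lemma's statement.
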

\begin{proof}
	Let $(T_n)_{n\in\IN}\in\cB(\cK)$ satisfy $T_n\psi\to T\psi$, $\psi\in\sD(T)$ (existence can, e.g., be deduced from the spectral theorem).
	Then the claim with $T=T_n$ follows from the usual resolvent identity. 
	Further, our convergence assumption implies $(AT_nA^*+\ii)^{-1}\to (ATA^*+\ii)^{-1}$ in the strong operator topology, cf. \cite[Theorem VIII.25]{ReedSimon.1972}.
	Together with the uniform bound $\norm{T_nA^*(AT_nA^*+\ii)^{-1}}\le \norm{A^{-1}}$ and the same observations with $A$ replaced by $B$, both sides of the identity converge as $n\to\infty$.
\end{proof}

\section{Strong Resolvent Convergence and Reducing Subspaces}
\label{app:redsubspace}
In this \lcnamecref{app:redsubspace}, we study the invariance of reducing subspaces of selfadjoint operators under strong resolvent convergence. The material seems to be standard, but is not presented in any textbook in this form, to the authors' knowledge.

Let us first recall the meaning of projections for reducing subspaces.
\begin{lem}
	Let $A$ be a selfadjoint on a Hilbert space $\cK$ and $U\subset\cK$ a (non trivial) closed subspace such that the orthogonal projection $P_U$ on $U$ satisfies $P_UA=AP_U$. Then the restriction $\tilde{A}:=A\!\upharpoonright_U:\sD(A)\cap U\to U$ is selfadjoint (on $U$).
\end{lem}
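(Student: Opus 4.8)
The plan is to first pin down the meaning of the commutation hypothesis $P_UA = AP_U$ for the (possibly unbounded) operator $A$: it is to be read as $P_U\sD(A)\subset\sD(A)$ together with $AP_Ux = P_UAx$ for every $x\in\sD(A)$ (equivalently, $P_U$ commutes with each resolvent of $A$). With this reading, I would proceed in two stages: first verify that $\tilde A$ is a densely defined symmetric operator in the Hilbert space $U$, and then establish selfadjointness by the standard range criterion, which is the only step carrying any content.

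For the first stage, all the checks are routine. If $x\in\sD(A)\cap U$, then $P_Ux = x$, so $Ax = AP_Ux = P_UAx\in U$; hence $A$ maps $\sD(\tilde A)=\sD(A)\cap U$ into $U$ and $\tilde A$ is well defined as an operator in $U$. Density of $\sD(A)\cap U$ in $U$ follows by taking $y\in U$, approximating it by $x_n\in\sD(A)$, and noting that $P_Ux_n\in\sD(A)\cap U$ with $P_Ux_n\to P_Uy = y$. Symmetry of $\tilde A$ is inherited directly from that of $A$, since $\braket{\tilde Ax,y}=\braket{Ax,y}=\braket{x,Ay}=\braket{x,\tilde Ay}$ for $x,y\in\sD(\tilde A)$.

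The substantive step is selfadjointness, for which I would show $\ran(\tilde A\pm\ii)=U$; together with symmetry this is equivalent to $\tilde A$ being selfadjoint. Given $y\in U$ and $z\in\{\ii,-\ii\}$, selfadjointness of $A$ supplies $x\in\sD(A)$ with $(A-z)x=y$; applying $P_U$ and using the commutation relation yields $(A-z)P_Ux = P_U(A-z)x = P_Uy = y$ with $P_Ux\in\sD(A)\cap U = \sD(\tilde A)$, so $y\in\ran(\tilde A-z)$. Hence $\ran(\tilde A\pm\ii)=U$, and $\tilde A$ is selfadjoint on $U$. An equivalent route is to note that $P_U$ commutes with every resolvent $(A-z)^{-1}$, so the latter leaves $U$ invariant and its restriction to $U$ is a bounded operator that one checks equals $(\tilde A-z)^{-1}$; this again exhibits $\tilde A-z$ as a bijection of $\sD(\tilde A)$ onto $U$ with bounded inverse for all non-real $z$. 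I do not expect a genuine obstacle here: the only point requiring care is to keep the unbounded-operator meaning of $P_UA=AP_U$ straight, in particular the invariance $P_U\sD(A)\subset\sD(A)$, which is precisely what makes both the density argument and the range argument work.
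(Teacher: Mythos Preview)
Your proposal is correct and follows essentially the same route as the paper: after noting density and symmetry of $\tilde A$, both you and the paper establish selfadjointness via the range criterion by lifting $y\in U$ to a preimage under $A\pm\ii$ and projecting back with $P_U$. The only cosmetic difference is that the paper also remarks $\tilde A$ is closed, which is not needed once one uses the criterion $\ran(\tilde A\pm\ii)=U$ directly.
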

\begin{proof} Using that $P_UA=AP_U$ it is easy to see that $\sD(\tilde{A})=\sD(A)\cap U=P_U\sD(A)$ is dense in $U$ and that $\tilde{A}$ is a closed symmetric operator. Therefore, to prove selfadjointness it is enough to see that $\tilde{A}\pm i$ are surjective. Take $\psi\in U$, since $A$ is selfadjoint, the operators $A\pm i $ are surjective and we can find $\varphi_\pm\in\sD(A)$ such that 
	\begin{align*}
			(A\pm i)\varphi_{\pm}=\psi\,.
		\end{align*}
		This implies that
		\begin{align*}
			(\tilde{A}\pm i)P_U\varphi_\pm=(A\pm i)P_U\varphi_\pm=P_U(A\pm i)\varphi_\pm=P_U\psi=\psi\,,
		\end{align*}
		which gives that $\tilde{A}\pm i:\sD(\tilde{A})\longrightarrow U$ is surjective. 
\end{proof}
We now prove that a reducing subspace of a sequence remains reducing for the strong resolvent limit.
\begin{lem}
	Let $A_n$ be a sequence of selfadjoint operators on a Hilbert space $\cK$ converging to a selfadjoint operator $A$ in the strong resolvent sense.
	We further assume that there exists $U\subset\cK$ a (non trivial) closed subspace such that $P_UA_n=A_nP_U$.
	Then $AP_U=P_UA$.
\end{lem}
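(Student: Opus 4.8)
The plan is to reduce the claimed commutation relation to the corresponding statement for resolvents, where the strong resolvent convergence can be exploited directly.

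First I would record the standard equivalence: for a selfadjoint operator $T$ on $\cK$ and an orthogonal projection $P$, one has $PT=TP$ (meaning $P\sD(T)\subset\sD(T)$ and $PT\psi=TP\psi$ for all $\psi\in\sD(T)$) if and only if $P(T+\ii)^{-1}=(T+\ii)^{-1}P$. The forward direction is immediate: given $\psi\in\cK$, set $\phi\coloneqq(T+\ii)^{-1}\psi\in\sD(T)$; then $P\phi\in\sD(T)$ and $(T+\ii)P\phi=P(T+\ii)\phi=P\psi$, so $P(T+\ii)^{-1}\psi=(T+\ii)^{-1}P\psi$. For the converse, writing $R\coloneqq(T+\ii)^{-1}$ and assuming $PR=RP$: for $\phi\in\sD(T)$ put $\psi\coloneqq(T+\ii)\phi$, so that $\phi=R\psi$; then $P\phi=PR\psi=RP\psi\in\ran R=\sD(T)$ and $(T+\ii)P\phi=P\psi=P(T+\ii)\phi$, hence $TP\phi=PT\phi$.

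Applying the forward direction to each $T=A_n$, the hypothesis $P_UA_n=A_nP_U$ yields $P_UR_n=R_nP_U$ with $R_n\coloneqq(A_n+\ii)^{-1}$. I would then fix $\psi\in\cK$ and pass to the limit: by strong resolvent convergence $R_n\psi\to R\psi$ and $R_nP_U\psi\to RP_U\psi$ with $R\coloneqq(A+\ii)^{-1}$, while boundedness of $P_U$ gives $P_UR_n\psi\to P_UR\psi$. Taking $n\to\infty$ in the identity $P_UR_n\psi=R_nP_U\psi$ gives $P_UR\psi=RP_U\psi$, and since $\psi\in\cK$ was arbitrary, $P_UR=RP_U$. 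The converse direction of the equivalence then gives $P_UA=AP_U$, as claimed.

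I do not expect a genuine obstacle here; the only point requiring a little care is the domain bookkeeping when translating resolvent commutation back into operator commutation, namely the observation $P_UR\psi\in\ran R=\sD(A)$, which is precisely what yields $P_U\sD(A)\subset\sD(A)$. I would also note that only the commutation of $P_U$ with the resolvents $R_n$ together with the strong resolvent convergence is used, so the conclusion holds regardless of any further relation among the $A_n$.
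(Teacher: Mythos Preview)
Your argument is correct, and it takes a genuinely different route from the paper's proof. You work entirely at the level of resolvents: you first establish the equivalence, for a selfadjoint $T$, between $PT=TP$ and $P(T+\ii)^{-1}=(T+\ii)^{-1}P$, then pass to the strong limit in the bounded-operator identity $P_U R_n = R_n P_U$ and translate back. The paper instead uses the equivalence of strong resolvent convergence with strong graph convergence: given $\psi\in\sD(A)$ it picks $\psi_n\in\sD(A_n)$ with $\psi_n\to\psi$, $A_n\psi_n\to A\psi$, shows via a Pythagorean decomposition that $(A_nP_U\psi_n)$ is Cauchy, and concludes $P_U\psi\in\sD(A)$ with $AP_U\psi=P_UA\psi$ from the graph limit. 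Your approach is shorter and avoids both the appeal to graph convergence and the Cauchy-sequence argument; the paper's approach, on the other hand, works directly with domain elements and does not need to record the resolvent--projection equivalence as a separate lemma. Either way the content is the same, and your domain bookkeeping in the converse direction is handled cleanly.
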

\begin{proof}
	Let $\psi\in\sD(A)$. Then, since strong resolvent convergence is equivalent to strong graph convergence, cf. \cite[Thm.~VIII.26]{ReedSimon.1972}, there exists $\psi_n\in\sD(A_n)$ such that
	\begin{align*}
		\psi_n\longrightarrow \psi\quad\text{and}\quad A_n\psi_n\longrightarrow A\psi\,.
	\end{align*}
	We now notice that $A_nP_U=P_UA_n$ in particular implies that $P_U\sD(A_n)\subset\sD(A_n)$. Therefore, $P_U\psi_n\in\sD(A_n)$, and since $P_U$ is continuous, we have that 
	\begin{align*}
		P_U\psi_n\longrightarrow P_U\psi\,.
	\end{align*}
	We now prove that the sequence $A_nP_U\psi_n$ is convergent. Indeed, by writing
	\begin{align*}
		\|A_n\psi_n-A_m\psi_m\|^2=\|P_U(A_n\psi_m-A_m\psi_m)\|^2+\|P_{U^\perp}(A_n\psi_m-A_m\psi_m)\|^2\,,
	\end{align*}
	the assumption implies that
	\begin{align*}
		\|A_nP_U\psi_n-A_mP_U\psi_m\|=\|P_U(A_n\psi_n-A_m\psi_m)\|\leq\|A_n\psi_n-A_m\psi_m\|\,,
	\end{align*}
	so the convergence of $(A_n\psi_n)$ implies that $(A_nP_U\psi_n)$ is Cauchy and thus convergent. We have obtained that
	\begin{align}
		P_U\psi_n\longrightarrow P_U\psi\,,\quad\text{with }\,P_U\psi_n\in\sD(A_n)\quad\text{and}\quad A_n(P_U\psi_n)\longrightarrow \xi\,.
	\end{align}
	This implies, again by the strong graph convergence, that $P_U\psi\in\sD(A)$ and $\xi=AP_U\psi$\,. Finally, the continuity of $P_U$ gives that
	\begin{align*}
		&P_UA\psi=\lim_nP_UA_n\psi_n=\lim_n A_nP_U\psi_n= AP_U\psi\,. \qedhere
	\end{align*}
\end{proof}
It remains to observe that strong resolvent convergence carries over to the reduced operators.
\begin{prop}\label{prop:AppenProjectStrongResolven}
	Let $A_n$ be a sequence of self-adjoint operators converging in strong resolvent sense to a self-adjoint operator $A$ on a Hilbert space $\cK$. We further assume that there exists $U\subset\cK$ a (non trivial) closed subspace such that $P_UA_n=A_nP_U$. Then the selfadjoint operators $A_n\!\upharpoonright_U$ converge to the selfadjoint operator $A\!\upharpoonright_U$ in the strong resolvent sense.
	\begin{proof}
		We first notice that by the selfadjointness is obvious, by the previous lemmas. Therefore, proving strong resolvent convergence is the same as proving strong graph convergence.
		
		Let $\tilde{\psi}\in \sD(A)\cap U$, then there exists $\psi\in \sD(A)$ such that $\tilde{\psi}=P_U\psi$. We also know that there exists a sequence $\psi_n\in\sD(A_n)$ such that $A_n\psi_n\longrightarrow A\psi$. Therefore we have that
		\begin{align*}
			\tilde{\psi_n}:=P_U\psi_n\longrightarrow P_U\psi=\tilde{\psi}\qquad\text{and}\qquad A_n\tilde\psi_n=A_nP_U\psi_n=P_UA_n\psi_n\longrightarrow P_UA\psi=AP_U\psi=\tilde{A}\tilde{\psi}\,.
		\end{align*}
		This implies that the operator $A\!\upharpoonright_U$ is the strong graph limit of $A_n\!\upharpoonright_U$
		and thus the statement.
	\end{proof}
\end{prop}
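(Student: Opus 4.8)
The plan is to reduce the statement to the strong-graph-convergence characterisation of strong resolvent convergence, as in \cite[Thm.~VIII.26]{ReedSimon.1972}. First I would record that the objects in the statement make sense: the hypothesis $P_UA_n=A_nP_U$ together with the first of the two preceding lemmas of this appendix makes each $A_n\!\upharpoonright_U$ self-adjoint on $U$, and the second preceding lemma, applied to the sequence $(A_n)$, yields $P_UA=AP_U$, so that $A\!\upharpoonright_U$ is self-adjoint as well. I would also note once and for all that $P_UA_n=A_nP_U$ forces $P_U\sD(A_n)\subset\sD(A_n)$, hence $\sD(A_n\!\upharpoonright_U)=\sD(A_n)\cap U=P_U\sD(A_n)$, and similarly $\sD(A\!\upharpoonright_U)=P_U\sD(A)$.

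The core of the argument is to exhibit $A\!\upharpoonright_U$ as the strong graph limit of $(A_n\!\upharpoonright_U)$. Fix $\tilde\psi\in\sD(A)\cap U$. Strong graph convergence of $A_n$ to $A$ provides $\psi_n\in\sD(A_n)$ with $\psi_n\to\tilde\psi$ and $A_n\psi_n\to A\tilde\psi$; set $\tilde\psi_n:=P_U\psi_n$, which lies in $\sD(A_n)\cap U$ by the domain remark above. Then continuity of $P_U$ and $\tilde\psi\in U$ give $\tilde\psi_n\to P_U\tilde\psi=\tilde\psi$, while the intertwining relation gives $A_n\tilde\psi_n=A_nP_U\psi_n=P_UA_n\psi_n\to P_UA\tilde\psi=AP_U\tilde\psi=(A\!\upharpoonright_U)\tilde\psi$. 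This is precisely strong graph convergence of the restrictions, hence the asserted strong resolvent convergence.

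I would also mention the shorter alternative: since $P_UA=AP_U$ it commutes with $(A\pm\ii)^{-1}$, and $(A\!\upharpoonright_U\pm\ii)^{-1}$ is exactly the restriction of $(A\pm\ii)^{-1}$ to the invariant subspace $U$ (likewise for the $A_n$); the claim then follows by restricting the strong convergence $(A_n\pm\ii)^{-1}\to(A\pm\ii)^{-1}$ to vectors of $U$. I would keep the graph-convergence version as the main proof since it mirrors the two preceding lemmas, and relegate this to a one-line remark.

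I do not expect a real obstacle. The only points requiring care are invoking (rather than re-deriving) that $U$ still reduces the limit operator $A$, which is the content of the immediately preceding lemma, and the small bookkeeping that the approximating vectors $\tilde\psi_n$ stay in $\sD(A_n)\cap U$, for which $P_U\sD(A_n)\subset\sD(A_n)$ is exactly what is needed.
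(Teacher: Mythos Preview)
Your proposal is correct and follows essentially the same approach as the paper: both reduce to the strong-graph-convergence characterisation, invoke the preceding lemmas for self-adjointness of the restrictions, and project an approximating sequence via $P_U$. Your version is marginally cleaner in that you apply graph convergence directly to $\tilde\psi\in\sD(A)$ rather than writing $\tilde\psi=P_U\psi$ for an auxiliary $\psi$, and the resolvent-restriction remark you add is a nice bonus not present in the paper.
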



\bibliographystyle{halpha-abbrv}
\bibliography{00lit}

\end{document}